\definecolor{darkred}  {rgb}{0.5,0,0}
\definecolor{darkblue} {rgb}{0,0,0.5}
\definecolor{darkgreen}{rgb}{0,0.5,0}
\newcommand{\be}{\begin{equation}}
\newcommand{\ee}{\end{equation}}
\newcommand{\ba}{\begin{array}}
\newcommand{\ea}{\end{array}}
\newcommand{\bea}{\begin{eqnarray}}
\newcommand{\eea}{\end{eqnarray}}
\newcommand{\calH}{{\cal H }}
\newcommand{\la}{\langle}
\newcommand{\ra}{\rangle}
\newcommand{\tr}{\text{Tr}}
\renewcommand{\EE}{\mathbb{E}}
\newtheorem{dfn}{Definition}
\newtheorem{claim}{Claim}
\newtheorem{lemma}{Lemma}
\newtheorem{fact}{Fact}
\newtheorem{theorem}{Theorem}
\newtheorem*{theorem*}{Theorem}
\newtheorem{subtheorem}{Theorem}[theorem]
\newcommand{\footremember}[2]{%
    \footnote{#2}
    \newcounter{#1}
    \setcounter{#1}{\value{footnote}}%
}
\newcommand{\footrecall}[1]{%
    \footnotemark[\value{#1}]%
}
\title{Classical simulation of peaked shallow quantum circuits}
\author{Sergey Bravyi\footremember{ibm}{IBM Quantum, IBM T.J. Watson Research Center}
\and
David Gosset\footremember{iqc}{Institute for Quantum Computing, University of Waterloo, Canada}\footremember{co}{Department of Combinatorics and Optimization, University of Waterloo, Canada}\footremember{PI}{Perimeter Institute for Theoretical Physics, Waterloo, Canada}
\and Yinchen Liu\footrecall{iqc} \footrecall{co} \footrecall{PI}
}
\date{}
\begin{document}
\maketitle
\begin{abstract}

An $n$-qubit quantum circuit is said to be \textit{peaked} if it has an output probability that is at least inverse-polynomially large as a function of $n$. 
We describe a classical algorithm with quasipolynomial runtime $n^{O(\log{n})}$ that approximately samples from the output distribution of a peaked constant-depth circuit. We give even faster algorithms for circuits composed of nearest-neighbor gates on a $D$-dimensional grid of qubits, with polynomial runtime $n^{O(1)}$ if $D=2$ and almost-polynomial runtime $n^{O(\log{\log{n}})}$ for $D>2$.  Our sampling algorithms can be used to estimate output probabilities of shallow circuits to within a given inverse-polynomial additive error, improving previously known methods. As a simple application, we obtain a quasipolynomial algorithm to estimate the magnitude of the expected value of any Pauli observable in the output state of a shallow circuit (which may or may not be peaked). This is a dramatic improvement over the prior state-of-the-art algorithm which had an exponential scaling in $\sqrt{n}$.

\end{abstract}
\section{Introduction}

Whether a real-world quantum computer outperforms classical computers is a delicate question. Can we solve an instance of a problem or perform a task for which there is some \textit{asymptotic} evidence of quantum speedup? Can we draw instances from a classically hard distribution? Is the problem size large enough that existing classical computers cannot solve it? Can we verify that the output of the quantum computer is correct?  The situation is further complicated by the fact that near-term quantum computers are highly constrained:  their architecture or gate-set may be limited in some way, and they are noisy, so performance degrades as the circuit size grows.

But even if we focus on idealized, simple, restricted models of computation, we are faced with navigating a suprisingly treacherous boundary between quantum and classical computing. For instance, consider the case of shallow (i.e., constant-depth) quantum circuits with nearest-neighbor gates in a two-dimensional architecture. In this setting there is good complexity-theoretic evidence for a certain kind of quantum advantage: sampling from the output distribution of shallow 2D circuits, even approximately, is expected to be intractable for classical machines in the worst case \cite{bermejo2018architectures}.  But the extent to which this is generic is unclear: shallow 2D circuits where the gates are chosen at random admit an empirically successful classical simulation algorithm based on tensor network contraction, which is conjectured to be efficient for circuit depths below a critical value \cite{napp2022efficient}. Moreover, unfortunately, there is no known efficient method for verifying that a collection of samples have been drawn correctly from the output distribution. And this sampling task is somewhat unsatisfying---what problem does it solve? As a more practical alternative, one might instead try to implement a variational quantum algorithm, which have been proposed as a flexible algorithmic framework for solving real-world quantum chemistry problems and classical optimization problems \cite{peruzzo2014variational,kandala2017hardware}. In these algorithms the job of the quantum computer is to repeatedly estimate expected values of simple (tensor product) observables at the output of a quantum circuit.  Unfortunately, this task is classically easy for 2D shallow circuits: mean values can be approximated efficiently on a classical computer \cite{bravyi2021classical}. Finally, we note that 2D shallow circuits can solve a linear algebra problem that cannot be solved by their classical counterparts, or even the slightly broader family of general constant-depth classical circuits \cite{bravyi2018quantum}. However, this is not a dramatic quantum speedup---the problem can be solved in polynomial time by classical computers.

This example illustrates the rich and varied complexity-theoretic landscape that emerges from a simple theoretical model of small quantum computers. To navigate this landscape---e.g., to identify promising candidates for quantum speedups---it is necessary to refine both our understanding of quantum algorithms as well as classical simulability of quantum computers. Towards this end, here we focus on shallow quantum circuits and describe new classical simulation algorithms that improve upon prior work in several ways.

We are mainly interested in classical simulation algorithms capable of sampling the output distribution 
or approximating the entire output state of a quantum circuit 
with a small (inverse polynomial in $n$) error.
However, to set the stage for our results, let us first discuss a simpler task ---  \textit{output probability estimation}.
That is, given an $n$-qubit constant-depth quantum circuit $U$, a desired precision $\epsilon=\Omega{(n^{-c})}$  (where $c$ is a constant) and a binary string $x\in \{0,1\}^n$, we aim to output an estimate $p$ satisfying 
\begin{equation}
|p-|\langle x|U|0^n\rangle|^2|\leq \epsilon.  
\label{eq:invpolyerror}
\end{equation} 
Ref. \cite{bravyi2021classical} showed that this problem admits a polynomial-time classical algorithm if the gates of $U$ act between nearest-neighbors in a two-dimensional array of qubits. This was extended to a quasipolynomial algorithm for three-dimensional circuits in Ref.\cite{coudron} and for $D$-dimensional circuits for $D=O(1)$ in Refs.~\cite{coudron, coudron2}. For general shallow circuits lacking geometric locality the best previously known algorithm had an exponential scaling in $\sqrt{n}$ \cite{bravyi2021classical}.
As a simple consequence of our main technical results (described below), we obtain improved classical algorithms
for the above problem summarized in Table \ref{table:probabilities}. 
Notably, we obtain a quasipolynomial classical algorithm for output probability estimation for general shallow quantum circuits, an exponential improvement in runtime over prior work. We also improve the asymptotic scaling of the best prior algorithm for $D$-dimensional shallow circuits from Refs.~\cite{coudron, coudron2}. For two-dimensional circuits we present a new algorithm that, like prior work \cite{bravyi2021classical}, has a polynomial scaling.  
However, as we now explain, the algorithms presented here go beyond prior work in that they are capable of solving a more general problem 
than output probability estimation.

\renewcommand{\arraystretch}{2} 

\begin{figure}
\begin{center}
\begin{tabular}{ |c|c|c| } 
\hline
& Prior work & Our results\\
 \hline
$2$-dimensional  & $\mathrm{poly}(n)$\cite{bravyi2021classical} & $\mathrm{poly}(n)$ \\ 
\hline
 $D$-dimensional, $D=O(1)$ & $n^{O(\mathrm{poly}(\log{n}))}$ \cite{coudron, coudron2} & $n^{O(\log(\log{n}))}$   \\ 
\hline
 General & $e^{O(\sqrt{n\log{n}})}$\cite{bravyi2021classical} &   $n^{O(\log{n})}$ \\ 
 \hline
\end{tabular}
\caption{Runtime of classical algorithms for approximating output probabilities of $n$-qubit shallow quantum circuits. \label{table:probabilities}In contrast with prior work, the algorithms presented here solve the more general problem of sampling from 
the output distribution of peaked shallow circuits to within a given error inverse polynomial in $n$.}
\end{center}
\end{figure}

To describe our results in more detail, note that an estimate of the form Eq.~\eqref{eq:invpolyerror} is only informative if 
at least one output probability of the circuit $U$ is inverse polynomially large as a function of the number of qubits (otherwise, $p=0$ would be a suitable estimate for any output $x$). Quantum circuits which have inverse polynomially large output probabilities are said to be \textit{peaked}. \ Aaronson has suggested that peaked quantum circuits might be useful in charting a ``feasible route to near-term quantum supremacy" \cite{aaronson2022much}. In particular, if one runs a peaked quantum circuit on a quantum computer and samples from the output distribution polynomially many times, there is a reasonable chance of obtaining an outcome $x$ on which it is peaked. If we know $x$ beforehand (perhaps due to the way the circuit was generated), then this could be useful as a way to partially verify the quantum computation. 

For a nonnegative integer $a$, let us say that an $n$-qubit unitary $U$ is $a$-peaked if 
\begin{equation}
\max_{x\in \{0,1\}^n} |\langle x|U|0^n\rangle|^2 \geq n^{-a}.
\label{eq:peakeddef}
\end{equation}
 We say a circuit family $\{U_n\}$ is peaked if there is some $a\in \mathbb{Z}_{\geq 0}$ such that $U_n$ is $a$-peaked for all $n\geq 0$.

Our first result is a quasipolynomial simulation algorithm for peaked shallow circuits.  Let $c>0$ be an arbitrary constant.
\begin{theorem}[\textbf{Output distribution of peaked shallow circuits}]
There exists a classical algorithm which takes as input a positive integer $a=O(1)$, an $n$-qubit constant-depth circuit $U$, and a precision parameter $\epsilon=\Omega(n^{-c})$. If the algorithm succeeds then it outputs a sample from a probability distribution $P'$ over $n$-bit strings such that $\|P'-P\|_1\leq \epsilon$ where $P(x)=|\langle x|U|0^n\rangle|^2$ is the output distribution of the circuit; otherwise it outputs an error flag. The algorithm is guaranteed to succeed if $U$ is $a$-peaked. The runtime of the algorithm is $n^{O(\log{n})}$.
\label{thm:main1}
\end{theorem}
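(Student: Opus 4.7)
The plan is to reduce sampling from $P$ to a sequence of marginal estimations via the chain rule, and then to develop a quasipolynomial-time subroutine for estimating prefix marginals of peaked shallow circuits. Specifically, I would sample $x_1, \ldots, x_n$ sequentially: at step $k$, having already produced $x_1, \ldots, x_{k-1}$, I draw $x_k$ from an estimate of the conditional distribution $P(\cdot \mid x_1, \ldots, x_{k-1})$, obtained from two calls to a subroutine that estimates
\[
P(x_1,\ldots,x_k) \;=\; \langle 0^n|\,U^\dagger\, \Pi_{x_{[k]}}\, U|0^n\rangle, \qquad \Pi_{x_{[k]}} \;=\; |x_1\rangle\langle x_1|\otimes\cdots\otimes|x_k\rangle\langle x_k|\otimes I_{\{k+1,\ldots,n\}},
\]
to additive precision $\epsilon' = \Theta(\epsilon n^{-a-1})$. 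A standard telescoping argument shows that if each conditional probability is accurate to this level, then the sampled distribution $P'$ satisfies $\|P'-P\|_1 \le \epsilon$, provided the marginals we condition on do not drop below $\sim \epsilon/n$; and peakedness of $U$ is exactly what guarantees that at least one such branch of prefixes exists (the one approaching the peak string $x^\ast$), so the procedure succeeds.

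The main technical obstacle is the prefix-marginal subroutine itself. Estimating $|\langle x|U|0^n\rangle|^2$ for a \emph{full} string $x$ is the output-probability-estimation problem summarized in Table~\ref{table:probabilities}; I would expect our proof to invoke a strengthened version of that result which handles \emph{partial} strings and exploits the peakedness hypothesis. The reason peakedness should help is that for a peaked $U$ with peak $x^\ast$, the post-measurement state $\Pi_{x_{[k]}} U|0^n\rangle / \|\Pi_{x_{[k]}}U|0^n\rangle\|$ remains peaked whenever its normalization is non-negligible, so the amplitude is dominated by the contribution lining up with the appropriate suffix of $x^\ast$, together with a $1/\mathrm{poly}(n)$-tail of subleading contributions. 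Concretely, I would pursue a recursive divide-and-conquer over the qubits: split the output register into two halves, observe that by normalization only polynomially many values of each half are ``heavy'' enough to contribute above the truncation threshold $\epsilon'$, enumerate those, and recurse. The recursion tree then has depth $O(\log n)$ and branching factor $\mathrm{poly}(n)$, giving total size $n^{O(\log n)}$. Each node of the tree contracts a constant-depth tensor network on a bounded-size lightcone, which is polynomial.

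The hard step is controlling this recursion rigorously. One must argue that the number of heavy suffixes of $x^\ast$ at each level remains polynomial (the peakedness constant $a$ enters the threshold), that truncated tails do not accumulate beyond $\epsilon'$ across $O(\log n)$ levels, and that a suitable lightcone/block decomposition allows each recursive call to pass down a succinct description of the ``environment'' state rather than redoing work. This is where I expect the paper to carry out the bulk of the technical analysis; the $n^{O(\log n)}$ scaling then emerges naturally from the recursion depth.

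Finally, I would combine these pieces and verify the success/failure specification. The failure flag is raised whenever the marginal estimator returns probabilities inconsistent with a normalized distribution (for instance, negative values beyond the tolerance $\epsilon'$, or a candidate prefix whose estimated marginal falls below $n^{-a}$ despite the peakedness promise); when $U$ is genuinely $a$-peaked this never occurs, because the prefixes of $x^\ast$ always retain marginal at least $n^{-a}$, so the subroutine is always called on a well-posed instance. The total runtime is $n \cdot n^{O(\log n)} = n^{O(\log n)}$, as claimed.
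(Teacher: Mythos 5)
Your proposal does not follow the paper's approach, and the central technical step is left as a gap that the paper fills in a very different (and simpler) way. The paper's key insight is a \emph{Hamming weight concentration} result (Lemma~\ref{lemma:concentration}): using the fact that each output bit of a shallow circuit is independent of all but $O(1)$ others, together with Janson's large deviation bound for partially dependent random variables, one shows that if $P_{\max}\ge n^{-a}$ then $P$ puts mass $1-\mathrm{poly}(1/n)$ on strings of Hamming weight $O(\log n)$ (after flipping bits so that $\EE[x_j]\le 1/2$). The algorithm then restricts the parent Hamiltonian $H=\frac1n\sum_j U|0\rangle\langle 0|_j U^\dagger$ to the $n^{O(\log n)}$-dimensional span of these basis vectors, diagonalizes the projected Hamiltonian $G=\Pi H\Pi$, takes its top eigenvector $\phi$, and checks that $\lambda_1(G)$ clears a threshold (else it flags an error). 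All $n^{O(\log n)}$ amplitudes of $\phi$ are available explicitly, so one samples from $|\langle x|\phi\rangle|^2$ directly; no chain-rule or marginal-estimation reduction is used at all for Theorem~\ref{thm:main1}. None of this appears in your proposal, and you explicitly defer the ``hard step'' (``this is where I expect the paper to carry out the bulk'') to an unstated divide-and-conquer whose correctness is never argued.

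Beyond being a different route, your sketch has concrete problems. First, the telescoping argument you invoke needs \emph{relative} accuracy on the conditionals $P(x_k\mid x_{[k-1]})$, not additive accuracy $\epsilon'$ on the prefix marginals; additive error $\epsilon'$ translates to a conditional error of order $\epsilon'/P(x_{[k-1]})$, which blows up on low-probability prefixes, and a sampler necessarily visits such prefixes with some probability. The robust sampling-to-marginal reduction the paper cites (Lemma~10 of Ref.~\cite{sparse}) sidesteps this by assuming \emph{exact} marginals of a function $p'$ that is $\delta$-close to $p$ in $1$-norm --- which is a different primitive than ``approximate marginals of $p$.'' Second, your recursive halving ``enumerate the heavy half-values and recurse'' presupposes some geometric or tensor-network structure so that the two halves interact through a bounded interface. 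For \emph{general} shallow circuits (the setting of Theorem~\ref{thm:main1}, with gates between arbitrary qubit pairs) the lightcone of a set of $n/2$ output qubits generically spans all $n$ input qubits, so ``contracts a constant-depth tensor network on a bounded-size lightcone'' is false except for a single output qubit, and there is no bound on the number of Schmidt vectors across your cut. The peakedness-implies-few-heavy-suffixes claim is also not established: peakedness gives one heavy string, not a polynomial bound on the number of strings exceeding $\epsilon'$ for every prefix length simultaneously across your recursion tree, and you need the latter to control the branching. In short, the Hamming-weight concentration observation is the missing ingredient, and without it the $n^{O(\log n)}$ bound has no justification.
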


 If we specialize to the problem of
estimating output probabilities of $D$-dimensional shallow circuits, the algorithm described in Theorem \ref{thm:main1} gives an improved runtime scaling compared with the
previous state-of-the-art~\cite{coudron, coudron2} algorithms which were tailored to that special case, see Figure~\ref{table:probabilities}.
We emphasize that our algorithm solves a more general problem and is much simpler than the ones from Refs.~\cite{coudron, coudron2}. It is sufficiently simple that we were able to implement it in software and simulate peaked shallow circuits with about 50 qubits, see Section~\ref{sec:numerics} for details.

To understand the proof of Theorem \ref{thm:main1}, first consider the very special case in which the output distribution of $U$ is a product distribution $P(x)=p_1(x_1)p_2(x_2)\ldots p_n(x_n)$, where each $p_j$ is a distribution over a single bit $x_j\in \{0,1\}$. Let us suppose $P$ is of this form and in addition Eq.~\eqref{eq:peakeddef} holds. For simplicity let us assume the probability of obtaining the all-zeros string is inverse-polynomially large, i.e., $|\langle 0^n|U|0^n\rangle|^2\geq n^{-a}$. Then the expected number of zeros in a string $x$ sampled from $P$ is (using the arithmetic-geometric mean inequality)
\[
\sum_{i=1}^{n} p_i(0)\geq n\left(p_1(0)p_2(0)\ldots p_n(0)\right)^{\frac{1}{n}}
=nP(0^n)^{\frac1n}
\geq n^{1-a/n} =n-O\left(\log{n}\right).
\]
That is, we expect $x$ to have $O(\log{n})$ ones.  This argument can be further strengthened using a Chernoff bound, to show that the support of $P$ is concentrated on binary strings of Hamming weight at most $O(\log{n})$. The classical simulation algorithm is then based on restricting our attention to the subspace of Hilbert space corresponding to these $n^{O(\log{n})}$ computational basis vectors. For general shallow circuits,  the bits in the output distribution are not independent random variables. But dependency is highly constrained: each bit is independent of all but a constant number of other bits. In the proof of Theorem \ref{thm:main1} we show that the above concentration of measure argument can be extended to the setting of shallow quantum circuits. The proof uses a large deviation bound for partially dependent random variables due to Janson \cite{janson2004large}. In this way we also establish that the output distribution of a peaked shallow circuit is concentrated in a Hamming ball of diameter $O(\log{n})$. In contrast, for shallow quantum circuits that may not be peaked, the Hamming weight is only guaranteed to be concentrated in an interval of width $O(\sqrt{n})$ around its mean value~\cite{anshu2022concentration} \footnote{It is again instructive to consider the special case of product distributions, which illustrates optimality in this setting as well.}. Our simulation algorithm  combines the Hamming weight concentration result with a simple circuit-to-Hamiltonian mapping.
By projecting a local Hamiltonian $\sum_{j=1}^n U|0\ra\la 0|_j U^\dag$ associated with the circuit $U$ onto the Hamming ball of diameter $O(\log{n})$ and computing an
eigenvector of the projected Hamiltonian with the largest eigenvalue we obtain a good approximation of the output state $U|0^n\ra$ which is specified
by its $n^{O(\log{n})}$ nonzero amplitudes in the computational basis. 
If the largest eigenvalue of the projected Hamiltonian is below a certain threshold value, the algorithm infers that the circuit $U$ is not $a$-peaked and outputs an error flag.
This yields the following version of Theorem~\ref{thm:main1}.

\begin{subtheorem} [\textbf{Output state of peaked shallow circuits}]
There exists a classical algorithm which takes as input a positive integer $a=O(1)$, an $n$-qubit constant-depth circuit $U$, and a precision parameter $\epsilon=\Omega(n^{-c})$. If the algorithm succeeds then it outputs a classical description of an $n$-qubit state $|\phi\rangle$ specified by its $n^{O(\log{n})}$ nonzero amplitudes in the computational basis, and such that $|\langle \phi|U|0^n\rangle|^2\geq 1-\epsilon$; otherwise it outputs an error flag. The algorithm is guaranteed to succeed if $U$ is $a$-peaked. The runtime of the algorithm is $n^{O(\log{n})}$.
\label{thm:mainstate}
\end{subtheorem}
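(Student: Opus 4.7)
My plan combines two ingredients suggested by the preceding discussion. The first is a Hamming weight concentration bound: if $U$ is $a$-peaked at some string $x^\star$ (WLOG $x^\star = 0^n$, after conjugating by Pauli $X$'s), then the output distribution $P(x)=|\langle x|U|0^n\rangle|^2$ places all but an $\epsilon/\mathrm{poly}(n)$ fraction of its mass on the Hamming ball $B=\{x:|x|\le R\}$ of radius $R=O(\log n)$. The second is the circuit Hamiltonian
\[
H \;=\; \sum_{j=1}^n P_j, \qquad P_j \;:=\; U|0\rangle\langle 0|_j U^\dagger,
\]
which is a sum of pairwise commuting projectors (the single-qubit projectors $|0\rangle\langle 0|_j$ commute), each acting nontrivially on only the $O(1)$ qubits in the backward lightcone of qubit $j$ under $U$. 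Because the $P_j$ commute, $H$ has integer spectrum, $U|0^n\rangle$ is its unique maximum eigenvector with eigenvalue $n$, and every other eigenvalue is at most $n-1$.

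The algorithm builds the $|B|\times|B|$ matrix $H'=\Pi H\Pi$, where $\Pi$ is the orthogonal projector onto $\mathrm{span}\{|x\rangle:x\in B\}$. Each entry $\langle x|H|y\rangle$ is computable in $\mathrm{poly}(n)$ time from the explicit local descriptions of the $P_j$, so building $H'$ takes $n^{O(\log n)}$ time; its top eigenpair $(\lambda_{\max},|\phi\rangle)$ is then obtained by dense diagonalization in time $|B|^3=n^{O(\log n)}$. The algorithm outputs $|\phi\rangle$ if $\lambda_{\max}\ge n-\epsilon/2$ and otherwise returns an error flag. For correctness when $U$ is $a$-peaked, the normalized projection $|\psi\rangle:=\Pi U|0^n\rangle/\|\Pi U|0^n\rangle\|$ lies in the Hamming-ball subspace; the concentration bound gives $|\langle\psi|U|0^n\rangle|^2\ge 1-\epsilon/\mathrm{poly}(n)$, and a short calculation yields $\langle\psi|H|\psi\rangle\ge n-\epsilon/2$, so $\lambda_{\max}$ clears the threshold. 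Conversely, whenever $\lambda_{\max}\ge n-\epsilon/2$, decomposing the returned eigenvector as $|\phi\rangle=\alpha U|0^n\rangle+\beta|\phi^\perp\rangle$ with $|\phi^\perp\rangle\perp U|0^n\rangle$ and using that the second eigenvalue of $H$ is at most $n-1$ gives $n|\alpha|^2+(n-1)|\beta|^2\ge n-\epsilon/2$, hence $|\beta|^2\le\epsilon/2$ and $|\langle\phi|U|0^n\rangle|^2\ge 1-\epsilon/2$, as required.

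The main obstacle is the Hamming weight concentration bound itself. The arithmetic-geometric argument reproduced in the introduction bounds only the \emph{mean} $\EE_{x\sim P}[|x|]=O(\log n)$; to obtain a tail bound tight enough that $\sum_{x\notin B}P(x)\le\epsilon/\mathrm{poly}(n)$ for $R=O(\log n)$ one needs a Chernoff-type inequality applied to the indicator variables $x_1,\ldots,x_n$, which are \emph{not} independent. The saving grace is that $x_i$ and $x_j$ are independent whenever the backward lightcones of qubits $i$ and $j$ under $U$ are disjoint, so their dependency graph has constant maximum degree. Janson's large deviation inequality \cite{janson2004large} is designed precisely for partially dependent Bernoulli variables with bounded-degree dependency graphs; verifying its hypotheses for shallow-circuit output distributions, and choosing $R$ and the threshold so that the tails match the target precision $\epsilon$, is the technical heart of the argument.
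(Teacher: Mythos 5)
Your proof follows essentially the same two-step strategy as the paper's: (i) concentration of the output Hamming weight on a ball of radius $O(\log n)$ via Janson's inequality for bounded-degree dependency graphs, and (ii) spectral extraction of the top eigenvector of the parent Hamiltonian projected onto that ball, with a threshold test on $\lambda_{\max}$ that doubles as the success/error-flag criterion. The normalization is the only cosmetic difference: you keep $H = \sum_j U|0\rangle\langle 0|_j U^\dagger$ unnormalized (integer spectrum, gap $1$), whereas the paper uses $H/n$ (spectrum in $[0,1]$, gap $1/n$); your observation that the $P_j$ pairwise commute is a slightly cleaner way to see that $H$ has the claimed spectral structure with $n$ achieved uniquely by $U|0^n\rangle$. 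Your variational argument ($\lambda_{\max}\geq n-\epsilon/2$ implies $|\beta|^2\leq\epsilon/2$ via the spectral gap) is the same as the paper's Lemma~\ref{lemma:truncation}.

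However, one step is not algorithmic as written. You say ``WLOG $x^\star = 0^n$, after conjugating by Pauli $X$'s,'' but the algorithm is given only $(a,U,\epsilon)$ and not the location $x^\star$ of the peak, so it cannot perform this conjugation; if the peak sits at large Hamming weight, the truncation $\Pi$ onto $\{x:|x|\leq R\}$ centered at $0^n$ will miss the mass of $P$ entirely and the threshold test will fail even for $a$-peaked $U$. The paper resolves this with a simple preprocessing pass: for each output qubit $j$, compute the single-bit marginal $m_j = \mathbb{E}_{x\sim P}[x_j]$ by simulating only the $O(1)$-qubit backward lightcone of qubit $j$, and whenever $m_j > 1/2$ relabel the values $0\leftrightarrow 1$ on that output wire (equivalently, append $X_j$ to the circuit). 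After this relabeling $m_j\leq 1/2$ for all $j$, which is precisely the hypothesis under which the concentration lemma (the paper's Lemma~\ref{lemma:concentration}) guarantees the output distribution---and hence any heavy string---is concentrated at Hamming weight $O(\log n)$ from $0^n$. Adding this step closes the gap; the rest of your argument goes through as outlined.
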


Theorem \ref{thm:main1} is a direct corollary of Theorem \ref{thm:mainstate}, obtained by using the relationship between the fidelity and trace distance, and the fact that trace distance does not increase under CPTP maps (in this case, measuring all qubits in the computational basis).

Theorems \ref{thm:main1}, \ref{thm:mainstate} apply to general shallow quantum circuits in which two-qubit gates may be applied to any pair of qubits. We also present sampling algorithms with improved asymptotic runtime for shallow circuits composed of two-qubit gates acting between nearest neighbors on a grid in a fixed number of dimensions.
\begin{theorem}[\textbf{Output distribution with geometric locality}]
Under the conditions of Theorem \ref{thm:main1}, if in addition $U$ is local with gates that act between nearest neighbors on a two-dimensional grid of qubits then the same task can be achieved by a classical algorithm with $\mathrm{poly}(n)$ runtime. If it is geometrically local in $D$-dimensions, with $D=O(1)$ then there is a classical algorithm with runtime $n^{O(\log\log{n})}$. 
\label{thm:main2}
\end{theorem}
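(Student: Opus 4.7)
The plan is to exploit geometric locality to decouple the sampling task into many small subproblems, each of which can be solved by Theorem~\ref{thm:mainstate} for general $D$, and by the polynomial-time marginal subroutines of Ref.~\cite{bravyi2021classical} for $D=2$. Partition the $D$-dimensional grid into blocks of linear size $\ell$, so that there are $k=\Theta(n/\ell^D)$ blocks. For a constant-depth geometrically local $U$, the marginal output distribution on any block $B$ coincides with the output distribution of a subcircuit supported on the lightcone $\mathcal{L}(B)$, which is a slight thickening of $B$ and contains only $O(\ell^D)$ qubits. The crucial inheritance property is that if $U$ is $a$-peaked globally then every marginal $P_B$, and more generally every conditional marginal of $P$ obtained by projecting onto previously sampled bits, is still peaked with an inverse-polynomial parameter (possibly with a worse exponent), since conditioning cannot reduce a peak by more than its conditioning probability.

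From here I would implement block-by-block conditional sampling. Process the blocks in a pre-specified order, and at the $i$-th step invoke Theorem~\ref{thm:mainstate} (or the 2D subroutine of Ref.~\cite{bravyi2021classical}) on an effective subcircuit of size $O(\ell^D)$ to produce a sample $x_{B_i}$ from the conditional distribution $P(\cdot \mid x_{B_1},\dots,x_{B_{i-1}})$. Since Theorem~\ref{thm:mainstate} returns a classical description of the entire output state of the subcircuit as a superposition supported on polynomially-many-in-$\ell^D$ Hamming-light computational basis vectors, the required conditional marginal on $B_i$ can be read off at essentially no additional cost. For general $D=O(1)$, choosing $\ell$ polylogarithmic in $n$ makes each subroutine call cost $(\ell^D)^{O(\log \ell^D)}$ and yields an overall runtime of $n^{O(\log\log n)}$; for $D=2$, the marginal subroutine of Ref.~\cite{bravyi2021classical} runs in $\mathrm{poly}(n)$ time per block, so the total cost is also $\mathrm{poly}(n)$.

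The main obstacle is controlling two potentially catastrophic sources of blowup. First, the lightcones of distinct blocks overlap, so successive conditional sampling steps act on subcircuits that share qubits rather than being independent; resolving this will likely require a carefully chosen schedule (for instance, a multi-pass sweep in which one first samples a sparse sub-grid of non-adjacent blocks whose lightcones are pairwise disjoint, then fills in the remaining blocks in additional passes whose new lightcones are also controlled). Second, sampling errors compound over the $k$ rounds: each subroutine call must achieve precision $\epsilon/k$, and one must union-bound the resulting $\|\cdot\|_1$ errors while simultaneously verifying that the effective peakedness exponent of the conditional circuits does not decay faster than inverse polynomial across the sequence. Bundling these concerns into an inductive argument---in which conditioning on previously sampled bits leaves the residual circuit both geometrically local and sufficiently peaked to reapply Theorem~\ref{thm:mainstate}---is the technical heart of the proof.
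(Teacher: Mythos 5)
Your proposal takes a genuinely different route from the paper, but it has a gap that I don't see how to close. The paper does \emph{not} do conditional block-by-block sampling. Instead it builds a single global Hermitian ``pseudomixture'' $K$ (a real linear combination of products of reduced density matrices, assembled via inclusion--exclusion over \emph{heavy slices} --- constant-width strips that are close to $|0\rangle$ with probability $\geq 0.99$), proves $\|\mathcal{M}(K)-\mathcal{M}(|\psi\rangle\langle\psi|)\|_1\leq\epsilon$, and shows that marginals of $K$ can be computed \emph{exactly} in the claimed time via a matrix--vector product over quasi-$(D-1)$-dimensional subproblems (MPS for $D=2$, and an induction down to $O(\log^D n)$-qubit blocks handled by Theorem~\ref{thm:mainstate_generalized} for $D>2$). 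Only then is the sampling-to-marginals reduction of Ref.~\cite{sparse} applied, and that reduction specifically requires exact computation of the marginals of a fixed approximating function.

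The gap in your scheme is the ``crucial inheritance property.'' You claim the conditional distribution $P(\cdot\mid x_{B_1},\dots,x_{B_{i-1}})$ on the lightcone of the next block is still the output distribution of a peaked constant-depth circuit, but neither half of that holds in general. (i) Projecting $|\psi\rangle$ onto the measurement outcomes $x_{B_1},\dots,x_{B_{i-1}}$ and renormalizing produces a state that is generally \emph{not} the output of a constant-depth circuit --- the post-measurement state is not of the form $U'|0^m\rangle$ for any shallow $U'$, so Theorem~\ref{thm:mainstate} cannot be invoked on it. Your ``sparse first pass with disjoint lightcones'' suggestion defers rather than solves this: even after a first pass of lightcone-separated blocks, the second pass must condition on neighboring outcomes, and that conditioning is precisely the operation that breaks the shallow-circuit structure. (ii) Peakedness need not survive conditioning on an ``off-path'' outcome: if the sampled prefix $x_{B_1}$ disagrees with $h$ on some bits, $P(\cdot\mid x_{B_1})$ may have no heavy outcome at all, and the bound ``conditioning cannot reduce a peak by more than its conditioning probability'' only controls the \emph{on-path} prefix $h_{B_1},\dots,h_{B_{i-1}}$. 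Bounding the probability of ever going off-path, and specifying what the algorithm does when it does, is exactly the missing piece. You also do not address that marginals of $|\psi\rangle$ on a sub-grid are mixed states, so even the non-conditional subproblem requires the peaked-shallow-purification construction (Lemma~\ref{lem:peakedshallowpurification}) before Theorem~\ref{thm:mainstate} applies --- the paper handles this explicitly in the base case of its induction. Your high-level intuition that geometric locality should let one decouple into small subproblems is the right one, but the paper realizes it through a very different mechanism (heavy-slice dissection and inclusion--exclusion) precisely because na\"{\i}ve conditional sampling runs into these obstructions.
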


The proof of Theorem \ref{thm:main2} is based on a different technique from that of Theorem \ref{thm:main1}. It uses some ideas from Ref.~\cite{coudron} but deploys them in a novel way. The key technique is a kind of dissection along one side of the grid that we now explain. For concreteness consider the case of a 2D shallow circuit $U$ on a $\sqrt{n}\times \sqrt{n}$ square grid of qubits which satisfies $|\langle 0^n|U|0^n\rangle|^2\geq n^{-a}$.   In the first step of the proof we show how to find $O(\sqrt{n})$ \textit{heavy slices} arranged in a somewhat regular pattern. By definition, each heavy slice is a rectangular strip of qubits of size $O(1)\times \sqrt{n}$ that has a good probability, say at least $0.99$, of being measured to be in the all-zeros state. Furthermore, tracing out any heavy slice leaves the two remaining regions on the left and on the right  in a tensor product state. These heavy slices are arranged in a regular pattern so that the maximum distance between heavy slices is at most $O(\log{n})$. 
Imagine taking the true output state of the circuit $|\psi\rangle=U|0^n\rangle$ and swapping all the heavy slices with fresh qubits prepared in the all-zeros state. The result is a mixed state
\begin{equation}
\sigma'= \sigma^{1}\otimes \sigma^{2}\otimes \ldots \otimes \sigma^{T}
\label{eq:tensormps}
\end{equation}
where each $\sigma^{i}$ coincides with a reduced density matrix of the output state $|\psi\rangle$ on some quasi-one dimensional rectangular strip of qubits of size $O(\log{n})\times \sqrt{n}$ along with some additional qubits in the $|0\rangle$ state. The output distribution of $\sigma'$ can be efficiently sampled using matrix product state techniques. Unfortunately, the measurement statistics of $\sigma'$ do not in general approximate those of $\psi$ because our heavy slices only come with a weak guarantee ($0.99$ probability of measuring all zeros). We show that this kind of weak approximation can be improved using an inclusion-exclusion method; the result is an approximation $\rho$ to $\psi$ that is  a \textit{real linear combination}, or ``pseudomixture",  of simple quantum states of the form Eq.~\eqref{eq:tensormps}. Although the linear combination contains exponentially many terms,  its structure is such that marginals like $\mathrm{Tr}(\rho |x\rangle\langle x|_S)$ for $S\subseteq [n]$ and $x\in \{0,1\}^{|S|}$, can be expressed as a matrix-vector product and computed efficiently \footnote{Matrix multiplication is used in two distinct ways here---an outer matrix multiplication is used to reduce the task of computing expected values of $\rho$ to computing expected values of matrix product states, and an inner matrix multiplication is used to perform the latter task.}.  To approximately sample from the output distribution of $\psi$, we use a sampling-to-marginal reduction from Ref.~\cite{sparse} that is robust to error in the $1$-norm. For $D$-dimensional circuits with $D>2$ a similar dissection technique is used inductively.

Theorems \ref{thm:main1} and \ref{thm:main2} show that the task of approximately sampling from the output distribution---which is believed to be classically hard even for 2D shallow circuits---becomes easy when the circuit is both shallow and peaked. On the other hand, polynomial-sized peaked quantum circuits (which may have a high circuit depth) are as powerful as a universal quantum computer and very unlikely to admit an efficient or quasipolynomial classical simulation.

The remainder of paper is structured as follows. Below, in Section \ref{sec:applications}, we describe some applications of our algorithms including 
the estimation of output probabilities, quantum mean values, the normalized trace of unitaries and related distance measures, and more. In Section \ref{sec:preliminaries} we review some basic terminology and properties of shallow quantum circuits. In Section \ref{sec:ham} we describe the Hamming weight concentration properties of peaked shallow circuits and give the proof of Theorems \ref{thm:main1},\ref{thm:mainstate}. In Section \ref{sec:numerics} we present data from a software implementation of this algorithm which illustrates the range of its utility in practice. In Section \ref{sec:kdim} we consider geometrically local shallow circuits and establish Theorem \ref{thm:main2}. A summary of our results and some open questions can be found in Section~\ref{sec:conclusions}. 

\subsection{Applications}
\label{sec:applications}
Our algorithms for simulating peaked shallow circuits have several applications for simulation of general (not necessarily peaked) shallow circuits.
\\

\noindent \textit{Output probabilities, and magnitude of quantum mean values}
We obtain randomized classical algorithms for output probability estimation, summarized in Table \ref{table:probabilities}, as a straightforward application of our sampling algorithms for peaked shallow circuits. Indeed, suppose we aim to compute an estimate $p$ satisfying Eq.~\eqref{eq:invpolyerror} for some given shallow circuit $U$,  $\epsilon\geq n^{-c}$, and binary string $x\in \{0,1\}^n$ (the circuit $U$ may or may not be peaked). 
To do so we can use the algorithm from Theorem \ref{thm:main1} (or Theorem \ref{thm:main2} if $U$ is geometrically local) with $a=c$ and error tolerance $\epsilon/2$. We repeat the sampling algorithm $N$ times. If in any of these runs the algorithm outputs an error flag, we set $p=0$; otherwise,  let $p$ be the fraction of times that outcome $x$ is obtained. In the former case, the circuit is not $c$-peaked, which implies that $p=0$ satisfies Eq.~\eqref{eq:invpolyerror}. In the latter case, letting $q$ be the distribution sampled by the algorithm we have, with high probability
\[
|p-|\langle x|U|0^n\rangle|^2|\leq |p-q(x)|+|q(x)-|\langle x|U|0^n\rangle|^2|\leq \epsilon/2 +O(N^{-1/2}).
\]
Taking $N=O(\epsilon^{-2})=O(\mathrm{poly}(n))$ suffices to make the second term at most $\epsilon/2$.

 A simple application of output probability estimation is to additively approximate the square (or magnitude) of expected values of tensor product observables. Suppose $U$ is a depth $d=O(1)$ circuit. We do not assume that $U$ is peaked.
 Given single-qubit observables $O_1,O_2,\ldots O_n$ such that $\|O_j\|\leq 1$ for all $j\in [n]$, consider the expected value squared:
\begin{equation}
|\langle 0^n|U^{\dagger} O_1\otimes O_2\otimes \ldots \otimes O_n U|0^n\rangle|^2.
\label{eq:meanval}
\end{equation}
If each observable $O_j$ is unitary, then Eq.~\eqref{eq:meanval} is an output probability of a circuit with depth $2d+1$, and therefore it can be estimated to within an inverse polynomial additive error by directly applying our algorithms for output probability estimation. If on the other hand some or all of the observables $O_j$ are not unitary, we can still express the mean value Eq.~\eqref{eq:meanval} as an output probability of a depth $2d+1$ quantum circuit acting on at most $2n$ qubits. This follows from the simple fact that for any single qubit operator $O$ satisfying $\|O\|\leq 1$,  we can always find a two-qubit unitary $B$ such that $(I\otimes\langle 0|) B (I\otimes|0\rangle)=O$, see e.g., Lemma 5 of the Supplementary Material of Ref.\cite{bravyi2021classical}. 

Consequently, we obtain a quasipolynomial algorithm to additively estimate the magnitude of any Pauli operator at the output of a constant-depth quantum circuit. However, we do not know how to use the algorithms presented here (or the ones from \cite{coudron, coudron2}) to compute the overall $\pm$ sign of the Pauli expected value---it remains a challenging open question whether quantum mean values of general shallow circuits (including the sign) can be additively estimated with an efficient or quasipolynomial classical algorithm. For 2D circuits the efficient algorithm from Ref.~\cite{bravyi2021classical} is capable of computing expected values of tensor product observables, including the sign. Intriguingly, Ref.~\cite{huang2021information} describes a learning task for which the sign of Pauli expected values is, in a sense,  provably classically harder than their magnitude. 
\\

\noindent \textit{Normalized trace, Frobenius distance, and frame potentials}

Another interesting application of our algorithms relate to the One Clean Qubit model of a quantum computation, also known as DQC1~\cite{knill1998power}.
This model considers quantum circuits on $n+1$ qubits with the mixed initial state $\rho_{in} = |0\ra\la 0|\otimes I/2^{n}$
such that all qubits except for the first one are maximally mixed. After applying a polynomial size quantum circuit to all $n+1$ qubits
some designated output qubit is measured in the $0,1$ basis. The goal is to estimate the probability of measuring each outcome. 
The canonical DQC1-complete problem is estimation of the normalized trace
\[
t(U)=\frac1{2^n}\mathrm{Tr}(U),
\]
where $U$ is a polynomial-size quantum circuit on $n$ qubits~\cite{knill1998power}. One has to estimate $t(U)$ within additive error
$\mathrm{poly}(1/n)$.  The DQC1 model can efficiently 
estimate $t(U)$ (both real and imaginary parts) using the  Hadamard test.
Here we are interested in the special case when $U$ is a constant-depth circuit. 
Is there an efficient classical algorithm that can estimate $t(U)$ with a small additive error?

Here we shed light on this question by establishing the following.
\begin{claim}
\label{corol:trace}
Suppose $U$ is a depth-$d$ circuit acting on $n$ qubits, and let $t(U)=(1/2^n)\mathrm{Tr}(U)$ be the normalized trace of $U$.
There exists a $2n$-qubit quantum circuit $W$ of depth $d+4$ such that  
\[
|\la 0^{2n}|W|0^{2n}\ra|^2=|t(U)|^2.
\]
\end{claim}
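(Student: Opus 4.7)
The plan is to exploit the standard channel–state duality (a.k.a.\ the operator/maximally-entangled-state correspondence) to express the normalized trace $t(U)$ as a transition amplitude of a low-depth circuit on $2n$ qubits. Concretely, let $|\Phi\rangle = 2^{-n/2} \sum_{x \in \{0,1\}^n} |x\rangle_A |x\rangle_B$ be the maximally entangled state on two $n$-qubit registers $A$ and $B$. A direct computation gives the identity
\begin{equation}
\langle \Phi | (U \otimes I) | \Phi \rangle \;=\; \frac{1}{2^n}\,\mathrm{Tr}(U) \;=\; t(U),
\end{equation}
since $\langle \Phi|(U\otimes I)|\Phi\rangle = 2^{-n}\sum_{x,y} \langle x|U|y\rangle \langle x|y\rangle = 2^{-n}\sum_x \langle x|U|x\rangle$.

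Next I would construct the entangled pairs in small depth. Let $V$ be the depth-$2$ circuit that prepares $|\Phi\rangle$ from $|0^{2n}\rangle$: apply a layer of Hadamards to register $A$, then a layer of $\mathrm{CNOT}$'s with controls in $A$ and targets in the matching qubits of $B$. Then $V|0^{2n}\rangle = |\Phi\rangle$. Define
\begin{equation}
W \;=\; V^{\dagger}\, (U \otimes I_B)\, V.
\end{equation}
Because $V$ has depth $2$ (and so does $V^{\dagger}$) while $U$ has depth $d$, the circuit $W$ has depth at most $d+4$.

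Finally, combining these two observations gives
\begin{equation}
\langle 0^{2n} | W | 0^{2n} \rangle \;=\; \langle 0^{2n}| V^\dagger (U\otimes I) V |0^{2n}\rangle \;=\; \langle \Phi | (U \otimes I) | \Phi \rangle \;=\; t(U),
\end{equation}
and taking the squared modulus yields the claim $|\langle 0^{2n}|W|0^{2n}\rangle|^2 = |t(U)|^2$. There is essentially no hard step here; the only subtlety is making sure the depth accounting is tight, i.e.\ that preparing and unpreparing the Bell pairs can be done in depth $2$ each (so $4$ in total) using only Hadamards and $\mathrm{CNOT}$'s applied in parallel across disjoint qubit pairs, which is immediate because the $n$ Bell pairs act on disjoint pairs of qubits.
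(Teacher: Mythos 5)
Your construction is correct and coincides with the paper's: both sandwich $U\otimes I$ between the depth-$2$ Bell-pair preparation circuit and its inverse, and both use the identity $\langle\Phi|(U\otimes I)|\Phi\rangle = t(U)$. The paper merely presents the circuit via a figure rather than writing out $W = V^{\dagger}(U\otimes I)V$, but the depth accounting and the argument are the same.
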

\begin{proof}
The quantum circuit $W$ is shown on Figure~\ref{fig:bell}.
A simple calculation shows that 
\[
\la 0^{2n}|W|0^{2n}\ra = \la \Phi|U\otimes I|\Phi\ra=  t(U),
\]
where $|\Phi\ra=(1/\sqrt{2^n})\sum_{x\in \{0,1\}^n} |x\ra\otimes |x\ra$ is the Bell state of $n+n$ qubits.
Thus the output probability $|\la 0^{2n}|W|0^{2n}\ra|^2$ coincides with $|t(U)|^2$.
\end{proof}

\vspace{1cm}
\begin{figure}[hbt]
\centerline{
\includegraphics[height=5cm]{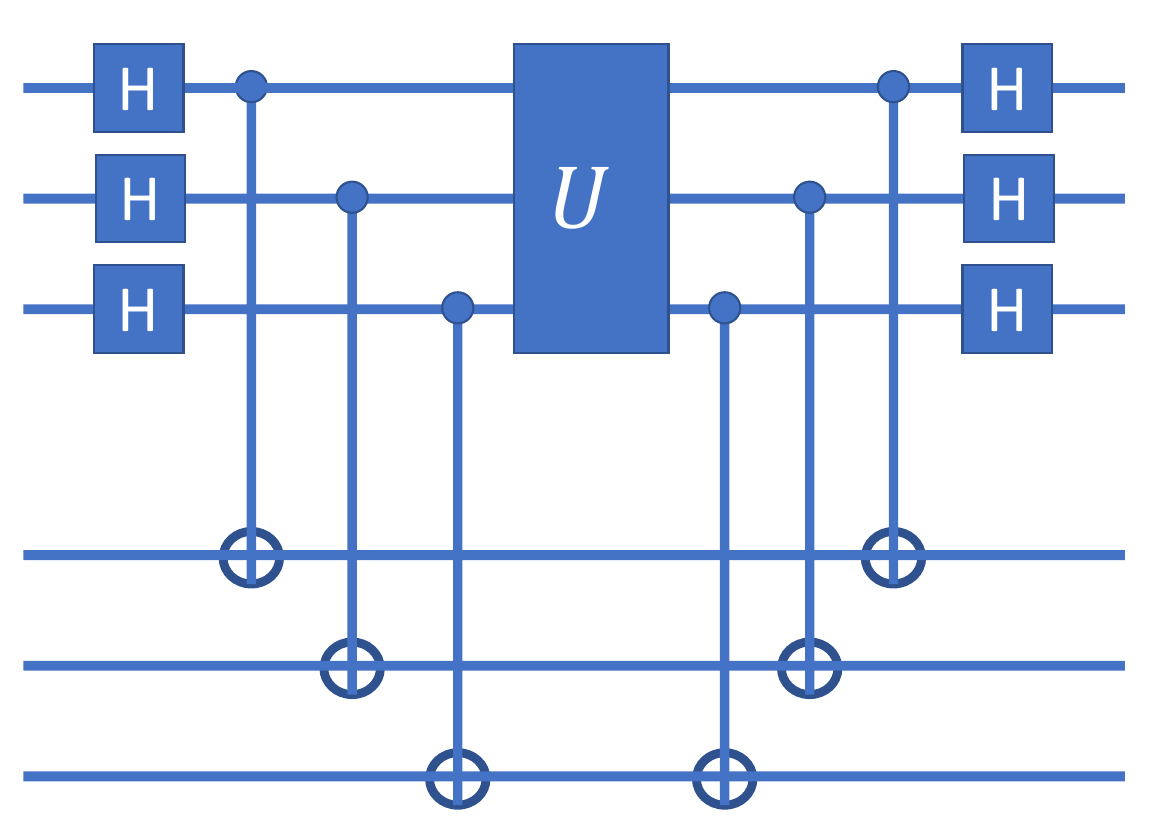}}
  \caption{Circuit $W$ from Claim~\ref{corol:trace}. Here $n=3$.} \label{fig:bell}
\end{figure}
\vspace{1cm}

Claim~\ref{corol:trace} shows that, for constant-depth $U$, the square of the normalized trace $|t(U)|^2$ can be expressed as the output probability of a shallow circuit and estimated using the algorithms described in Table \ref{table:probabilities}. Although this falls short of estimating $t(U)$ including the overall phase, it is enough to check whether two constant-depth circuits $U$ and $V$ are close to the each other in the Frobenius norm.
Since the overall phase of a quantum circuit is often  irrelevant, it is natural to quantify the distance between $U$ and $V$
as
\[
\delta(U,V) :=  \frac1{2^n}
\min_{\gamma \in [0,2\pi)} \| e^{i\gamma} U - V\|_F^2,
\]
where $\|\cdot \|_F$ is the Frobenius norm. A simple calculation shows that 
\[
\delta(U,V) = 2(1-|t(UV^\dag)|).
\]
If both $U$ and $V$ are constant-depth then $UV^\dag$ is also constant-depth.
Thus our algorithm can estimate $\delta(U,V)$ with a small additive error. 
In contrast, checking whether
two constant-depth circuits $U$ and $V$ are close to each other in the operator (spectral) norm
 is known to be a QMA-complete problem,
even in the case of 1D constant-depth circuits~\cite{ji2009non}.

We can also estimate certain distance measures involving random ensembles of unitaries, by a reduction to estimating quantities of the form $|t(U)|^{2k}$ for constant $k$. In particular, suppose $\mathcal{D}$ is a distribution over constant-depth $n$-qubit quantum circuits, such that a circuit can be drawn from $\mathcal{D}$ in $\mathrm{poly}(n)$ time. For any constant $k\geq 2$, consider a normalized version of the $k$-th frame potential of $\mathcal{D}$, which is a measure of how close the distribution is to a unitary $k$-design:
\[
f^{(k)}_\mathcal{D}\equiv 2^{-2nk}\mathbb{E}_{U,V\sim \mathcal{D}} |\mathrm{Tr}(U^{\dagger}V)|^{2k} =\mathbb{E}_{U,V\sim \mathcal{D}} |t(U^{\dagger}V)|^{2k}.
\]
This normalized frame potential satisfies \cite{zhu2017multiqubit}
\[
\frac{k!}{2^{2nk}}\leq f^{(k)}_\mathcal{D}\leq 1
\]
where the lower bound is achieved iff $\mathcal{D}$ is a $k$-design. By sampling circuits from $\mathcal{D}$ and computing the sample mean of $|t(U^{\dagger}V)|^{2k}$ we can estimate $f^{(k)}_\mathcal{D}$ to within a given inverse polynomial additive error.
\\

\noindent \textit{Approximation via linear combination of low-depth states}

Suppose $U$ is a quantum circuit and we are interested in preparing its output state $U|0^n\rangle$ on a quantum computer. We'd be very happy if we could find a much simpler quantum circuit $V$, say of significantly lower depth than $U$, such that the output state of $VU$ is completely concentrated on the all-zeros string, i.e.,   
\begin{equation}
\langle 0^n|V U|0^n\rangle=1.
\label{eq:uv}
\end{equation}
Then obviously $U|0^n\rangle=V^{\dagger}|0^n\rangle $ and thus
$V^{\dagger}$
gives a simpler way to prepare the output state of $U$.

Using Theorem \ref{thm:mainstate} we can make a version of this work even when we have a much weaker guarantee than Eq.~\eqref{eq:uv}.  In particular, suppose that $U$ is a depth $d=O(1)$ quantum circuit that admits a kind of approximate shortcut, in the sense that there is a circuit $V$, of lower depth $d'<d$ such that $VU$ is peaked, i.e., $\langle 0^n|VU|0^n\rangle\geq n^{-a}$ for some $a$. For a small (inverse polynomial) $\delta$ we can then use Theorem \ref{thm:mainstate} to obtain complex coefficients $\{c_x\}$ such that
\[
|\phi\rangle=\sum_{|x|\leq O(\log{n})} c_x |x\rangle  \quad \text{satisfies} \quad |\langle \phi|VU|0^n\rangle|^2\geq 1-\delta
\]
In this way we obtain an approximation to the output state of the depth-$d$ circuit $U$, as a superposition of quasipolynomially many states of (lower) depth $d'$, i.e. $V^{\dagger}|\phi\rangle=\sum_{|x|\leq O(\log{n})} c_x V^{\dagger}|x\rangle$.

\section{Preliminaries}
\label{sec:preliminaries}
In this work we consider quantum circuits composed of two-qubit gates. A depth-$1$ quantum circuit consists of a tensor product 
of two-qubit gates with disjoint support. A depth-$d$ quantum circuit is a sequence of $d$ depth-$1$ quantum circuits.  We are interested in constant-depth, or \textit{shallow},  quantum circuits such that $d$ is a constant independent of the number of qubits $n$. 

\subsection*{Lightcones and independence}
We can associate a directed graph $G$ with an $n$-qubit quantum circuit $\mathcal{C}$ in a straightforward way as follows. Starting from a circuit diagram for $\mathcal{C}$, we add a vertex of $G$ for each qubit at the input of the circuit, a vertex for each gate in the circuit, and a vertex for each qubit at the output of the circuit. Each wire in the circuit is then assigned a directed edge from left-to-right between the vertices at its endpoints. Then the circuit depth can be equivalently defined as $m-1$ where $m$ is the number of edges in the longest path from a vertex representing an input qubit vertex to one representing an output qubit. For any qubit $j\in [n]$ we  also define the \textit{lightcone} $\mathcal{L}(j)$ to be the set $S\subseteq [n]$ of input qubits  that are connected by a directed path in $G$ to the output qubit $j$. Similarly, we define
\[
\mathcal{L}(A)\equiv \bigcup_{j\in A} \mathcal{L}(j) \quad A\subseteq [n].
\]
If two subsets of qubits $A,B\subseteq [n]$ have the property that $\mathcal{L}(A)\cap \mathcal{L}(B)=\emptyset$ then we say $A,B$ are \textit{lightcone-separated}.
\begin{fact}
 Let $\rho=U|0^n\rangle\langle 0^n|U^{\dagger}$ be the output state of the circuit $U$ and suppose $A,B$ are lightcone-separated. Then
\[
\rho_{AB}=\rho_A\otimes \rho_B
\]
where $\rho_S=\mathrm{Tr}_{[n]\setminus S}(\rho)$. 
\label{fact:indep}
\end{fact}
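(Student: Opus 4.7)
The plan is to work in the Heisenberg picture and show that for any pair of observables $O_A$ supported on $A$ and $O_B$ supported on $B$, the expectation value $\langle 0^n|U^\dagger (O_A\otimes O_B) U|0^n\rangle$ factors as a product of single-region expectations; since this identifies the matrix elements of $\rho_{AB}$ with those of $\rho_A\otimes \rho_B$, the claim follows. The central reduction is thus from an assertion about reduced density matrices to an assertion about how observables pull back through $U$.

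The technical heart is the standard lightcone lemma: for any $A\subseteq[n]$ and any operator $O_A$ acting on $A$, the pulled-back operator $U^\dagger O_A U$ can be written as $V_A^\dagger O_A V_A \otimes I_{[n]\setminus \calL(A)}$, where $V_A$ is the sub-circuit consisting only of those gates that lie on a directed path in $G$ to some output vertex in $A$. I would prove this by peeling off the gates of $U$ from latest to earliest: if $g$ is the current last gate and $g$ does \emph{not} belong to the backward cone of $A$, then $g$'s output wires meet no qubit in the current support of the evolving observable (since such qubits would themselves be ancestors of $A$), so $g$ commutes with it and can be dropped; otherwise $g$ is retained as a factor of $V_A$ and the support may spread to its input wires, which by definition still lie in $\calL(A)$. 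Iterating gives the claimed support on $\calL(A)$. The analogous statement holds for $B$.

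Given this, the lightcone-separation hypothesis $\calL(A)\cap \calL(B)=\emptyset$ implies $U^\dagger O_A U$ and $U^\dagger O_B U$ have disjoint supports, hence commute and factorize as an honest tensor product on $\calL(A)\cup \calL(B)$. Because the input state is a product state, $|0^n\rangle = |0^{\calL(A)}\rangle\otimes |0^{\calL(B)}\rangle \otimes |0^{\text{rest}}\rangle$, so
\[
\langle 0^n|U^\dagger (O_A\otimes O_B) U|0^n\rangle
= \langle 0^{\calL(A)}|V_A^\dagger O_A V_A|0^{\calL(A)}\rangle \cdot \langle 0^{\calL(B)}|V_B^\dagger O_B V_B|0^{\calL(B)}\rangle
= \trace(\rho_A O_A)\,\trace(\rho_B O_B),
\]
where in the last step I use the lightcone lemma in the other direction to identify each factor with the corresponding single-region marginal. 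Since $O_A,O_B$ are arbitrary, $\rho_{AB}=\rho_A\otimes \rho_B$.

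The only nontrivial step is the lightcone lemma, and the subtlety there is keeping track of the ``current support'' of the evolving Heisenberg operator as gates are stripped off in reverse order, and verifying that a gate outside the backward cone of $A$ never touches that support. Everything else is bookkeeping: the definition of $\calL(A)$ as the set of input ancestors of $A$ exactly matches the support after all gates have been processed, and the product structure of $|0^n\rangle$ does the rest.
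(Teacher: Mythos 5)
Your proof is correct. The paper states Fact~\ref{fact:indep} without proof, treating it as a standard consequence of lightcones, so there is no paper argument to compare against; what you have written is the expected justification. Your Heisenberg-picture route is sound: the peeling argument correctly establishes that a gate $g$ outside the backward cone of $A$ cannot act on the running (potential) support of the evolved observable, because any qubit in that support at the moment $g$ is peeled has a directed path to an output vertex in $A$, and if $g$ touched such a qubit then $g$ itself would lie on such a path. This gives $U^\dagger O_A U = V_A^\dagger O_A V_A\otimes I_{[n]\setminus\calL(A)}$ with support in $\calL(A)$, and the analogous statement for $B$; lightcone separation then makes the two pulled-back operators commute and tensor-factorize, and the product structure of $|0^n\rangle$ across $\calL(A)$, $\calL(B)$, and the remainder factorizes the expectation. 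Since $O_A\otimes O_B$ ranges over a spanning set of operators on $AB$, the equality $\mathrm{Tr}(\rho_{AB}\,O_A\otimes O_B)=\mathrm{Tr}(\rho_A O_A)\mathrm{Tr}(\rho_B O_B)$ for all $O_A,O_B$ indeed forces $\rho_{AB}=\rho_A\otimes\rho_B$. One small point worth making explicit if you write this up in full: the set of gates you keep is determined by tracking an over-approximation of the support (once a qubit enters, it never leaves), which is exactly what makes the kept set coincide with the backward cone $V_A$ independently of the particular $O_A$; this is harmless because you only need containment of the true support in $\calL(A)$, not equality.
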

A key property of shallow quantum circuits is that the lightcone of any qubit has a constant size. In particular, $|\mathcal{L}(j)|\leq 2^d$ for all $j\in [n]$. Each qubit $j\in [n]$ is therefore lightcone-separated from all but a constant number $K\leq 2^{2d}$ of the other qubits. Combining this with Fact \ref{fact:indep}, we see that in the output distribution $P(x)=|\langle x|U|0^n\rangle|^2$,  each bit $x_j$ is independent of all but a constant number $K\leq 2^{2d}$ of other bits.

\subsection*{Peaked shallow purification}

  Here we consider a reduced density matrix $\rho$ of an $n$-qubit shallow circuit describing a subset of qubits $A\subseteq [n]$ such that $\langle h|\rho|h\rangle\geq n^{-a}$ for some heavy string $h\in \{0,1\}^{|A|}$. The following lemma shows, in particular,  that $\rho$ admits a purification which is the output state of a peaked shallow circuit. For technical reasons we shall use the slightly more general statement below which describes a family of shallow circuits which each purify $\rho$, and at least one of which is peaked.
\begin{lemma}[Peaked shallow purification]
Suppose $V$ is an $n$-qubit depth-$d$ quantum circuit and suppose the qubits are partitioned as $[n]=AB$. Let $\rho=\mathrm{Tr}_{B}\left(V|0^n\rangle\langle0^n|V^{\dagger}\right)$ be the reduced density matrix of the output state on subsystem $A$. Suppose further that for a given $h\in \{0,1\}^{|A|}$ we have
\[
\mathrm{Tr}\left( \rho|h\rangle\langle h|_A\right)\geq n^{-a}.
\]
For each $z\in \{0,1\}^{|A|}$, let $X(z)$ be an $|A|$-qubit Pauli matrix such that $X(z)|0^{|A|}\rangle=|z\rangle$. Define unitaries $Q(z)$ on three registers $AA'B$, where $|A'|=|A|$, and 
\begin{equation}
Q(z)=(I_A\otimes V_{A'B}^{\dagger})(V_{AB}\otimes X(z)_{A'}).
\label{eq:udef}
\end{equation}
$Q(z)$ acts on $m=2|A|+|B|\leq 2n$ qubits and has depth $2d$. Moreover, we have
\[
\max_{x\in \{0,1\}^m}|\langle x|Q(h)|0^{m}\rangle|^2\geq n^{-2a}
\]
and
\[
 \mathrm{Tr}_{A'B}(Q(z)|0^{m}\rangle\langle 0^{m}|Q(z)^{\dagger})=\rho \quad \text{ for all } \quad z\in \{0,1\}^{|A|}.
\]

\label{lem:peakedshallowpurification}
\end{lemma}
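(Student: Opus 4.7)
The plan is to verify the two main assertions---the trace identity and the peakedness bound---by direct calculation, taking the trace identity first since it is the easier of the two.

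For the identity $\mathrm{Tr}_{A'B}(Q(z)|0^{m}\rangle\langle 0^{m}|Q(z)^{\dagger}) = \rho$, I would observe that the intermediate state $(V_{AB}\otimes X(z)_{A'})|0^m\rangle = (V_{AB}|0^A 0^B\rangle)\otimes |z\rangle_{A'}$ is a purification of $\rho$ on register $AA'B$, with $A'B$ playing the role of purifying system: tracing out $A'B$ eliminates $|z\rangle_{A'}\langle z|$ and returns $\mathrm{Tr}_B(V|0^n\rangle\langle 0^n|V^{\dagger}) = \rho$. Since $I_A\otimes V^{\dagger}_{A'B}$ acts only on the purifying register, the reduced state on $A$ is unaffected, which gives the identity for every $z$ independent of its value.

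For the peakedness, I would evaluate the amplitude $\langle x^{\ast}|Q(h)|0^m\rangle$ at the specific output $x^{\ast} = (h, 0^{|A|}, 0^{|B|})$ on registers $(A, A', B)$. Writing $V|0^A 0^B\rangle = \sum_{x_A,x_B}\alpha_{x_A,x_B}|x_A,x_B\rangle$ with $\alpha_{x_A,x_B} = \langle x_A,x_B|V|0^n\rangle$, the state after applying $V_{AB}\otimes X(h)_{A'}$ is
\[
\sum_{x_A, x_B}\alpha_{x_A,x_B}\,|x_A\rangle_A|h\rangle_{A'}|x_B\rangle_B.
\]
Applying $I_A\otimes V^{\dagger}_{A'B}$ and contracting against $\langle h|_A\langle 0^{A'}|\langle 0^B|$ produces $\sum_{x_B}\alpha_{h,x_B}\overline{\alpha_{h,x_B}} = \langle h|\rho|h\rangle \geq n^{-a}$, where I used $\langle 0^A,0^B|V^{\dagger}|h,x_B\rangle = \overline{\alpha_{h,x_B}}$ and then recognized the resulting sum as the $(h,h)$ matrix element of $\rho$. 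Squaring yields $|\langle x^{\ast}|Q(h)|0^m\rangle|^2 \geq n^{-2a}$, which lower bounds the maximum over all computational basis outputs.

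The remaining bounds are routine: $m = 2|A|+|B|\leq 2n$, and the single-qubit Pauli layer $X(z)$ on $A'$ can be merged with any layer of $V_{AB}$ (they act on disjoint qubits), so the first block has depth $d$ and composing with $I_A\otimes V^{\dagger}_{A'B}$ gives overall depth $2d$. The only conceptual step, which I would flag as the main subtlety, is spotting the right output string: choosing $|h\rangle$ on $A$ and $|0\rangle$ everywhere else is precisely what causes the $V^{\dagger}_{A'B}$ layer to pair each branch of the purification with its complex conjugate, converting the peak of $\rho$ at $h$ into a peak of $Q(h)|0^m\rangle$ of size at least the square of the original one.
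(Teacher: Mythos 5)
Your proof is correct and follows essentially the same approach as the paper's. The paper establishes the trace identity via cyclicity of the trace (one line: $\mathrm{Tr}_{A'B}(Q(z)|0^m\rangle\langle 0^m|Q(z)^\dagger) = \mathrm{Tr}_{A'B}(V_{AB}|0\rangle\langle 0|_{AB}V^\dagger_{AB}\otimes|z\rangle\langle z|_{A'}) = \rho$), which is the same observation as yours that $I_A\otimes V^\dagger_{A'B}$ acts only on the trace-out register. For the peakedness bound, the paper uses the Schmidt decomposition $V_{AB}|0\rangle_{AB}=\sum_k\sqrt{p_k}|\alpha_k\rangle|\beta_k\rangle$ rather than expanding $V|0^n\rangle$ in the computational basis as you do, but in both cases the crucial step is identical: evaluate the amplitude at the specific string $(h_A,0_{A'},0_B)$ and recognize that the $V^\dagger_{A'B}$ layer pairs each branch with its conjugate to produce $\langle h|\rho|h\rangle\geq n^{-a}$. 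Your choice of computational-basis coefficients $\alpha_{x_A,x_B}$ makes the cancellation slightly more explicit but carries the same content; the Schmidt form is marginally more compact. Both correctly identify, as you flag, that spotting the right output string is the only non-routine step.
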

\begin{proof}
Using Eq.~\eqref{eq:udef} and cyclicity of the trace, we get
\[
\mathrm{Tr}_{A'B}(Q(z)|0^{m}\rangle\langle 0^{m}|Q(z)^{\dagger})=\mathrm{Tr}_{A'B}\big(V_{AB}|0\rangle\langle 0|_{AB}V^{\dagger}_{AB}\otimes |z\rangle\langle z|_{A'}\big)=\rho.
\]

Now write the Schmidt decomposition 
\begin{equation}
V_{AB}|0\rangle_{AB}=\sum_{k}\sqrt{p_k}|\alpha_k\rangle|\beta_k\rangle.
\label{eq:v1}
\end{equation}
so that $\rho=\sum_{k}p_k|\alpha_k\rangle\langle \alpha_k|$. Then  
\[
\langle h_{A}0_{A'B}| Q(h)|0^{m}\rangle= \left(\sum_{k}\sqrt{p_k}  \langle h|_{A}\langle \alpha_k|_{A'}\langle \beta_k|_{B}\right)\left( \sum_{j} \sqrt{p_j} |\alpha_j\rangle_A|\beta_j\rangle_B|h\rangle_{A'}\right)=\langle h|\rho|h\rangle\geq n^{-a},
\]
which completes the proof.
\end{proof}

\section{Peaked shallow circuits \label{sec:ham}}

In this section we prove Theorems \ref{thm:main1} and \ref{thm:mainstate}, following the strategy outlined in the Introduction. As a first step we establish a certain Hamming weight concentration property for probability distributions over $n$-bit strings such that each bit is independent of all but a constant number of others.

\begin{lemma}
\label{lemma:concentration}
Let $P(x)$ be a probability distribution on $n$-bit strings
such that each variable $x_j$ is
 independent of all but at most
$K$ of the others and $\EE_{x\sim P}(x_j)\le 1/2$. 
Let 
\be
\label{peak}
P_{max} =\max_{x\in \{0,1\}^n} P(x)
\ee
and let $\delta\le P_{max}^{18/25}$ be a given error tolerance. 
Then 
\be
\label{eq:concentration}
\mathrm{Pr}_{x\sim P}\left[\sum_{j=1}^n x_j \ge \frac{25}6 (K+1) \log{(1/\delta)}\right]\le \delta.
\ee
Here we use the natural logarithm.
\end{lemma}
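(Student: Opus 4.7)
The plan is to combine a chromatic-coloring argument with a Janson-type concentration inequality for partially dependent Bernoulli variables. Form the dependency graph of $P$ on vertex set $[n]$; since each $x_j$ is independent of all but at most $K$ others, this graph has maximum degree at most $K$ and hence chromatic number $\chi\le K+1$. Fix a proper coloring and let $C_1,\ldots,C_\chi$ be the color classes. Within each $C_c$ the variables $\{x_j\}_{j\in C_c}$ are mutually independent under $P$, so the marginal $P_c$ is a product distribution $P_c(y)=\prod_{j\in C_c}p_j(y_j)$, where $p_j(\cdot)$ denotes the marginal of $x_j$.

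The first main step is to control the mean $\mu=\sum_j p_j(1)$ in terms of $P_{max}$. Pick any $h\in\{0,1\}^n$ with $P(h)=P_{max}$. The hypothesis $\EE[x_j]\le 1/2$ gives $p_j(0)\ge p_j(1)$, so the string $0^{|C_c|}$ is a global maximizer of the product distribution $P_c$. Since a marginal dominates the joint, $P_c(0^{|C_c|})\ge P_c(h|_{C_c})\ge P(h)=P_{max}$. Taking logarithms and using $-\log(1-x)\ge x$ for $x\in[0,1)$,
\[
\sum_{j\in C_c} p_j(1) \ \le\ -\sum_{j\in C_c}\log p_j(0) \ \le\ \log(1/P_{max}),
\]
and summing over the $\chi\le K+1$ color classes yields $\mu\le (K+1)\log(1/P_{max})$.

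The second step is to combine this mean bound with Janson's Bernstein-type large-deviation inequality \cite{janson2004large} for sums of partially dependent indicators, which gives a tail bound of the form
\[
\prob{X\ge \mu+t}\ \le\ \exp\!\left(-\frac{t^2}{2(K+1)(\mu+t/3)}\right)
\]
for $X=\sum_j x_j$. The hypothesis $\delta\le P_{max}^{18/25}$ rearranges to $\log(1/P_{max})\le (25/18)\log(1/\delta)$ and hence $\mu\le (25/18)(K+1)\log(1/\delta)$. Setting $T=(25/6)(K+1)\log(1/\delta)$, both $\mu$ and $T/3$ are at most $(25/18)(K+1)\log(1/\delta)$ while $t=T-\mu\ge (25/9)(K+1)\log(1/\delta)$. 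A short calculation then shows the exponent is at least $\log(1/\delta)$ (in fact at least $(25/18)\log(1/\delta)$ even using the looser denominator bound), from which $\prob{X\ge T}\le \delta$ follows as in \eqref{eq:concentration}.

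The main obstacle is balancing the two appearances of $K+1$: the coloring argument inflates the bound on $\mu$ by $\chi\le K+1$, and the dependency factor $\chi$ also appears in the denominator of the Bernstein exponent, so the two essentially cancel in the final computation. The specific constants $25/6$ and $18/25$ in the statement are calibrated so that the exponent on $\delta$ exceeds $1$ in the worst case where $\mu$ saturates its upper bound. A simpler Chernoff-plus-union-bound argument over color classes would instead leave a factor of $\chi$ outside the exponential and would not deliver the clean $\le\delta$ conclusion when $K$ is treated as a parameter.
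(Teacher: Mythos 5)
Your proof follows the same two-step plan as the paper --- first bound the mean Hamming weight $\mu=\sum_j m_j$ by $(K+1)\log(1/P_{max})$, then invoke Janson's Bernstein-type inequality --- but your mechanism for the first step is different and arguably cleaner. The paper proves the mean bound by contradiction: it posits $\mu>(K+1)\log(1/P_{max})$, uses a randomized argument to extract an independent set $S$ with $\sum_{j\in S}m_j\geq\mu/(K+1)$, and then derives $P_{max}\leq\prod_{j\in S}(1-m_j)<P_{max}$. You instead decompose the vertex set directly into $\chi\leq K+1$ color classes, observe that each class is a product distribution whose marginal dominates $P_{max}$ and is maximized at the all-zeros string (using $m_j\leq 1/2$), and conclude $\sum_{j\in C_c}m_j\leq\log(1/P_{max})$ for each class before summing. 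This is a constructive restatement of the same underlying idea (a coloring is exactly what makes the paper's ``randomized argument'' work) and avoids the contradiction; it also makes clear that the factor $K+1$ is a chromatic-number bound.

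There is one genuine numerical slip in your second step: you quote Janson's inequality with a $\frac{1}{2}$ in place of the $\frac{8}{25}$ that Theorem 2.3 of Janson (and the paper) actually use, i.e.
$$
\Pr[X\geq\mu+t]\leq\exp\!\left(-\frac{8t^2}{25(K+1)(S+t/3)}\right),
$$
with $S=\sum_j\mathrm{Var}(x_j)$. Your replacement of $S$ by $\mu$ is a valid weakening (since $\mathrm{Var}(x_j)\leq m_j$, it only enlarges the denominator, and the paper does the same via $S\leq R$). But your ``looser denominator'' sanity check claiming exponent at least $\frac{25}{18}\log(1/\delta)$ relies on the incorrect constant: with $\frac{8}{25}$ it gives only $\frac{8}{9}\log(1/\delta)<\log(1/\delta)$, which is not enough. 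The tight calculation does still go through: in the worst case $\mu=\frac{25}{18}(K+1)\log(1/\delta)$ one has $t=T-\mu=\frac{25}{9}(K+1)\log(1/\delta)$ and $\mu+t/3=\frac{125}{54}(K+1)\log(1/\delta)$, so the exponent is
$$
\frac{8t^2}{25(K+1)(\mu+t/3)}=\frac{16}{15}\log(1/\delta)>\log(1/\delta),
$$
and the conclusion $\Pr[X\geq T]\leq\delta$ follows. So your approach is sound; you just need to use the correct constant from Janson and carry out the sharper estimate rather than the looser one.
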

Suppose 
$P$ is the output distribution of a constant-depth quantum circuit.
Then the independence condition of the lemma is  satisfied with $K=O(1)$.
Furthermore,  expected values $\EE_{x\sim P} (x_j)$ can be efficiently computed
by restricting the circuit to the lightcone of a single qubit $j$.
By relabeling bit values $0$ and $1$, if necessary, one can ensure that 
$\EE_{x\sim P} (x_j)\le 1/2$ for all $j$.
Suppose $P_{max}\ge n^{-O(1)}$.
Then condition $\delta\le P_{max}^{18/25}$ is satisfied for any polynomial
$\delta=\mathrm{poly}(1/n)$ of sufficiently high degree.
The lemma then implies that 
restricting $P(x)$ to the Hamming ball of diameter
$O(\log{n})$ centered at the all-zero string results in 1-norm error at most $\delta=\mathrm{poly}(1/n)$.

\begin{proof}
Let  $w(x)=\sum_{j=1}^n x_j$ be the Hamming weight of $x$.
We claim that 
\be
\label{average_weight}
\EE_{x\sim P} [w(x)]\le (K+1)\log{(1/P_{max})}.
\ee
Indeed, let $m_j=\EE_{x\sim P} (x_j)$.
Suppose Eq.~(\ref{average_weight}) is false, i.e.
$\sum_{j=1}^n m_j >(K+1)\log{(1/P_{max})}$. Let us show that this leads to a contradiction.
Define a dependency graph $G=(V,E)$ with $n$ vertices such that each vertex
of $G$ represents a variable $x_j$ and edges of $G$ represent dependencies
among the variables. More precisely, we assume that a variable $x_j$
is independent of a subset of variables $S\subseteq V\setminus \{j\}$ whenever
the graph $G$ has no edges between $j$ and $S$.
By assumption, $G$ has degree at most $K$. 
Using a simple randomized argument one can show
that $G$ has an independent set $S\subseteq V$ such that 
\be
\label{setS}
\sum_{j\in S} m_j \ge \frac1{K+1} \sum_{j=1}^n m_j > \log{(1/P_{max})}.
\ee
Let $h\in \{0,1\}^n$ be any bit string such that $P(h)=P_{max}$.
Since all variables $x_j$ with $j\in S$ are independent, one gets
\[
P_{max}=P(h)\le\mathrm{Pr}_{x\sim P}[x_j=h_j \; \mbox{for all} \; j\in S]=
\prod_{j\in S\, : \, h_j=1} m_j \prod_{j\in S\, : \, h_j=0}(1-m_j).
\]
Since we assumed that $m_j\le 1/2$, one has $m_j\le 1-m_j$. Thus
\[
P_{max} \le \prod_{j\in S} (1-m_j) \le \exp{\left[-\sum_{j\in S} m_j \right]}<P_{max}.
\]
Here the last inequality follows from Eq.~(\ref{setS}).
We arrived at a contradiction which proves Eq.~(\ref{average_weight}).
Let 
\[
R:=(K+1)\log{(1/P_{max})}.
\]
Note that  $\EE_{x\sim P} [w(x)]\le R$ due to Eq.~(\ref{average_weight}).
A large deviation bound for a sum of random variables whose dependency graph has a bounded
degree was obtained in~\cite{janson2004large}. Applying Theorem~2.3 of~\cite{janson2004large} gives
\be
\mathrm{Pr}_{x\sim P}\left[ w(x) \ge  R + t\right] \le
\mathrm{Pr}_{x\sim P}\left[ w(x) \ge  \EE_{x\sim P}[w(x)] + t\right] \le \exp{\left( - \frac{8t^2}{25(K+1) (S+t/3)}\right)},
\ee
for any $t\ge 0$,
where $S=\sum_{j=1}^n \mathrm{Var}(x_j) = \sum_{j=1}^n m_j - m_j^2 \le \sum_{j=1}^n m_j\le R$. Thus
choosing $t=(\lambda-1)R$ with $\lambda\ge 1$ one gets
\be
\label{weight_tail_bound1}
\mathrm{Pr}_{x\sim P}\left[ w(x) \ge \lambda R\right]\le 
 \exp{\left( - \frac{24(\lambda-1)^2 R^2}{25(K+1) (2+\lambda)R}\right)}
 \le  \exp{\left( - \frac{6\lambda R}{25(K+1)}\right)}.
\ee
Here we assumed that $\lambda\ge 3$ and used a bound $(\lambda-1)^2/(2+\lambda)\ge \lambda/4$.
Choose 
\be
\label{our_lambda}
\lambda =  \frac{25 \log{(1/\delta)}}{6\log{(1/P_{max})}}.
\ee
Then the righthand side of Eq.~(\ref{weight_tail_bound1}) equals $\delta$ and we get
\[
\mathrm{Pr}_{x\sim P}\left[ w(x) \ge \frac{25}6 (K+1) \log{(1/\delta)}\right]\le \delta.
\]
Condition $\lambda\ge 3$ is equivalent to $\delta\le P_{max}^{18/25}$, see Eq.~(\ref{our_lambda}).
\end{proof}

\subsection{Proof of Theorems \ref{thm:main1} and \ref{thm:mainstate} \label{sec:peaked}}

Suppose we are given a depth-$d$ quantum circuit $U$ composed of $1$- and $2$-qubit gates and
an error tolerance $\epsilon>0$.
We do not assume that $U$ is geometrically local.
Let $K=2^{2d}$ so that each qubit is  lightcone-separated from all but $K$ of other qubits.
Consider an $n$-qubit state $|\psi\ra=U|0^n\ra$ and let $P(x)=|\la x|\psi\ra|^2$ be the output distribution of $U$. 
As discussed above, we can assume wlog that $\EE_{x\sim P}(x_j)\le 1/2$ for all qubits $j$.
Let 
\[
P_{max}=\max_{x\in \{0,1\}^n} P(x).
\]
We assume that $P_{max}$ is sufficiently large so that 
\be
\label{peakness_threshold}
\epsilon \le 3n^{1/2} P_{max}^{9/50}.
\ee
Below we describe classical algorithms that, if they succeed, (a) output a description of an $n$-qubit state $\phi$ such that $|\langle \phi|\psi\rangle|^2\geq 1-\epsilon^2/4$ and (b) sample an $n$-bit string  $x$ from a distribution $P'(x)$ such that $\|P-P'\|_1\le \epsilon$.
If the algorithms do not succeed they output an error flag. The algorithms are guaranteed to succeed if $P_{max}$ is sufficiently large so that Eq.~(\ref{peakness_threshold}) holds. 
 The algorithms have a quasipolynomial runtime 
and space complexity which scale as
\be
\label{runtime_sampling_peaked}
T(n,\epsilon)=n^{O(\log{n} + \log{(1/\epsilon)})}.
\ee
Define
\be
\label{epsilon_vs_delta}
\delta =\frac{\epsilon^4}{144 n^2}.
\ee
Suppose Eq.~(\ref{peakness_threshold}) holds.
Then
$\delta\le P_{max}^{18/25}$. In this case all conditions of
Lemma~\ref{lemma:concentration} are satisfied and we get
\be
\label{overlap_low_weight}
\sum_{x\, : \, w(x)\le W} P(x) \ge 1-\delta, 
\ee
where $w(x)=\sum_{j=1}^n x_j$ is the Hamming weight and
\be
\label{weight_cutoff}
W= (25/6)(K+1) \log{(1/\delta)}.
\ee
The state $\psi$ can be described as the unique  largest eigenvector with the eigenvalue one of a local parent Hamiltonian 
\be
H = \frac1n\sum_{j=1}^n U |0\ra\la 0|_j U^\dag.
\ee
Here $|0\ra\la 0|_j$ is the projector $|0\ra\la 0|$ acting on the $j$-th qubit tensored with the identity on all remaining qubits.
Let $\lambda_r(H)$ be the $r$-th largest eigenvalue of $H$. 
Since $H$ has the same spectrum as a diagonal Hamiltonian $(1/n)\sum_{j=1}^n |0\ra\la 0|_j$,
we have
\be
\lambda_1(H)=1 \quad \mbox{and} \quad \lambda_2(H)=1-1/n.
\ee
\begin{lemma}
\label{lemma:truncation}
Let $\Pi$ be a projector onto the Hamming ball of radius $W$ centered at $0^n$,
\be
\Pi = \sum_{x\, : \, w(x)\le W} |x\ra\la x|,
\ee
where $W$ is the cutoff value defined in Eq.~(\ref{weight_cutoff}).
Consider a projected parent Hamiltonian  $G = \Pi H \Pi$ and let $\phi$ be an eigenvector of $G$ with the largest
eigenvalue $\lambda_1(G)$. If this eigenvalue satisfies
\begin{equation}
\lambda_1(G)\geq 1-3\sqrt{\delta},
\label{eq:lambda1G}
\end{equation}
then we have
\be
|\la \phi |\psi\ra|^2 \ge  1-3n\sqrt{\delta}.
\label{eq:stateapprox}
\ee
Moreover, Eq.~\eqref{eq:lambda1G} holds whenever Eq.~(\ref{peakness_threshold})
does.
\label{lem:projectedG}
\end{lemma}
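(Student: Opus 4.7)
The plan is to reduce both claims to a single variational inequality built from the exact parent Hamiltonian $H$. The key structural fact is that $|\psi\ra$ is the unique top eigenvector of $H$ with eigenvalue $1$, and every other eigenvalue is at most $1-1/n$; equivalently,
\be
H \succeq |\psi\ra\la\psi| \quad \text{and} \quad H - (1-1/n) I \preceq (1/n)|\psi\ra\la \psi|. \nn
\ee
I will use the lower inequality to show that $\lambda_1(G)$ is large whenever $\Pi|\psi\ra$ has large norm, and the upper inequality to convert a large value of $\lambda_1(G)$ back into overlap with $|\psi\ra$.

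For the main implication, suppose $\lambda_1(G)\ge 1-3\sqrt{\delta}$ and let $|\phi\ra$ be a normalized eigenvector of $G$ with this eigenvalue. Since $\phi$ lies in the range of $\Pi$, we have $\la \phi|G|\phi\ra = \la \phi|H|\phi\ra$. Writing $|\phi\ra = \alpha|\psi\ra + \beta|\psi^\perp\ra$ with $|\psi^\perp\ra$ a unit vector orthogonal to $|\psi\ra$, the spectral bound $H \preceq |\psi\ra\la\psi| + (1-1/n)(I-|\psi\ra\la\psi|)$ yields
\be
1-3\sqrt{\delta} \le \la \phi|H|\phi\ra \le |\alpha|^2 + |\beta|^2(1-1/n) = 1 - |\beta|^2/n, \nn
\ee
so $|\la \phi|\psi\ra|^2 = |\alpha|^2 \ge 1 - 3n\sqrt{\delta}$, as desired.

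For the ``moreover'' part, I first check that Eq.~(\ref{peakness_threshold}) combined with the definition $\delta = \epsilon^4/(144 n^2)$ gives $\delta \le (81/144)\, P_{max}^{18/25} \le P_{max}^{18/25}$, so all hypotheses of Lemma~\ref{lemma:concentration} are met and Eq.~(\ref{overlap_low_weight}) holds, i.e. $\la \psi|\Pi|\psi\ra \ge 1-\delta$. Then I apply the variational principle for $G$ to the test vector $|v\ra := \Pi|\psi\ra$, which lies in the range of $\Pi$. Using $H \succeq |\psi\ra\la\psi|$, we get $\Pi H \Pi \succeq \Pi|\psi\ra\la\psi|\Pi$, hence
\be
\lambda_1(G) \ge \frac{\la v|H|v\ra}{\la v|v\ra} = \frac{\la \psi|\Pi H \Pi|\psi\ra}{\la\psi|\Pi|\psi\ra} \ge \frac{\la\psi|\Pi|\psi\ra^{2}}{\la\psi|\Pi|\psi\ra} = \la\psi|\Pi|\psi\ra \ge 1-\delta. \nn
\ee
Since $\delta < 1$ we have $1-\delta \ge 1-3\sqrt{\delta}$, which establishes Eq.~(\ref{eq:lambda1G}).

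I expect the only subtle step to be packaging the second claim so that a weak hypothesis (``the restriction of $P$ to the Hamming ball has mass $\ge 1-\delta$'') produces a quantitative spectral statement about $G$; the $H \succeq |\psi\ra\la\psi|$ trick handles this in one line, but it is the place where the argument depends crucially on the fact that $|\psi\ra$ is an exact rather than approximate eigenvector of $H$ with eigenvalue $1$. Everything else is bookkeeping: verifying that $\delta \le P_{max}^{18/25}$ to invoke Lemma~\ref{lemma:concentration}, and the spectral gap $1/n$ of $H$ to convert the eigenvalue lower bound on $G$ into the final overlap bound $1-3n\sqrt{\delta}$.
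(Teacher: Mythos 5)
Your proof of the main implication (from $\lambda_1(G)\ge 1-3\sqrt{\delta}$ to $|\la\phi|\psi\ra|^2\ge 1-3n\sqrt{\delta}$) is essentially identical to the paper's: decompose $\phi$ against $\psi$, use $\Pi\phi=\phi$ to pass from $G$ to $H$, and use the spectral gap $1/n$ of $H$. The ``moreover'' part, however, takes a genuinely different and cleaner route. The paper decomposes $\psi=\psi_0+\psi_1$ with $\Pi\psi_1=\psi_1$, $\Pi\psi_0=0$, expands $\la\psi|G|\psi\ra=\la\psi_1|H|\psi_1\ra$, and bounds the cross term $2\mathrm{Re}\la\psi_1|H|\psi_0\ra$ by $2\|\psi_0\|\le 2\sqrt{\delta}$, arriving at $\lambda_1(G)\ge 1-\delta-2\sqrt{\delta}\ge 1-3\sqrt{\delta}$. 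You instead apply the Rayleigh quotient to the unnormalized test vector $v=\Pi\psi$ and use $H\succeq|\psi\ra\la\psi|$ to get $\lambda_1(G)\ge\la\psi|\Pi|\psi\ra^2/\la\psi|\Pi|\psi\ra=\la\psi|\Pi|\psi\ra\ge 1-\delta$, which dominates $1-3\sqrt{\delta}$. Your version avoids the cross-term estimate entirely and yields the sharper lower bound $1-\delta$; the only thing the paper's version ``buys'' is that it uses $\la\psi|G|\psi\ra$ directly without normalizing by $\|\Pi\psi\|^2$, but since you correctly divide by $\la v|v\ra$ this is a wash. Your arithmetic check $\delta=\epsilon^4/(144n^2)\le(81/144)P_{max}^{18/25}$ from Eq.~(\ref{peakness_threshold}) is also correct, so the hypotheses of Lemma~\ref{lemma:concentration} are verified exactly as in the paper.
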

\begin{proof}
Write
$\psi = \psi_0+\psi_1$ for some vectors $\psi_0,\psi_1$ such that 
$\Pi \psi_1=\psi_1$ and $\Pi\psi_0=0$.
Then
\begin{align*}
\lambda_1(G)\ge & \la \psi|G|\psi\ra =  \la \psi_1|H|\psi_1\ra
=\la \psi|H|\psi\ra - \la \psi_0|H|\psi_0\ra - 2\mathrm{Re}(\la \psi_1|H|\psi_0\ra)\\
\ge & 1- \|\psi_0\|^2 - 2\|\psi_0\|
\end{align*}
If Eq.~(\ref{peakness_threshold}) holds then from Eq.~(\ref{overlap_low_weight}) one gets $\|\psi_1\|^2 =\la \psi|\Pi|\psi\ra \ge 1-\delta$ and
$\|\psi_0\|^2  = 1-\|\psi_1\|^2 \le \delta$. Therefore Eq.~(\ref{peakness_threshold}) implies
\be
\label{largest_eig_upper}
\lambda_1(G)\ge 1-\delta - 2\sqrt{\delta} \ge 1-3\sqrt{\delta},
\ee
assuming that $0\le \delta \le 1$.  

To complete the proof we now show that Eq.~\eqref{largest_eig_upper} implies Eq.~\eqref{eq:stateapprox}. Write
\be
H = |\psi\ra\la \psi | + H^\perp
\ee
where
\be
\label{Hperp}
H^\perp \psi=0 \quad \mbox{and} \quad 
0\le H^\perp \le (1-1/n) I.
\ee
Let $\phi$ be an eigenvector of $G$
with the eigenvalue $\lambda_1(G)$.
It can be written as
$\phi = \alpha \psi + \psi^\perp$
where $\psi^\perp$ is some vector orthogonal to $\psi$. We can assume wlog that $\alpha\ge 0$.
Normalization of $\phi$ gives $\|\psi^\perp\|^2 = 1-\alpha^2$.
Clearly, $\Pi \phi = \phi$. Thus
\be
\label{largest_eig_lower}
\lambda_1(G)=\la \phi|G|\phi\ra = \la \phi|H|\phi\ra = \alpha^2 + \la \phi|H^\perp|\phi\ra
=  \alpha^2+\la \psi^\perp |H^\perp |\psi^\perp\ra
\le \alpha^2 + (1-\alpha^2) (1-1/n).
\ee
Here the fourth equality and the last inequality use Eq.~(\ref{Hperp}).
From Eqs.~(\ref{largest_eig_upper},\ref{largest_eig_lower}) one gets
\be
1- 3\sqrt{\delta} \le \lambda_1(G)\le  \alpha^2 + (1-\alpha^2) (1-1/n)
\ee
which implies
\be 
|\la \phi|\psi\ra|^2 = \alpha^2 \ge 1-3n\sqrt{\delta}.
\ee
\end{proof}

Now let us describe our algorithms. The first step is to compute the largest eigenvalue $\lambda_1(G)$ and an eigenvector $\phi$ of $G$ with this eigenvalue. To this end, consider a Hilbert space $\calH$ whose basis vectors are $n$-bit strings with the Hamming weight at most $W$.
Here $W$ is the cutoff value defined in Eq.~(\ref{weight_cutoff}). We have
\be
D=\mathrm{dim}(\calH) = \sum_{j=0}^W {n\choose j} \le n^W = n^{O(\log{(1/\delta)})} = n^{O(\log{n} + \log{(1/\epsilon)})}.
\label{eq:D}
\ee
To compute $\lambda_1(G)$ and an eigenvector of $G=\Pi H \Pi$ with the largest eigenvalue we can restrict $G$ onto $\calH$.
Any matrix element of $G$ in the chosen basis of $\calH$ coincides with some matrix element of $H$
in the standard basis. Since $H$ is a local Hamiltonian with $n$ local terms, any matrix element of $H$ can be computed
in time $O(n)$. Now one can use standard linear algebra algorithms
 to compute an eigenvector of $G$ with the
largest eigenvalue in time $O(nD^3)$.

The algorithm for approximating the output state proceeds as follows. If $\lambda_1(G)$ satisfies Eq.~\eqref{largest_eig_upper} then we output the state $\phi$; otherwise we output an error flag. By Lemma \ref{lem:projectedG}, if we output the state $\phi$,  it satisfies 
\be
\label{psi_phi_overlap}
|\langle\phi|\psi\rangle|^2\geq 1-3n\sqrt{\delta}\geq 1-\epsilon^2/4.
\ee
To approximately sample from the output distribution,  we first construct the above approximation $\phi$ to the output state (or output an error flag). If this step succeeds, then we continue by computing all probabilities $P'(x)=|\la x|\phi\ra|^2$ and sample $P'(x)$ in time $O(D)$. 

The standard inequality between trace distance and fidelity gives
\be
\| P -P'\|_1
=\sum_x |P(x) - P'(x)|
\le 2\sqrt{1-|\la \phi|\psi\ra|^2} 
\le \epsilon.
\ee
Here the last equality uses  Eq.~(\ref{psi_phi_overlap}).
The above argument proves a slightly stronger version of Theorem~\ref{thm:mainstate} 
which we state below.

\begin{theorem}
\label{thm:mainstate_generalized}
There exists a classical algorithm which takes as input an $n$-qubit constant-depth circuit $U$ and a precision parameter $\epsilon>0$. If the algorithm succeeds then it outputs a classical description of an $n$-qubit state $|\phi\rangle$ specified by its 
$n^{O(\log{n} + \log{(1/\epsilon)})}$
nonzero amplitudes in the computational basis such that $|\langle \phi|U|0^n\rangle|^2\geq 1-\epsilon^2/4$; otherwise it outputs an error flag. The algorithm is guaranteed to succeed if $U$ is sufficiently peaked such that\ $\epsilon \le 3n^{1/2} P_{max}^{9/50}$, where
\[
P_{max}=\max_{x\in \{0,1\}^n} |\la x|U|0^n\ra|^2.
\]
 The runtime of the algorithm is $T(n,\epsilon)=n^{O(\log{n} + \log{(1/\epsilon)})}$.
\label{thm:mainstate_generalized}
\end{theorem}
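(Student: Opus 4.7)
The plan is to realize the theorem as an algorithm that combines Lemmas \ref{lemma:concentration} and \ref{lem:projectedG} with a dense diagonalization step restricted to the low-Hamming-weight subspace, and then to calibrate all parameters so that the dependence on $\epsilon$ is explicit rather than buried inside a polynomial rate. First I would preprocess $U$ by computing each single-qubit marginal $\mathbb{E}_{x\sim P}[x_j]$, which is a constant-size computation inside the lightcone of qubit $j$, and, wherever this marginal exceeds $1/2$, absorb an $X$ gate on qubit $j$ into the circuit and flip the corresponding coordinate of any output string. This relabeling preserves the depth of $U$ and places me in the regime where Lemma \ref{lemma:concentration} applies with $K = 2^{2d} = O(1)$.

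Next I would fix $\delta := \epsilon^4/(144 n^2)$, chosen so that the final fidelity loss $3n\sqrt{\delta}$ equals $\epsilon^2/4$. Under the hypothesis $\epsilon \le 3 n^{1/2} P_{\max}^{9/50}$, a direct algebraic substitution shows $\delta \le P_{\max}^{18/25}$, which is exactly the condition needed to invoke Lemma \ref{lemma:concentration}; that lemma then says the output distribution places mass at least $1 - \delta$ on the Hamming ball $B_W$ of radius $W = (25/6)(K+1)\log(1/\delta) = O(\log n + \log(1/\epsilon))$ about $0^n$. I now form the local parent Hamiltonian $H = (1/n)\sum_{j=1}^n U|0\rangle\langle 0|_j U^\dagger$, whose unique top eigenvector is $|\psi\rangle = U|0^n\rangle$ with eigenvalue $1$ and with spectral gap $1/n$, and I restrict it to the span $\mathcal{H}$ of basis states in $B_W$, a subspace of dimension $D \le n^W = n^{O(\log n + \log(1/\epsilon))}$. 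Each matrix element of $G := \Pi H \Pi$ in the standard basis is a matrix element of one of the $n$ local terms and is computable in $O(n)$ time from the gate sequence, so dense symmetric eigensolver routines yield $\lambda_1(G)$ and an associated eigenvector $|\phi\rangle$ in time $O(nD^3)$, matching the claimed quasipolynomial bound.

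The final step is a success test: output $|\phi\rangle$ if $\lambda_1(G) \ge 1 - 3\sqrt{\delta}$, otherwise return the error flag. Correctness in the success case is immediate from Lemma \ref{lem:projectedG}, which upgrades the eigenvalue bound into $|\langle\phi|\psi\rangle|^2 \ge 1 - 3n\sqrt{\delta} = 1 - \epsilon^2/4$; the same lemma also certifies that the test does trigger whenever the peakedness hypothesis holds, via the chain $\|\Pi|\psi\rangle\|^2 \ge 1 - \delta$ implying $\lambda_1(G) \ge 1 - \delta - 2\sqrt{\delta} \ge 1 - 3\sqrt{\delta}$. The main obstacle, if it deserves the name, is the parameter bookkeeping --- keeping $\delta$, $W$, and $D$ simultaneously compatible with the peakedness hypothesis on one side and the target runtime $n^{O(\log n + \log(1/\epsilon))}$ on the other --- together with verifying that the peakedness condition as stated in the theorem is equivalent to $\delta \le P_{\max}^{18/25}$ under the chosen substitution. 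Conceptually, Lemmas \ref{lemma:concentration} and \ref{lem:projectedG} already do all the heavy lifting, so no new analytic estimates are needed beyond these algebraic checks.
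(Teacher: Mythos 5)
Your proposal is correct and follows essentially the same route as the paper: the same choice $\delta = \epsilon^4/(144n^2)$, the same invocation of Lemma~\ref{lemma:concentration} to get a Hamming ball of radius $W = O(\log n + \log(1/\epsilon))$, the same projected parent Hamiltonian $G = \Pi H \Pi$ with the success test $\lambda_1(G) \ge 1 - 3\sqrt{\delta}$ via Lemma~\ref{lem:projectedG}, and the same $O(nD^3)$ eigensolver step. The algebraic checks you flag (that the peakedness hypothesis implies $\delta \le P_{\max}^{18/25}$ and that $3n\sqrt{\delta} = \epsilon^2/4$) do go through exactly as you anticipate.
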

Theorem~\ref{thm:mainstate} is obtained as a special case of
Theorem~\ref{thm:mainstate_generalized} with $P_{max}=1/n^a$.
We shall use Theorem~\ref{thm:mainstate_generalized}  later on in Section~\ref{sec:kdim}.

\subsection{Software implementation}
\label{sec:numerics}

\begin{figure}[t]
\subfloat{\includegraphics[scale=0.45]{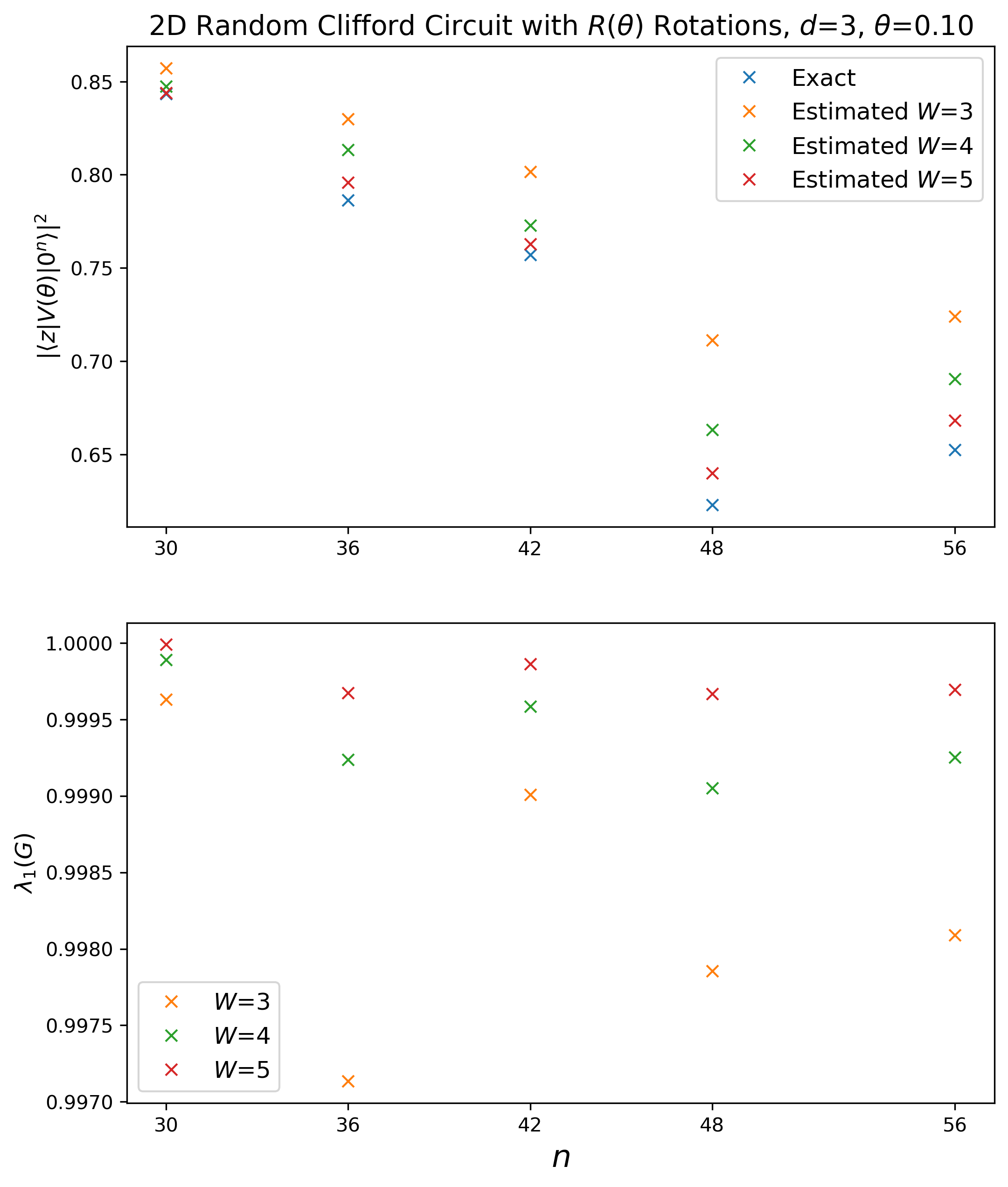}}
\qquad
\subfloat{\includegraphics[scale=0.45]{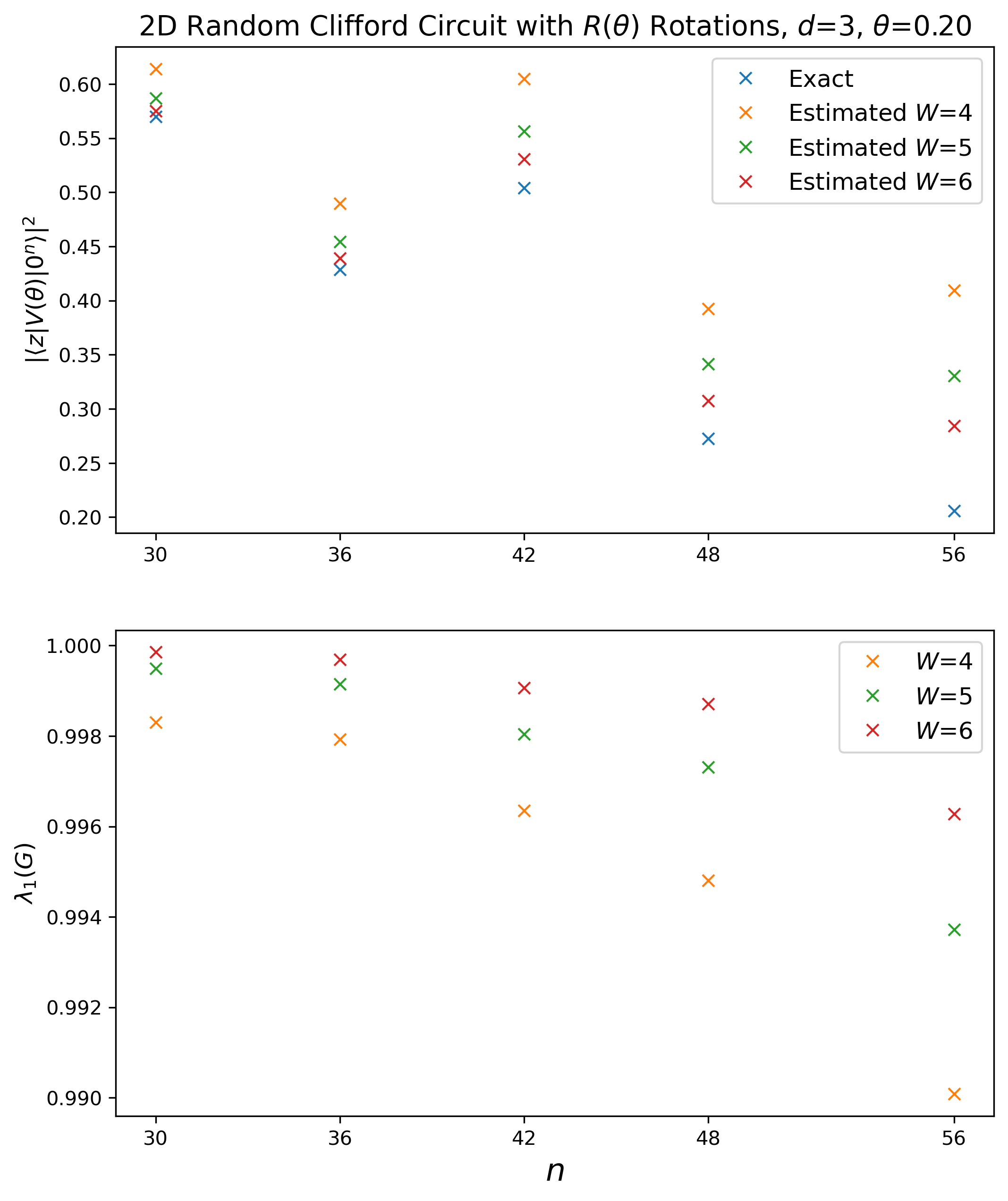}}
\caption{Numerical simulation of 2D random Clifford circuits interspersed with $R(\theta)$ rotations; the number of qubits $n$ ranges from $30$ to $56$, and all simulated $U(\theta)$ circuits have depth $d=3$ (so the $V(\theta)$'s all have depth $5$); $\lambda_1(G)$ is the largest eigenvalue of the $G$ matrix defined in Lemma \ref{lemma:truncation}, and $W$ is the radius of Hamming ball centered at $z$; the algorithm is self-certifying since the simulation error is inversely proportional to $\lambda_1(G)$.}
\label{fig:numerics}
\end{figure}

Here we investigate the practical performance of the peaked shallow circuit simulation algorithm from the previous section. In particular, we implement our peaked constant-depth algorithm to simulate $n$-qubit 2D random Clifford circuits interspersed with $Z$-rotations. We consider quantum circuits $U(\theta)$ composed of $2$-qubit gates arranged in a 2D open boundary condition brickwork circuit architecture. Every gate of $U(\theta)$ is a uniformly random $2$-qubit Clifford gate followed by single qubit gates $R(\theta)\otimes R(\theta)$ where
\begin{equation}
R(\theta)=\begin{bmatrix}e^{-i\frac{\theta}{2}}&0\\0&e^{i\frac{\theta}{2}}\end{bmatrix}.
\end{equation}
Define $V(\theta)=U^\dagger(\theta)Z^{\otimes n}U(\theta)$. Since $U(0)$ is a Clifford circuit, $V(0)$ is an $n$-qubit Pauli operator, so there exists $z\in\{0,1\}^n$ such that $|\langle z|V(0)|0^n\rangle|=1$. Hence, for small values of $\theta$, $V(\theta)$ will be a peaked circuit (whose peak is $z$), and $V(\theta)$ becomes less peaked as $\theta$ increases. We implement our algorithm to simulate instances of $V(\theta)$ for $\theta\in\{0.1,0.2\}$, $n$ ranging from $30$ to $56$, and using increasing values of Hamming ball radius $W$. 
The algorithm outputs an estimate $p=P'(z)$ of the probability $|\langle z|V(\theta)|0^n\rangle|^2$.
Recall that $P'(z)=|\la z|\phi\ra|^2$, where $\phi$
is the state approximating $V(\theta)|0^n\ra$ constructed in Lemma~\ref{lemma:truncation}. Let $G$ be the projected Hamiltonian defined in Lemma~\ref{lemma:truncation}. 
Numerical results from Fig. \ref{fig:numerics} confirm that the algorithm's simulation error is controlled by $\lambda_1(G)$: the closer $\lambda_1(G)$ is to $1$, the closer the approximate values of $|\langle z|V(\theta)|0^n\rangle|^2$ outputted by the algorithm are to the exact values. Indeed, the proof of Lemma \ref{lemma:truncation} implies the upper bound
\begin{equation}
|p - |\langle z|V(\theta)|0^n\rangle|^2|
=|P'(z)-P(z)|
\le
\lVert P'-P\rVert_1\leq 2\sqrt{n(1-\lambda_1(G))}.
\end{equation}

Although the algorithm enjoys favourable scaling in $n$, its scaling in $W$ makes it prohibitive to try $W\geq 7$ for simulating circuits with $n=56$. The observed simulation error is also very sensitive to how peaked the circuits are; comparing to the results for $\theta=0.1$, the error margins widen for $\theta=0.2$ where the circuits being simulated are not as peaked. Nonetheless, the value of $\lambda_1(G)$ can still be used to gauge simulation accuracy. In addition, it is worth noting that even for the largest $n=56$ circuits considered here, the amplitude $\langle z|V(\theta)|0^n\rangle$ (including the sign) can be computed exactly using off-the-shelf tensor network simulators in seconds \cite{gray2018quimb}. However, our algorithm finds at once a whole distribution close to the exact one (i.e. an eigenvector corresponding to $\lambda_1(G)$), and it could be much slower to use tensor network based methods to construct a distribution-encoding vector one entry at a time. Therefore, it may be advantageous to adopt our algorithm for applications which require full distributional information (for example, approximate sampling or computation of entanglement measures).

\section{Geometrically local peaked shallow circuits \label{sec:kdim}} 
Let $U$ be a depth $d=O(1)$ circuit in a $D$-dimensional geometry with $n$ qubits arranged in a $n^{1/D}\times n^{1/D}\times\ldots n^{1/D}$ grid. Define the output state $|\psi\rangle=U|0^n\rangle$ and output distribution
\[
P(x)=|\langle x|\psi\rangle|^2.
\]
As in the previous section, we shall assume without loss of generality that 
\begin{equation}
\mathbb{E}_{x\sim P}[x_j] \leq 1/2,
\label{eq:mj}
\end{equation}
where $x_j$ is the $j$th bit of $x$. We shall be interested in the case where the circuit is $a$-peaked for a given $a\in \mathbb{Z}_{\geq 0}$, i.e., 
\begin{equation}
P(h)\geq n^{-a}
\label{eq:suppose}
\end{equation}
for some $h\in \{0,1\}^n$. We do not assume that $h$ is given. However, Eqs.~(\ref{eq:mj},\ref{eq:suppose}) along with Lemma \ref{lemma:concentration} (setting $\delta=n^{-2a}$) imply that the Hamming weight $w(h)\equiv \sum_{j=1}^{n}h_j$ satisfies
\begin{equation}
w(h)\leq \frac{50a}{6}(K+1)\log(n).
\label{eq:weightofh}
\end{equation}
where $K=O(1)$ is a known constant. Below we give a classical sampling algorithm with the following runtime.

\begin{theorem*}[Thm.~\ref{thm:main2}, restated]
Suppose we are given a $D$-dimensional geometrically local, shallow quantum circuit $U$ as described above,  a positive integer $a\in \mathbb{Z}_{\geq 0}$, and $\epsilon= \Omega(1/\mathrm{poly}(n))$. There exists a classical algorithm which either returns a sample from a distribution $P'$ over $n$-bit strings such that $\|P'-P\|_1\leq \epsilon$, or outputs an error flag. The algorithm always returns a sample if Eq.~\eqref{eq:suppose} holds. The runtime of the algorithm is upper bounded as
\begin{equation}
\begin{cases}
O(\mathrm{poly}(n)) & \text{ if } D=2\\
n^{O(\log \log{n})} & \text{ if } D>2.\\
\end{cases}
\end{equation} 
\end{theorem*}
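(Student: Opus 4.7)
The plan is to follow the dissection strategy sketched in the Introduction. I would first invoke Lemma~\ref{lemma:concentration} with $\delta=n^{-2a}$ to deduce that every peak $h$ of $P$ has Hamming weight $O(\log n)$ and, more importantly, that $\sum_{j=1}^n \mathbb{E}_{x\sim P}[x_j] = O(\log n)$. Each marginal $\mathbb{E}[x_j]$ can be computed in $O(1)$ time by restricting $U$ to the constant-size lightcone of qubit $j$, so all marginals are available in $O(n)$ time. For the 2D case, partition the $\sqrt{n}\times\sqrt{n}$ grid into vertical strips of width $2d+1$ (so that non-adjacent strips are lightcone-separated) and call a strip $S$ \emph{heavy} if $\sum_{j\in S}\mathbb{E}[x_j]\leq 1/100$; by a union bound this implies $\Pr_{x\sim P}[x|_S=0^{|S|}]\geq 0.99$. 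Since the total marginal mass is $O(\log n)$, at most $O(\log n)$ strips are non-heavy, so if we partition the grid into macro-blocks of $\Theta(\log n)$ consecutive strips, each macro-block contains at least one heavy strip. The algorithm locates one such strip per macro-block; if it fails to find one in some block it outputs the error flag. This yields $T=\Theta(\sqrt n/\log n)$ heavy slices $H_1,\ldots,H_T$ with consecutive slices separated by at most $O(\log n)$ columns.

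Given the heavy slices, let $\sigma'$ be the state obtained from $|\psi\rangle\langle\psi|$ by tracing out every $H_i$ and replacing it with $|0\rangle\langle 0|^{\otimes|H_i|}$. Because removing the heavy slices disconnects the lightcones, $\sigma'$ factorizes as a tensor product $\sigma^1\otimes\cdots\otimes\sigma^T$ where each $\sigma^i$ is the reduced density matrix of $|\psi\rangle$ on the strip of width $O(\log n)$ between two heavy slices, padded with ancillary $|0\rangle$ states. Each $\sigma^i$ admits an MPS description with bond dimension $2^{O(\log n)}=\mathrm{poly}(n)$, computable in $\mathrm{poly}(n)$ time from the lightcones. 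The approximation $\sigma'$ is too crude: the per-slice swap error is $\leq 0.01$, but summing across all $T$ slices would give $\Omega(1)$ total error. To repair this I would use an inclusion--exclusion identity that writes $|\psi\rangle\langle\psi|$ (restricted to the event that all heavy slices measure to zero) as a signed sum over subsets $R\subseteq[T]$ of ``corrected'' tensor products $\rho_R$, so that each additional correction projector multiplies the contribution of a deviating branch by $\leq 0.01$. The resulting pseudomixture $\rho = \sum_R c_R\,\rho_R$ then satisfies $\|\rho-|\psi\rangle\langle\psi|\|_1\leq\epsilon$ even though it contains exponentially many terms.

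To sample from the output distribution of $\rho$, I would apply the sampling-to-marginal reduction of~\cite{sparse}, which reduces $\ell_1$-approximate sampling to computing marginals $\mathrm{Tr}(\rho |x\rangle\langle x|_S)$ for $S\subseteq[n]$ and $x\in\{0,1\}^{|S|}$. Because each $\rho_R$ factorizes across slices, the marginal factorizes as $\mathrm{Tr}(\rho_R|x\rangle\langle x|_S)=\prod_i m_i^R(x,S)$, and the exponential sum over $R$ decouples into a product $\prod_i(\alpha_i+\beta_i)$ of per-slice contributions, each of which is computed by a $\mathrm{poly}(n)$-size matrix-vector product along the MPS representation of $\sigma^i$. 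Organized as a transfer-matrix-style ``outer'' product along the chain of slices, this evaluates a single marginal in $\mathrm{poly}(n)$ time, which together with the sampling reduction gives the $\mathrm{poly}(n)$ runtime claimed for $D=2$.

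For $D>2$, the same dissection is applied inductively: carve out $(D-1)$-dimensional heavy hyperplane slices, leaving blocks of extent $O(\log n)$ in the carved direction, each of which is itself a $(D-1)$-dimensional peaked shallow circuit on $\tilde O(n^{(D-1)/D})$ qubits to which one recurses. Balancing the per-level overhead $\mathrm{poly}(n)$ against the shrinking qubit count along the recursion yields total runtime $n^{O(\log\log n)}$. The main obstacles are: (i) constructing the inclusion--exclusion pseudomixture so that the geometric suppression per correction ($\leq 0.01$ each) dominates the combinatorial explosion in the number of terms, giving a controlled $\ell_1$ error, and (ii) organizing the marginal evaluation so that the exponentially many $R$-terms collapse into a transfer-matrix product of polynomially sized factors. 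For $D>2$ there is the additional issue of showing that each recursive sub-problem remains sufficiently peaked (with a possibly degraded but still inverse-polynomial peakedness parameter) so that Lemma~\ref{lemma:concentration} and the heavy-slice existence argument continue to apply at every level of the recursion.
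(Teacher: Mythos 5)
The high-level plan — heavy slices, a tensor-product ``swap-out'' state, inclusion--exclusion correction, transfer-matrix evaluation of marginals, and the robust sampling-to-marginal reduction of~\cite{sparse} — matches the paper's strategy. Your argument for certifying heavy slices via the total marginal mass $\sum_j \mathbb{E}_{x\sim P}[x_j] = O(\log n)$ and a union bound is a genuinely different (and simpler) route than the paper's Lemma~\ref{lem:lotsofslices}, which works region-by-region: it lower-bounds a product of measurement probabilities over lightcone-separated candidates by $P(h)\ge n^{-a}$, takes logs, and applies Markov's inequality to conclude that at least half of $\Theta(\log n)$ candidates in each width-$\Theta(d\log n)$ region are heavy.

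However, there is a genuine gap in your inclusion--exclusion error analysis. You locate only \emph{one} heavy slice per macro-block, so $T=\Theta(\sqrt{n}/\log n)$ slices in total. Writing each measurement superoperator on a heavy slice $s$ as $M_s = E_s + (M_s - E_s)$ with $\|(M_s-E_s)(\rho)\|_1 \le 0.02$ and expanding the product over all $T$ slices, the deviation of your pseudomixture $\prod_j E_{H_j}(\rho)$ from $\prod_j M_{H_j}(\rho)$ is bounded only by
\[
\sum_{\emptyset\ne R\subseteq[T]} 0.02^{|R|} = (1.02)^T - 1,
\]
which diverges as $n\to\infty$. A $0.01$ suppression per correction is simply not enough to beat the $\binom{T}{k}$ combinatorics. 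The paper's fix is to find $\ell=\Theta(\log n)$ pairwise lightcone-separated heavy slices \emph{within each} region $H^j$ and apply inclusion--exclusion to the whole group at once: the per-region error operator $B^j=\prod_{i=1}^\ell(M_{s^j_i}-E_{s^j_i})$ then satisfies $\|B^j(\rho)\|_1\le 0.02^\ell=\mathrm{poly}(1/n)$, and $\sum_{\emptyset\ne S\subseteq[T]}0.02^{\ell|S|}\le T\cdot 0.02^\ell\cdot e^{T\cdot 0.02^\ell}\le\epsilon/2$ once $\ell\ge\log(4n^{1/D}/\epsilon)$ (Lemma~\ref{lem:1norm}). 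Your marginal-mass argument \emph{can} be repaired to yield $\Theta(\log n)$ heavy slices per block: since only $O(\log n)$ candidate strips can be non-heavy globally, a block containing $\Theta(\log n)$ lightcone-separated candidates, with a large enough constant, necessarily contains $\Theta(\log n)$ heavy ones even in the worst case where all non-heavy strips land in a single block. But as written, you take a single slice per block, and the pseudomixture you build from them does not in fact satisfy $\|\rho-|\psi\rangle\langle\psi|\|_1\le\epsilon$.

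A second, more minor discrepancy concerns $D>2$. The paper's induction is not over a $(D-1)$-dimensional circuit on $\tilde O(n^{(D-1)/D})$ qubits but over $(n,k,D)$-states — reduced density matrices of the original $n$-qubit output state on grids with $k$ long sides — decrementing $k$. The base case $k=0$ is a state on $O(\log^D n)$ qubits, handled by combining the peaked shallow purification Lemma~\ref{lem:peakedshallowpurification} with the quasipolynomial algorithm of Theorem~\ref{thm:mainstate_generalized} on $m=O(\log^D n)$ qubits, giving $m^{O(\log(1/\epsilon))}=n^{O(\log\log n)}$. Your concern that peakedness might degrade along the recursion does not arise in this formulation: Lemma~\ref{lem:lotsofslices} certifies heavy slices in every $(n,k,D)$-state directly from the single hypothesis $P(h)\ge n^{-a}$ on the full circuit, with no need to re-establish a peakedness parameter per level.
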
 
In particular, suppose we are given some $\epsilon=\Omega(1/\mathrm{poly}(n))$. Let $\mathcal{M}(\cdot)$ be the CPTP map corresponding to measuring all qubits in the computational basis. 

\begin{theorem}
If Eq.~\eqref{eq:suppose} holds then there exists a Hermitian matrix  $K$ satisfying
\begin{equation}
\|\mathcal{M}(K)-\mathcal{M}(|\psi\rangle\langle \psi|)\|_1=\sum_{y\in \{0,1\}^n} |\langle y|K|y\rangle-|\langle y|\psi\rangle^2| \leq \epsilon,
\label{eq:rhoerror}
\end{equation}
and such that there exists a classical algorithm to exactly compute ``marginals" $\mathrm{Tr}(K |x\rangle \langle x|_S\otimes I)$ for any 
 $S\subseteq [n]$ and $x\in \{0,1\}^{|S|}$. The runtime of the algorithm is $O(\mathrm{poly}(n))$ if $D=2$ and $n^{O(\log \log{n})}$ if $D>2$. If the algorithm is run with input unitary $U$ that does not satisfy Eq.~\eqref{eq:suppose} it will either succeed, in which case Eq.~\eqref{eq:rhoerror} holds, or it will output an error flag.
\label{thm:main}
\end{theorem}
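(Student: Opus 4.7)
The plan is to follow the dissection strategy sketched in the introduction, guided by the Hamming-weight concentration of Lemma~\ref{lemma:concentration}. First I would find heavy slices. Fix a coordinate axis and consider candidate slabs of width $w\ge 2r$ perpendicular to it, where $r=O(1)$ is the backwards-lightcone radius of $U$. Tracing out any such slab leaves the regions on its two sides lightcone-separated, hence in a product state by Fact~\ref{fact:indep}. Applying Lemma~\ref{lemma:concentration} to the output distribution $P$ with $\delta=\mathrm{poly}(1/n)$ shows that typical output strings have Hamming weight $O(\log n)$. A simple averaging/Markov argument then selects heavy slices $T_1,\dots,T_s$ with $s=\Theta(n^{1/D})$, consecutive-slice gap at most $O(\log n)$, and each slice measured all-zero with probability at least $1-\eta$ for $\eta=\mathrm{poly}(1/n)$. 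Per-slice zero probabilities are themselves output probabilities of shallow circuits restricted to $(D-1)$-dimensional slabs, computed in polynomial time by MPS contraction when $D=2$ and recursively via this theorem when $D>2$.

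Next I exploit the lightcone geometry. Because the slice spacing $O(\log n)$ exceeds $r$, all slices are jointly lightcone-separated, so the joint slice-measurement distribution factorizes,
\[
q_{\vec z}\;=\;\prod_{i=1}^{s} q^i_{z_i},
\]
where $q_{\vec z}$ is the probability of observing slice pattern $\vec z=(z_1,\dots,z_s)$ in $|\psi\rangle=U|0^n\rangle$ and $q^i_{z_i}$ is the marginal distribution of slice $T_i$ alone. Moreover, for every $\vec z$ the projected (unnormalised) state factorises across the regions $R_0,\dots,R_s$ separated by the slices,
\[
\langle \vec z|_T\, U|0^n\rangle \;=\; \sqrt{q_{\vec z}}\;\bigotimes_{j=0}^{s}\bigl|\phi_j(z_j,z_{j+1})\bigr\rangle_{R_j},
\]
with the region vector $|\phi_j(z_j,z_{j+1})\rangle$ depending only on the constant-width buffer bits of the two adjacent slices (with $z_0,z_{s+1}$ empty). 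Both factorisations follow from careful bookkeeping of output-qubit lightcones once slice widths exceed the lightcone diameter. Now define the Hermitian operator
\[
K \;=\; \sum_{\vec z\,:\,w(\vec z)\le W}\, q_{\vec z}\;|\vec z\rangle\langle \vec z|_T \,\otimes\, \bigotimes_{j=0}^{s}\, |\phi_j(z_j,z_{j+1})\rangle\langle\phi_j(z_j,z_{j+1})|_{R_j},
\]
with $W=O(\log n)$ chosen via Lemma~\ref{lemma:concentration} so that the truncated tail contributes at most $\epsilon$. Because the full ($W{=}\infty$) sum equals the slice-dephased state $\mathcal M_T(|\psi\rangle\langle\psi|)$, it agrees with $|\psi\rangle\langle\psi|$ on the full computational-basis diagonal, so $\|\mathcal{M}(K)-\mathcal{M}(|\psi\rangle\langle\psi|)\|_1\le\epsilon$.

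Marginals $\mathrm{Tr}(K\,|x\rangle\langle x|_S\otimes I)$ are then computed by a left-to-right dynamic program over the slice sequence: the sum factorises into a product of per-slice factors $q^i_{z_i}\mathbf{1}[z_i|_{T_i\cap S}=x|_{T_i\cap S}]$ and per-region factors $\mu_j(z_j,z_{j+1};x|_{R_j\cap S})$, where $\mu_j$ is the probability that $|\phi_j(z_j,z_{j+1})\rangle$, measured in the computational basis, agrees with $x$ on $S\cap R_j$. Restricting to $w(\vec z)\le W$ confines each $z_i$ to at most $n^{O(1)}$ values, so the DP is a chain of polynomial-size matrix--vector multiplications. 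Each coefficient $\mu_j$ is a marginal of a pure state on an $O(\log n)\times n^{1/D}\times\cdots\times n^{1/D}$ slab: for $D=2$ this is quasi-one-dimensional and handled in polynomial time by MPS contraction, and for $D>2$ it is a $(D-1)$-dimensional subproblem handled recursively, yielding the $\mathrm{poly}(n)$ bound for $D=2$ and $n^{O(\log\log n)}$ for $D>2$. If at any stage the greedy slice selection fails to find enough heavy slices (as may happen when $U$ is not $a$-peaked), the algorithm aborts with the error flag.

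The main obstacle will be verifying the geometric factorisation of $\langle\vec z|_T\,U|0^n\rangle$ across the $R_j$'s with the dependence of each $|\phi_j\rangle$ on $\vec z$ confined to a constant-width buffer: this requires careful tracking of which output-qubit lightcones intersect which slice-input lightcones and is the step most sensitive to the precise slice-width choice. Once that structural decomposition is in hand, the error estimate for $K$ reduces to Lemma~\ref{lemma:concentration}, and the marginal computation is the MPS-style contraction described above whose recursion produces the claimed runtimes.
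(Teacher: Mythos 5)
Your error analysis is essentially fine — truncating the slice-dephased state to Hamming weight $O(\log n)$, justified by Lemma~\ref{lemma:concentration}, does give the bound in Eq.~\eqref{eq:rhoerror} — but the runtime analysis has a genuine gap, and it is exactly the gap that the paper's inclusion-exclusion machinery is designed to close. Your dynamic program sums over slice measurement outcomes $\vec z=(z_1,\dots,z_s)$, and the DP state at step $j$ must retain $z_j$ (not merely its weight) because $\mu_j$ depends on it. A single slice $T_j$ is a slab of width $O(d)$ spanning $\Theta(n^{(D-1)/D})$ qubits, so even restricting to $w(\vec z)\le W=O(\log n)$ leaves $\binom{\Theta(n^{(D-1)/D})}{O(\log n)}=n^{O(\log n)}$ possible values of $z_j$, not $n^{O(1)}$ as you assert. (The ``constant-width buffer'' of a slice still contains $\Theta(n^{(D-1)/D})$ qubits.) So your construction yields $n^{O(\log n)}$ for every $D\ge 2$, which is no improvement over Theorem~\ref{thm:main1} and in particular does not give the $\mathrm{poly}(n)$ bound claimed for $D=2$ nor $n^{O(\log\log n)}$ for $D>2$.

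There is a second, related overstatement: your Markov/averaging argument cannot produce heavy slices with failure probability $\eta=\mathrm{poly}(1/n)$ \emph{and} consecutive gap $O(\log n)$. To control the gap one must apply Markov within each width-$O(\log n)$ region $H^j$, where there are only $N=\Theta(\log n)$ lightcone-separated candidates; the bound $\frac{1}{N}\sum_i\log(1/p_i)\le a\log n/N$ then yields $p_i\ge 0.99$ (a constant), not $1-\mathrm{poly}(1/n)$. Conversely, if you take $\Theta(n^{1/D})$ candidates globally to push $\eta$ down, the few bad candidates can cluster and create a gap much larger than $O(\log n)$. This weak ($0.99$) guarantee is precisely why the paper cannot just trace out one heavy slice per gap and instead identifies $\ell=\Theta(\log n)$ heavy slices per region and combines them via the inclusion-exclusion operator $A^j=\sum_{\omega\neq\emptyset}(-1)^{|\omega|+1}\prod_{i\notin\omega}M_{s^j_i}\prod_{i\in\omega}E_{s^j_i}$ (Eq.~\eqref{eq:Adef}); the pseudomixture $\sigma=A^1\cdots A^T(\rho)$ then has 1-norm error decaying like $T\cdot 0.02^\ell$, which is $\mathrm{poly}(1/n)$ with $\ell=O(\log n)$. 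Crucially, this replaces your sum over slice outcomes $z_j$ (too many values) with a sum over the index $a_j\in[L]$, $L=2^\ell-1=\mathrm{poly}(n)$, recording which subset of heavy slices is traced out in region $j$. That is what makes the resulting matrix--vector chain polynomial-size. Your proposal also lacks the paper's base case for $D>2$: the subproblems on $O(\log^D n)$ qubits are handled not by brute force but via Lemma~\ref{lem:peakedshallowpurification} (peaked shallow purification) combined with Theorem~\ref{thm:mainstate_generalized}, which is what yields the $n^{O(\log\log n)}$ runtime.
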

Theorem \ref{thm:main2} follows  from Theorem \ref{thm:main} along with the fact that the standard linear-time reduction from sampling to computing marginals (``bit-by-bit") is robust to error in the 1-norm, even when the approximating vector can take negative values, as shown in Ref.~\cite{sparse}.

\begin{lemma}[Lemma 10 of Ref.~\cite{sparse}]
Suppose $p$ is a probability distribution over $n$-bit strings, $\delta=\Omega(1/\mathrm{poly}(n))$, and $p':\{0,1\}^n\rightarrow \mathbb{R}$ satisfies $\|p'-p\|_1\leq \delta<1$. Suppose further that  there is a classical algorithm with runtime $T(n)$ that, given $0\leq j\leq n$ and $x\in \{0,1\}^{j}$ exactly computes 
\[
\sum_{y\in \{0,1\}^{n-j}}p'(xy).
\]
Then there is a classical algorithm with runtime $nT(n)$ which samples from a probability distribution $q$ such that $\|q-p\|_1\leq 4\delta/(1-\delta)$.
\end{lemma}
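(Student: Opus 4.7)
The plan is to implement a bit-by-bit marginal sampler on $p'$ with clipping. For $j=1,\ldots,n$, having drawn $x_1,\ldots,x_{j-1}$, query the two partial marginals
\[
M_b := \sum_{y\in\{0,1\}^{n-j}} p'(x_{<j}\,b\,y), \qquad b\in\{0,1\},
\]
each in time $T(n)$; set $a_b:=\max(M_b,0)$ and sample $x_j=1$ with probability $a_1/(a_0+a_1)$ (declaring an error if $a_0+a_1=0$, a low-probability fallback). The total runtime is $O(nT(n))$ and the output follows some distribution $q$ on $\{0,1\}^n$. The task is then to show $\|q-p\|_1\le 4\delta/(1-\delta)$.

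The analysis proceeds via the triangle inequality $\|q-p\|_1 \le \|q - \hat p\|_1 + \|\hat p - p\|_1$, where $\hat p(x) := \max(p'(x),0)/Z$ with $Z := \sum_x \max(p'(x),0)$. The second term is bounded by $2\delta/(1-\delta)$: since $p(x)\ge 0$ pointwise, clipping only improves the estimate $|\max(p'(x),0)-p(x)|\le|p'(x)-p(x)|$, so $\|\max(p',0)-p\|_1\le\delta$ and $Z\in[1-\delta,1+\delta]$; a short renormalization calculation gives the bound.

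For the first term, observe that the sampler does \emph{not} sample exactly from $\hat p$: computing conditionals via $\max$ of marginals is not the same as marginalizing the clipped joint $\max(p'(x),0)$, because $\max$ does not commute with summation. Set up a maximal coupling of $X\sim q$ and $Y\sim \hat p$, process bits sequentially, and bound the first-disagreement probability at each level of the binary tree using a per-level conditional-TV estimate. Writing $P_b:=\hat p(x_{<j}b)$ and $e_b:=a_b/(a_0+a_1) - P_b/(P_0+P_1)$, the conditional TV at prefix $x_{<j}$ is bounded by the amount of negative mass of $p'$ in the subtree rooted at $x_{<j}$, divided by $a_0+a_1$; weighting by the reach probability $\le \hat p(x_{<j})$ under the coupling gives a per-prefix contribution bounded by the local negative mass of $p'$.

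The main obstacle is to combine these into an $n$-independent bound: a naive union bound over the $n$ tree levels would yield $n\delta/(1-\delta)$, which is too weak. The resolution is that the total discrepancy between $q$ and $\hat p$ is caused entirely by this non-commutativity of $\max$ and marginalization, and the total negative mass of $p'$ is globally at most $\delta$ (since $\sum_{x: p'(x)<0}|p'(x)| \le \|p'-p\|_1$). A careful accounting along the coupling path---tracking each signed entry $p'(x)-p(x)$ and showing that although it appears in the subtree-restricted negative-mass quantities at every level on its root-to-leaf path, the maximal coupling can only ``spend" it once on a first-disagreement event---yields $\|q-\hat p\|_1 \le 2\delta/(1-\delta)$. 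Combined with the triangle-inequality estimate above, this gives $\|q-p\|_1 \le 4\delta/(1-\delta)$ as claimed.
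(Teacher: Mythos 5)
The paper itself does not give a proof of this lemma --- it simply cites it as Lemma 10 of Ref.~\cite{sparse} --- so I will assess your argument on its own terms.

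Your choice of algorithm (bit-by-bit sampling where the two marginals of $p'$ are clipped to $[0,\infty)$ and renormalized) is a sensible one, and the triangle-inequality decomposition through $\hat{p}:=\max(p',0)/Z$ is a reasonable framework, with the first half $\|\hat{p}-p\|_1\le 2\delta/(1-\delta)$ handled correctly. The problem is the second half, $\|q-\hat p\|_1\le 2\delta/(1-\delta)$. You correctly identify that a naive union bound over the $n$ levels gives $\Theta(n\delta)$, and you assert that a ``careful accounting'' which ensures each unit of negative mass is ``spent only once'' rescues the bound --- but you never give this accounting, and I don't believe the coupling route can supply it. The obstruction is concrete: in a sequential, level-by-level (greedy) coupling of $X\sim q$ with $Y\sim\hat p$, the expected number of ``first-disagreement'' opportunities along the root-to-leaf path containing a single negative leaf of mass $\epsilon$ really is of order $n\epsilon$, not $\epsilon$. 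For example, take $p$ uniform on $\{0,1\}^n$ and $p'(0^n)=-\epsilon$ with $\epsilon$ slightly below $2^{-n}$; the per-level TV between the clipped conditional and $\hat p$'s conditional is $\Theta(\epsilon)$ at every ancestor of $0^n$, and the greedy reach probabilities do not decay fast enough to compensate, so the sequential coupling bound is $\Theta(n\epsilon)$. The true value of $\|q-\hat p\|_1$ in this example is $\Theta(\epsilon)$, so the lemma is of course fine --- but the sequential coupling is genuinely suboptimal (it cannot ``reconcile'' $X$ and $Y$ at a later bit once they diverge), and no accounting within that framework recovers an $n$-independent bound. This is a real gap, not a detail to fill in.

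The clean fix bypasses coupling and $\hat p$ entirely. Writing $\pi_j(b\mid x_{<j})=\max(p'_j(x_{<j}b),0)/\alpha_j$ with $\alpha_j:=\max(p'_j(x_{<j}0),0)+\max(p'_j(x_{<j}1),0)$, note that $\alpha_j\ge \max(p'_{j-1}(x_{<j}),0)$ for every $j\ge 2$ (clipping can only increase a sum past its unclipped value). Therefore the product telescopes upward:
\[
q(x)=\prod_{j=1}^n \frac{\max(p'_j(x_{\le j}),0)}{\alpha_j}\;\le\;\frac{\max(p'(x),0)}{\alpha_1},
\qquad\text{and}\qquad
\alpha_1\ge \sum_y p'(y)\ge 1-\delta.
\]
Hence
$q(x)-p(x)\le \frac{p(x)+|p'(x)-p(x)|}{1-\delta}-p(x)
= p(x)\frac{\delta}{1-\delta}+\frac{|p'(x)-p(x)|}{1-\delta}$,
and since $\sum_{x}(q(x)-p(x))=0$ gives $\|q-p\|_1=2\sum_{x:\,q>p}(q(x)-p(x))$, summing the pointwise bound over all $x$ yields $\|q-p\|_1\le 2\left(\frac{\delta}{1-\delta}+\frac{\delta}{1-\delta}\right)=\frac{4\delta}{1-\delta}$, exactly as claimed. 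If you wish to keep the detour through $\hat{p}$, the analogous observation $Z-\alpha_1\le\delta$ (the deficit between total clipped mass and root-level clipped mass is controlled by the total negative mass of $p'$) gives $\|q-\hat p\|_1\le 2\delta/(1-\delta)$ and recovers your decomposition; but either way the key step is the telescoping inequality, not a coupling.

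In short: correct algorithm, correct high-level plan, missing key step. The step you wave at (``the maximal coupling can only spend it once'') is the crux of the lemma, it is not true for the sequential coupling you describe, and it needs to be replaced by the telescoping argument above.
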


In the proof of Theorem \ref{thm:main} for $D=2$ we will construct $K$ explicitly as a pseudomixture (real linear combination of density matrices) of tensor products of matrix product states. For $D>2$, $K$ is constructed as a pseudomixture of tensor products of states on subgrids each consisting of $O(\log^{D}{n})$ qubits. The remainder of this section is concerned with establishing Theorem \ref{thm:main}. 
\subsection{Dissection along a side}
Our proof technique uses some of the ideas from Ref.~\cite{coudron}. In particular, the method is based on choosing one of the sides of the $D$-dimensional grid and identifying \textit{heavy slices} which are subgrids of qubits of constant thickness along the chosen side, that have a high chance of being measured in the all-zeros state. However our method differs substantially from Ref.~\cite{coudron} in the way that heavy slices are used in the proof. Whereas Ref.~\cite{coudron} uses a heavy slice to divide the grid into left and right subproblems (reducing to problems of half the size), here we show how to use a 1D regular arrangement of heavy slices to reduce our $D$-dimensional problem to one that is quasi-$(D-1)$-dimensional. More precisely, the type of quantum states we will encounter will be reduced density matrices of $|\psi\rangle=U|0^n\rangle$ of the following type.

\begin{dfn}
Let us say that $\rho$ is an $(n,k,D)$-state, if it is equal to the reduced density matrix of $\psi$ on a $D$-dimensional rectangular grid of qubits with $k$ sides of length $n^{\frac{1}{D}}$ and $D-k$ sides of length $O(\log{n})$.
\end{dfn}

Suppose $\rho$ is an $(n,k,D)$-state with $1\leq k\leq D$.  Label the qubits in the support of $\rho$ as $(i_1,i_2,\ldots, i_D)$ where $1\leq i_j\leq S_j$ and $S_j$ is the length of the $j$th side of the grid of qubits. Let us suppose without loss of generality that the first side $S_1$ has length $n^{1/D}$. It will be convenient to refer to subsets of the qubits of the form
\[
 \{(i_1,i_2,\ldots, i_D): \alpha\leq i_1\leq \beta, \text{ and } 1\leq i_j\leq S_j \text{ for } 2\leq j\leq D\}.
\]
which we call a \textit{rectangular region} of width $\beta-\alpha$. 
Given a qubit register $A$, we shall write $|0\ra_A$ for the all-zeros basis state of $A$.

\begin{dfn}
Suppose $\rho$ is an $(n,k,D)$-state.  A heavy slice $s$ is any rectangular region of width $w=2d+1$ satisfying
\[
\mathrm{Tr}(\rho (|0\rangle\langle 0|_s\otimes I))\rangle\geq 0.99.
\]
\end{dfn}
Note that the definition of a heavy slice depends on the support of the reduced density matrix $\rho$ as well as the circuit $U$, though this dependence is suppressed in our terminology.

A heavy slice $s$ partitions the support of $\rho$ into three rectangular regions consisting of $s$ and the regions $L$ and $R$ to the left and right of $s$, respectively. The condition that the width of the slice satisfies $w= 2d+1$ ensures that the backwards lightcones of $L$ and $R$ are disjoint. This means in particular that when we trace out any heavy slice $s$ we obtain a tensor product
\[
\mathrm{Tr}_s (\rho)= \sigma_L \otimes \sigma_R.
\]
for some states $\sigma_L, \sigma_R$ on the left and right rectangular regions, respectively.

If $s$ is a heavy slice then there is a good (at least 99\%) chance of measuring $0$ for all qubits in $s$. Therefore measuring the qubits in $s$ in the computational basis is approximated by simply tracing out the qubits in $s$ and replacing them with all zeros, as we now show. For any subset of qubits $\Omega$ define CPTP maps
\[
M_\Omega (\kappa)\equiv \sum_{z\in \{0,1\}^{|\Omega|}} (|z\rangle\langle z|_\Omega\otimes I)\kappa (|z\rangle\langle z|_\Omega\otimes I) \qquad \text{ and } \qquad E_\Omega(\kappa)\equiv \mathrm{Tr}_\Omega(\kappa)\otimes |0\rangle\langle 0|_\Omega.
\]
\begin{lemma}
Suppose $\rho$ is an $(n,k,D)$-state.  Let $s$ be a heavy slice. Then
\[
\|M_s(\rho)-E_s(\rho) \|_1 \leq 0.02
\]
\label{lem:heavy}
\end{lemma}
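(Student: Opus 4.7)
The plan is to expand both $M_s(\rho)$ and $E_s(\rho)$ in a common form built from the post-measurement states on the slice $s$, so that their difference becomes a sum supported entirely on non-zero outcomes, whose weight is controlled by the heaviness assumption.

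Concretely, for each $z\in\{0,1\}^{|s|}$ let $p_z=\mathrm{Tr}\bigl(\rho\,(|z\rangle\langle z|_s\otimes I)\bigr)$ and, when $p_z>0$, let $\rho_z$ denote the normalized conditional state on the complement of $s$, so that $(|z\rangle\langle z|_s\otimes I)\,\rho\,(|z\rangle\langle z|_s\otimes I) = p_z\,\rho_z\otimes |z\rangle\langle z|_s$. Then I would observe
\[
M_s(\rho)=\sum_{z} p_z\,\rho_z\otimes |z\rangle\langle z|_s,
\qquad
\mathrm{Tr}_s(\rho)=\sum_{z} p_z\,\rho_z,
\qquad
E_s(\rho)=\Bigl(\sum_{z} p_z\,\rho_z\Bigr)\otimes |0\rangle\langle 0|_s.
\]
Taking the difference, the $z=0$ contributions on the left cancel the corresponding piece on the right, leaving
\[
M_s(\rho)-E_s(\rho)=\sum_{z\neq 0} p_z\,\rho_z\otimes |z\rangle\langle z|_s\;-\;\sum_{z\neq 0} p_z\,\rho_z\otimes |0\rangle\langle 0|_s.
\]

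The final step is a triangle inequality in trace norm. Each $\rho_z\otimes |z\rangle\langle z|_s$ and each $\rho_z\otimes |0\rangle\langle 0|_s$ is a density matrix and hence has trace norm $1$, so
\[
\|M_s(\rho)-E_s(\rho)\|_1 \leq 2\sum_{z\neq 0}p_z = 2\bigl(1-p_{0^{|s|}}\bigr).
\]
The hypothesis that $s$ is a heavy slice means $p_{0^{|s|}}\geq 0.99$, giving the claimed bound $0.02$.

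There is no substantial obstacle here: once one writes the measurement channel in terms of conditional states, the inequality is immediate from the heaviness assumption and the triangle inequality. The main thing to be a bit careful about is the consistent use of the qubit ordering when tensoring in $|0\rangle\langle 0|_s$ versus $|z\rangle\langle z|_s$, to make sure both operators act on the same Hilbert space so that the subtraction and norm are well-defined.
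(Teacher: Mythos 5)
Your proof is correct and is essentially the same as the paper's: both isolate the $z\neq 0$ contributions in $M_s(\rho)-E_s(\rho)$ and bound the trace norm by $2\,\mathrm{Tr}\bigl(\rho\,(I-|0\rangle\langle 0|_s)\bigr)\leq 0.02$ via the triangle inequality. The only cosmetic difference is that you pass through normalized conditional states $\rho_z$ and split the difference into two sums, whereas the paper keeps the unnormalized post-measurement operators and uses $\|\,|z\rangle\langle z|_s-|0\rangle\langle 0|_s\,\|_1=2$ directly; both yield the identical bound.
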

\begin{proof}
Using the definitions we can write
\[
M_s(\rho)-E_s(\rho)  =\sum_{z\in \{0,1\}^{|s|}, z\neq 0^{|s|}} (\langle z|_s\otimes I)\rho (|z\rangle_s\otimes I) \otimes \left(|z\rangle\langle z|_s-|0\rangle\langle 0|_s\right)
\]
Since $\|\left(|z\rangle\langle z|_s-|0\rangle\langle 0|_s\right)\|_1=2$ whenever $z\neq 0^{|s|}$, we have
\begin{align*}
\|M_s(\rho)-E_s(\rho) \|_1& \leq \sum_{z\in \{0,1\}^{|s|}, z\neq 0^{|s|}} 2\| (\langle z|_s\otimes I)\rho (|z\rangle_s\otimes I)\|_1\\
&=2\sum_{z\in \{0,1\}^{|s|}, z\neq 0^{|s|}} \mathrm{Tr}(\rho (|z\rangle\langle z|_s\otimes I))\\
&=2\mathrm{Tr}(\rho (I-|0\rangle\langle 0|_s)) \\
&\leq 0.02,
\end{align*}
where in the last inequality we used the fact that $s$ is a heavy slice.
\end{proof}

We now show that Eqs.~(\ref{eq:suppose}, \ref{eq:mj}) imply the existence of such heavy slices. In the following lemma, we show how to find a large number of them, arranged in a certain regular pattern. To describe this pattern we will refer to a partition of the grid into $2T$ equal-width rectangular regions as shown in Fig.~\ref{fig:partition} for the special case $D=2$. Below we will choose the width of the rectangular regions to satisfy
\[
W=\Theta (d\log{n})
\]
so that the total number of regions is
\[
T=\Theta\left(\frac{n^{1/D}}{d\log({n})}\right).
\]

\begin{figure}
\centering
\begin{tikzpicture}[scale=0.5]
\draw (0,0) rectangle (10,10);
\draw (12,0) rectangle (16,10);

\draw node at (11,5){$\ldots$};
\draw node at (1,5) {$V^1$};
\draw[fill=lightgray] (2,0) rectangle (4,10);
\draw node at (3,5) {$H^1$};
\draw node at (5,5) {$V^2$};
\draw[fill=lightgray] (6,0) rectangle (8,10);
\draw node at (7,5) {$H^2$};
\draw node at (9,5) {$V^3$};
\draw[fill=lightgray] (14,0) rectangle (16,10);
\draw node at (13,5) {$V^T$};
\draw node at (15,5) {$H^T$};

\draw[thick,<->] (0,-1)--(2,-1);
\draw[thick,<->] (2,-1)--(4,-1);
\draw node at (1,-2){$W=\frac{\sqrt{n}}{2T}$};

\draw[thick,<->] (0,11)--(16,11);
\draw node at (8,12){$\sqrt{n}$};
\draw[thick, <->] (-1,0)--(-1,10);
\draw node at (-2,5){$\sqrt{n}$};
\end{tikzpicture}
\caption{A partition of the grid of qubits into equal-width rectangular regions, depicted here in the two-dimensional case. Lemma \ref{lem:lotsofslices} describes how we can find lightcone-separated heavy slices within each region $H^{j}$ for $1\leq j\leq T$.\label{fig:partition}}
\end{figure}
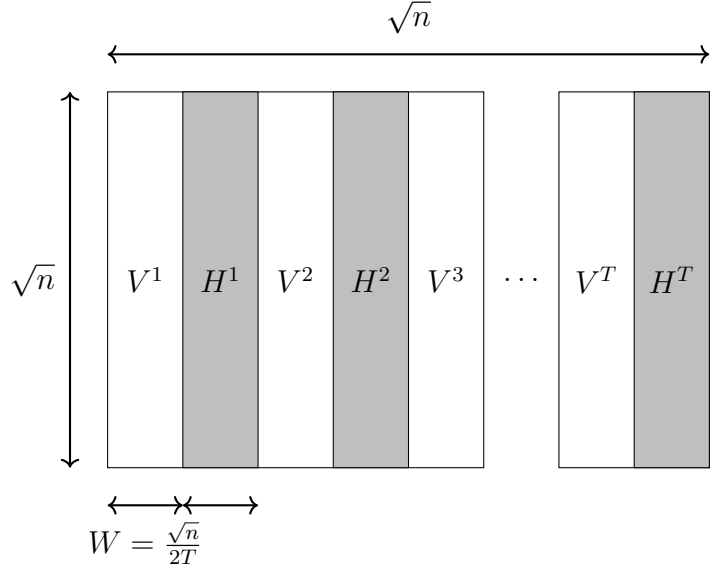

\begin{lemma}
Suppose Eq.~\eqref{eq:suppose} holds, and that $\rho$ is an $(n,k,D)$-state. Consider partitioning the qubits into equal-sized rectangular regions as shown in Fig.~\ref{fig:partition}. There exist $\ell=\Theta(\log{n})$ heavy slices $s_1^{j}, s_2^{j},\ldots, s_\ell^{j}$ within each region $H^{j}$ for $1\leq j\leq T$, where
\begin{equation}
\ell\geq \log(4n^{1/D}\epsilon^{-1}).
\label{eq:lcond}
\end{equation}
The heavy slices are all pairwise lightcone-separated. If $D=k=2$ there exists an efficient classical algorithm which outputs the heavy slices. 
\label{lem:lotsofslices}
\end{lemma}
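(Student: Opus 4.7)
The plan is to combine the global expectation bound implicit in the proof of Lemma~\ref{lemma:concentration} with a residue-class pigeonhole argument that produces many lightcone-separated heavy slices inside each region $H^j$.

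First I would recall the estimate Eq.~(\ref{average_weight}) from the proof of Lemma~\ref{lemma:concentration}. Under the standing assumptions $m_j := \mathbb{E}_{x\sim P}[x_j] \leq 1/2$ and $P_{max} \geq P(h) \geq n^{-a}$, together with the constant-degree dependency structure of shallow circuits ($K = 2^{2d} = O(1)$), that argument yields
\[
\sum_{j=1}^{n} m_j \;\leq\; M := a(K+1)\log n.
\]
Because $\rho$ is a partial trace of $|\psi\rangle\langle\psi|$, the single-qubit marginals on its support coincide with these $m_j$, so this global bound controls slice heaviness for $\rho$.

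Next I would fix a region $H^j$ (of width $W = n^{1/D}/(2T)$ in the $D$-dimensional case) and let $s(a)$ denote the slice of width $w = 2d+1$ starting at column $a$. By the union bound, $m_{s(a)} := \sum_{i \in s(a)} m_i \leq 0.01$ is a sufficient condition for $s(a)$ to be a heavy slice. Summing $m_{s(a)}$ over all valid starts $a$ in $H^j$, each qubit is counted in at most $w$ slices, so $0.01\, N_{\text{bad}} \leq wM$, meaning at most $O(d\log n)$ of the $W - w + 1$ starting positions in $H^j$ are non-heavy.

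For the separation step, I would use that two width-$w$ slices are lightcone-separated iff their starts differ by at least $4d+1$. Partitioning the heavy starts in $H^j$ by residue modulo $4d+1$, positions in a single residue class are at distance a positive multiple of $4d+1$ and hence automatically lightcone-separated. By pigeonhole some class contains at least
\[
\left\lceil \frac{(W - w + 1) - O(d\log n)}{4d+1} \right\rceil
\]
heavy starts; selecting $T$ so that $W = Cd\log n$ with $C$ a sufficiently large constant depending on $a$, $K$, $D$, $c$ makes this at least $\log(4n^{1/D}\epsilon^{-1})$, as required. Separation across distinct $H^j$ and $H^{j'}$ is automatic since each buffer $V^j$ already has width $W \gg 4d+1$.

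Finally, for the algorithmic statement in the case $D = k = 2$, each $m_i$ is computed in $O(1)$ time by contracting $U$ restricted to the $O(1)$-qubit backward lightcone of qubit $i$; all slice sums $m_{s(a)}$ are then obtained from a single prefix-sum sweep along the long direction, and the modular selection is $O(n)$. The step most requiring care is transferring the \emph{global} bound on $\sum_j m_j$ into a per-region bad-slice budget without additional peakedness hypotheses on the marginal of $\rho$ restricted to $H^j$; this works because the $m_j$ are nonnegative, so $\sum_{i \in H^j} m_i \leq M$ holds trivially for every region.
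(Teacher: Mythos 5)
Your proof is correct and takes a genuinely different route from the paper's. The paper pre-selects $N=\Theta(W/d)$ pairwise lightcone-separated \emph{candidate} slices inside each $H^j$, then exploits the resulting tensor-product factorization to write $\prod_i p_i \geq P(h)\geq n^{-a}$ where $p_i=\mathrm{Tr}(\rho\,|h_{ij}\rangle\langle h_{ij}|_{C^j_i})$; taking logs and applying Markov's inequality yields at least $N/2$ candidates with $p_i\geq 0.99$, and the assumption $m_j\leq 1/2$ then forces $h_{ij}=0^{|s|}$ so that these candidates are heavy. You instead work directly from the global first-moment bound $\sum_j m_j \leq (K+1)\log(1/P_{max})$ of Eq.~\eqref{average_weight}, use a union bound to get the sufficient criterion $m_{s(a)}\leq 0.01 \Rightarrow s(a)$ heavy, and a double-counting argument (each qubit touches at most $w$ slices) to show at most $O(wM)$ starting positions in $H^j$ fail the criterion; a final pigeonhole over residue classes mod $4d+1$ recovers a pairwise lightcone-separated subset. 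The two arguments trade off a multiplicative bound on slice probabilities for an additive bound on marginal mass; each yields $\ell=\Theta(\log n)$ slices with constants depending on $a$, $d$, and $K=2^{2d}$. A nice bonus of your route: the certificate $m_{s(a)}\leq 0.01$ is checkable from single-qubit marginals alone, which (as you note) are $O(1)$-time computable from constant-size lightcones, so your algorithm runs in $O(n)$ time for \emph{every} $D$ and $k$, not only $D=k=2$; this would let the paper's induction step in Theorem~\ref{thm:main} avoid invoking the inductive hypothesis just to locate heavy slices. Your bookkeeping is also sound: the counting uses only nonnegativity of the $m_j$ to restrict the global sum to the qubits of $H^j$, the lightcone-separation threshold $4d+1$ for width-$(2d+1)$ slices matches the definition, and the constant $C$ in $W=Cd\log n$ must indeed be taken large enough to absorb the $O(wM)=O(a\,d\,2^{2d}\log n)$ bad positions plus the $(4d+1)\log(4n^{1/D}\epsilon^{-1})$ separation budget.
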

\begin{proof}
Let $W=\frac{n^{1/D}}{2T}$ be the width of each region $H^{j}$. Recall that the heavy slices we aim to find must have width $2d+1$ and must be lightcone-separated from other heavy slices. Let us fix $N=\Theta(W/d)$ candidate heavy slices $C^{j}_1,C^{j}_2,\ldots, C^{j}_N\subseteq H^{j}$ all of which satisfy these two properties (required width and separation from the other candidates). Recall that $h$ is the ``heavy string" satisfying Eq.~\eqref{eq:suppose}. For each $j\in [T]$ and $i\in [N]$, let $h_{ij}$ be the substring of $h$ consisting of the bits in region $C^{j}_i$.  Since all the candidates are lightcone-separated we have, for each $j\in [T]$, 
\[
\prod_{i=1}^{N}\mathrm{Tr}(\rho |h_{ij}\rangle\langle h_{ij}|_{C^{j}_i})=\langle \psi| \left(\prod_{i=1}^{N}|h_{ij}\rangle\langle h_{ij}|_{C^{j}_i}\right)|\psi\rangle \geq P(h)\geq \frac{1}{n^{a}},
\]
where we used Eq.~\eqref{eq:suppose}. To ease notation let us fix $j\in [T]$ and write $p_i=\mathrm{Tr}(\rho |h_{ij}\rangle\langle h_{ij}|_{C^{j}_i})$. Then taking the log of the above gives
\[
\frac{1}{N}\sum_{i=1}^{N}\log\left(\frac{1}{p_i}\right) \leq \frac{a\log{n}}{N}.
\]
By Markov's inequality, there is a subset $S\subseteq [N]$ of size $|S|\geq N/2$ such that 
\[
\log\left(\frac{1}{p_i}\right)\leq \frac{2a\log{n}}{N} \qquad \text{ for all } i\in S.
\]

Our heavy slices within region $H^{j}$ will be the candidates $C^{j}_i$ with $i\in S$. We choose $W=\Theta(d\log{n})$ large enough so that $N=\Theta(W/d)$ satisfies $\frac{2a\log{n}}{N}\leq \log(1/0.99)$ and so that $N/2\geq \log(4n^{1/D}\epsilon^{-1})$.  The first condition ensures that $p_i\geq 0.99$. From Eq.~\eqref{eq:mj} we see that this implies that $h_{ij}$ is the all-zeros string whenever $i\in S$, i.e., 
\[
\mathrm{Tr}(\rho |0\rangle\langle 0|_{C^{j}_i}) \geq 0.99 \quad \text{for all} \quad i\in S,
\]
which shows that our candidates $C^{j}_i$ with $i\in S$ are heavy slices. The condition $ N/2\geq \log(4n^{1/D}\epsilon^{-1})$ ensures that the number of candidates $\ell=|S|\geq N/2$ within each region $H^{j}$ satisfies Eq.~\eqref{eq:lcond}.

So far we have shown the existence of enough heavy slices amongst our candidates.  If $D=k=2$ then we can efficiently test whether or not a given candidate is a heavy slice, since the circuit composed of all the gates in the backward lightcone of a given candidate is a depth-$d$ circuit on a rectangular strip of size $O(d)\times \sqrt{n}$, and can be simulated efficiently using matrix product state techniques.
\end{proof}

From Lemma \ref{lem:lotsofslices} we infer the existence of lots of lightcone-separated heavy slices. We will use them along with the following extension of Lemma \ref{lem:heavy} (whose proof is quite similar).
\begin{lemma}
Suppose $\rho$ is an $(n,k,D)$-state. Let $s_1, s_2,\ldots, s_m$ be any set of lightcone-separated heavy slices. For any $C\subseteq [m]$ we have 
\[
\bigg\|\prod_{k\in [m]\setminus C} M_{s_k}\prod_{i\in C} \left(M_{s_i}-E_{s_{i}}\right)(\rho) \bigg\|_1 \leq 0.02^{|C|}
\]
(the above product should be interpreted as the composition of commuting superoperators).
\label{lem:heavy2}
\end{lemma}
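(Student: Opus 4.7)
The plan is to compute the action of the full superoperator on $\rho$ directly in the computational basis of $S = \bigcup_{i=1}^{m} s_i$ and then apply the triangle inequality. The two ingredients I would rely on are (i) the tensor product factorization $\rho_S = \bigotimes_{i=1}^m \rho_{s_i}$ supplied by Fact~\ref{fact:indep} (since the slices are pairwise lightcone-separated), and (ii) the fact that each local map $M_{s_j}$ or $E_{s_j}$ acts only on $s_j$, so they pairwise commute and moreover each $\rho_{s_i}$ is heavy, $\langle 0|\rho_{s_i}|0\rangle \geq 0.99$.

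Concretely, I would write $\rho = \sum_{\vec{z},\vec{z}' \in \{0,1\}^{|S|}} |\vec{z}\rangle\langle \vec{z}'|_S \otimes \rho_{\vec{z},\vec{z}'}$ with $\rho_{\vec{z},\vec{z}'}$ acting on $S^c$. For each index $k \notin C$, the map $M_{s_k}$ forces $z_k = z_k'$ and leaves $|z_k\rangle\langle z_k|_{s_k}$ in place, while for $i \in C$ the difference $M_{s_i}-E_{s_i}$ forces $z_i = z_i'$ and replaces the slice component by $|z_i\rangle\langle z_i|_{s_i} - |0\rangle\langle 0|_{s_i}$, which in particular vanishes when $z_i = 0$. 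Composing all of them, the result reduces to a sum over diagonal bit patterns:
\[
\prod_{k\in [m]\setminus C} M_{s_k}\prod_{i\in C}(M_{s_i}-E_{s_i})(\rho) = \sum_{\vec{z}: z_i \neq 0 \,\forall i \in C} \Bigl(\bigotimes_{i\in C}(|z_i\rangle\langle z_i|_{s_i} - |0\rangle\langle 0|_{s_i})\Bigr) \otimes \Bigl(\bigotimes_{k\notin C} |z_k\rangle\langle z_k|_{s_k}\Bigr) \otimes \rho_{\vec{z},\vec{z}}.
\]

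Taking the $1$-norm, using multiplicativity of $\|\cdot\|_1$ under tensor products, the triangle inequality, and $\|\,|z_i\rangle\langle z_i| - |0\rangle\langle 0|\,\|_1 = 2$ for $z_i \neq 0$, I would obtain an upper bound
\[
2^{|C|} \sum_{\vec{z}:z_i \neq 0\,\forall i\in C} \mathrm{Tr}(\rho_{\vec{z},\vec{z}}) \;=\; 2^{|C|} \sum_{\vec{z}:z_i \neq 0\,\forall i\in C} \langle \vec{z}|\rho_S|\vec{z}\rangle.
\]
Applying the tensor factorization $\rho_S = \bigotimes_i \rho_{s_i}$, the sum factors over $i$, and each $C$-factor contributes $\sum_{z_i \neq 0}\langle z_i|\rho_{s_i}|z_i\rangle = 1 - \langle 0|\rho_{s_i}|0\rangle \leq 0.01$, while the $[m]\setminus C$ factors contribute $1$. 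The total is at most $2^{|C|}(0.01)^{|C|} = 0.02^{|C|}$, as desired.

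The only potentially subtle point is making sure the formal expansion and its 1-norm bound interact correctly with the tensor-product structure on $\rho_{\vec z,\vec z}$ versus on $\rho_S$; but this is bookkeeping, as both follow from the lightcone-separation hypothesis. There is essentially no obstacle beyond writing the expansion carefully and invoking Fact~\ref{fact:indep} together with the heavy-slice inequality.
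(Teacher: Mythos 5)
Your proof is correct and follows essentially the same route as the paper's: expand the action of the superoperator in the computational basis of the union of slices, pick up a factor of $2^{|C|}$ from $\|\,|z\rangle\langle z|-|0\rangle\langle 0|\,\|_1=2$ together with positivity of the diagonal blocks, and then use the lightcone-separation tensor factorization of $\rho$ (Fact~\ref{fact:indep}) with the heavy-slice bound $1-\langle 0|\rho_{s_i}|0\rangle\le 0.01$ to conclude. The only cosmetic difference is that the paper packages the same computation as $2^{|C|}\,\mathrm{Tr}\bigl(\rho\prod_{i\in C}(I-|0\rangle\langle 0|_{s_i})\bigr)$ without spelling out the intermediate block decomposition; you make that step explicit, which is fine.
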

\begin{proof}
We have
\begin{align*}
& \prod_{k\in [m]\setminus C} M_{s_k}\prod_{i\in C} \left(M_{s_i}-E_{s_{i}}\right)(\rho)\\&=\sum_{z_1,z_2,\ldots z_m} \left(\otimes_{i}\langle z_i|_{s_i} \right)\rho \left(\otimes_{i}|z_i\rangle_{s_i}\right)\bigotimes_{i\in C}\left(|z_i\rangle\langle z_i|_{s_i}-|0\rangle\langle 0|_{s_i}\right)\bigotimes_{k\in [m]\setminus C} |z_k\rangle\langle z_k|_{s_k}
\end{align*}
and therefore
\begin{align}
\bigg\|\prod_{k\in [m]\setminus C} M_{s_k}\prod_{i\in C} \left(M_{s_i}-E_{s_{i}}\right)(\rho) \bigg\|_1&\leq \sum_{z_1,z_2,\ldots, z_m: \; z_i\neq 0^{|s_i|} \text{ for } i\in C} \|\left(\otimes_{i}\langle z_i|_{s_i} \right)\rho \left(\otimes_{i}|z_i\rangle_{s_i}\right)\|_1 2^{|C|}\\
&=2^{|C|} \mathrm{Tr}(\rho\prod_{i\in C}(I-|0\rangle\langle 0|_{s_i}))\\
&\leq 0.02^{|C|}.
\end{align}
\end{proof}

Now for each $j\in [T]$ let us define a superoperator that describes measurement (in the computational basis) of all qubits in all the heavy slices in region $H^j$ from Lemma \ref{lem:lotsofslices}
\[
M^{j}= \prod_{1\leq i\leq \ell} M_{s^{j}_{i}}.
\]
Write
\[
M^{j}=A^{j}+B^{j},
\]
where
\begin{equation}
A^{j}\equiv \sum_{\omega\subseteq [\ell], \omega \neq \emptyset} (-1)^{|\omega|+1} \prod_{i\in [\ell]\setminus\omega} M_{s^{j}_{i}} \prod_{i\in \omega} E_{s^{j}_{i}}
\label{eq:Adef}
\end{equation}
and 
\[
B^{j}=M^{j}-A^{j}=\prod_{i=1}^{\ell} \left(M_{s^j_i}-E_{s^j_{i}}\right).
\]
 Note that a direct application of Lemma \ref{lem:heavy2} gives, for any $\Omega \subseteq [T]$ and $S\subseteq \Omega$,
\begin{equation}
\bigg\|\prod_{j\in \Omega \setminus S} M^{j} \prod_{k\in S} B^{k} (\rho)\bigg\|_1\leq 0.02^{\ell |S|}.
\label{eq:mb}
\end{equation}

\begin{lemma}
Suppose $\rho$ is an $(n,k,D)$-state, and that $s_1^{j},s_2^{j},\ldots, s_\ell^{j}$  for $1\leq j\leq T$ are heavy slices with the properties described in Lemma \ref{lem:lotsofslices}. Then, for any $\Omega\subseteq [T]$ we have
\[
\left\|\big(\prod_{j\in \Omega} A^j-\prod_{j\in \Omega}M^j\big)(\rho)\right\|_1 \leq  \epsilon/2.
\]
\label{lem:1norm}
\end{lemma}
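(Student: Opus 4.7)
The plan is to expand the difference $\prod_{j\in\Omega}A^j-\prod_{j\in\Omega}M^j$ using the identity $A^j=M^j-B^j$ rather than the inclusion–exclusion form in Eq.~\eqref{eq:Adef}, and then invoke Eq.~\eqref{eq:mb} term by term. Concretely, since the superoperators $M^j$ and $B^k$ acting on different regions $H^j$ commute, I would write
\[
\prod_{j\in\Omega}A^j \;=\; \prod_{j\in\Omega}(M^j-B^j) \;=\; \sum_{S\subseteq\Omega}(-1)^{|S|}\prod_{j\in\Omega\setminus S}M^j\prod_{k\in S}B^k.
\]
Subtracting the $S=\emptyset$ term yields
\[
\prod_{j\in\Omega}A^j-\prod_{j\in\Omega}M^j \;=\; \sum_{\emptyset\neq S\subseteq\Omega}(-1)^{|S|}\prod_{j\in\Omega\setminus S}M^j\prod_{k\in S}B^k.
\]
The advantage of this expansion over the natural one in terms of $A^j$ is that every surviving term matches exactly the form handled by Eq.~\eqref{eq:mb}, so no further manipulation is needed.

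Next I would apply the triangle inequality to the displayed expansion evaluated on $\rho$, and bound each term directly by Eq.~\eqref{eq:mb}, obtaining $\|\prod_{\Omega\setminus S}M^j\prod_S B^k(\rho)\|_1\le 0.02^{\ell|S|}$. Grouping by $|S|=k$ gives
\[
\left\|\left(\prod_{j\in\Omega}A^j-\prod_{j\in\Omega}M^j\right)(\rho)\right\|_1 \;\le\; \sum_{k=1}^{|\Omega|}\binom{|\Omega|}{k}0.02^{\ell k}.
\]

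It remains to show this sum is at most $\epsilon/2$. Using $\binom{|\Omega|}{k}\le |\Omega|^k$ and extending the sum to a geometric series, the right-hand side is bounded by $\sum_{k\ge 1}(|\Omega|\cdot 0.02^\ell)^k = (|\Omega|\cdot 0.02^\ell)/(1-|\Omega|\cdot 0.02^\ell)$. From Eq.~\eqref{eq:lcond} we have $\ell\ge\log(4n^{1/D}\epsilon^{-1})$, and since $0.02<e^{-1}$ this gives $0.02^\ell\le e^{-\ell}\le \epsilon/(4n^{1/D})$. Combining with $|\Omega|\le T\le n^{1/D}$ yields $|\Omega|\cdot 0.02^\ell \le \epsilon/4$, and the sum is at most $(\epsilon/4)/(1-\epsilon/4)\le \epsilon/2$ for any reasonable $\epsilon$.

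The main conceptual point, and the only thing that could trip one up, is the choice of expansion: expanding via $A^j=M^j-B^j$ (so the residual factors on the complement of $S$ are $M^j$'s, which are CPTP-composed measurements) is what makes Eq.~\eqref{eq:mb} directly applicable. The rest of the argument is a purely combinatorial geometric-series estimate that uses the logarithmic lower bound on $\ell$ provided by Lemma~\ref{lem:lotsofslices} to absorb the binomial factor coming from the $2^{|\Omega|}$ terms of the inclusion–exclusion expansion.
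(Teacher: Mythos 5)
Your proof is correct and follows essentially the same route as the paper: both expand $\prod_{j\in\Omega}A^j$ via $A^j = M^j - B^j$, isolate the $S=\emptyset$ term, apply the triangle inequality together with Eq.~\eqref{eq:mb}, and then bound the resulting sum $\sum_{\emptyset\neq S\subseteq\Omega}0.02^{\ell|S|}$. The only (cosmetic) difference is in the final estimate: you use $\binom{|\Omega|}{k}\le|\Omega|^k$ and a geometric series, while the paper writes the sum exactly as $(1+0.02^\ell)^{|\Omega|}-1$ and bounds it by $T\cdot 0.02^\ell e^{T\cdot 0.02^\ell}$; both yield $\le\epsilon/2$ under the same inputs $0.02^\ell\le\epsilon/(4n^{1/D})$ and $|\Omega|\le T\le n^{1/D}$.
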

\begin{proof}
\[
\prod_{j\in \Omega} A^j(\rho)=\prod_{j\in \Omega} (M^{j}-B^{j}) (\rho)
=\prod_{j\in \Omega}M^j(\rho) +\sum_{S\subseteq \Omega, S\neq \emptyset} (-1)^{|S|}\prod_{j\in \Omega\setminus S} M^{j} \prod_{k\in S} B^{k} (\rho).
\]
Now using the above, the triangle inequality, and Eq.~\eqref{eq:mb} we arrive at
\begin{align}
\left\|\big(\prod_{j\in \Omega} A^j-\prod_{j\in \Omega}M^j\big)(\rho)\right\|_1&\leq \sum_{S\subseteq \Omega, S\neq \emptyset} 0.02^{\ell |S|} \nonumber\\
&=(1+0.02^{\ell})^{|\Omega|}-1\nonumber\\
&\leq (1+0.02^{\ell})^{T}-1\\
&=0.02^{\ell}\sum_{j=0}^{T-1} (1+0.02^{\ell})^{j}\nonumber\\
&\leq 0.02^{\ell} T(1+0.02^{\ell})^{T}\nonumber\\
&\leq T\cdot 0.02^{\ell} e^{T \cdot 0.02^{\ell}}. \label{eq:epsbound}
\end{align}
Now since $\ell$ satisfies Eq.~\eqref{eq:lcond}, we have $0.02^{\ell}\leq e^{-\ell}\leq \epsilon/(4n^{1/D})$. We also have $T\leq n^{1/D}$ (see Fig.~\ref{fig:partition}). Plugging these into Eq.~\eqref{eq:epsbound} we see that
\[
\left\|\big(\prod_{j\in \Omega} A^j-\prod_{j\in \Omega}M^j\big)(\rho)\right\|_1\leq \frac{\epsilon}{4}e^{\epsilon/4}.
\]
Since $e^{\epsilon/4}\leq e^{1/2}\leq 2$, we are done.
\end{proof}

\subsection{Approximating the output state}

Suppose that Eq.~\eqref{eq:suppose} holds and that $\rho$ is an $(n,k,D)$-state. We shall make use of the heavy slices whose existence is guaranteed by Lemma \ref{lem:lotsofslices} and the following matrix that is defined in terms of them:
\begin{equation}
\sigma=A^1A^2\ldots A^{T}(\rho).
\label{eq:sigmadef}
\end{equation}
Note that the operators $\{A^j\}$ depend on the location of the heavy slices and so we will have to compute this information (i.e., the guarantee that they exist is not enough); later we will account for the runtime required to do so. We now describe some properties of $\sigma$.

Noting from Eq.~\eqref{eq:Adef} that each $A^{j}$ is a real linear combination of CPTP maps, we see that $\sigma$ is Hermitian. For any Hermitian matrix $Q$ we have $\|Q\|_1 \geq \sum_{y} |\langle y|Q|y\rangle|$. Applying this along with Lemma \ref{lem:1norm} we get 
\begin{equation}
\frac{\epsilon}{2} \geq \left\|(A^1A^2\ldots A^{T}-M^1M^2\ldots M^{T})(\rho)\right\|_1 \geq \sum_{y\in \{0,1\}^n} |\langle y|\sigma|y\rangle-\langle y|\rho|y\rangle|,
\label{eq:eps2}
\end{equation}
where in the last inequality we used the fact that $M^{1}M^{2}\ldots M^{T}(\rho)$ has the same diagonal entries as $\rho$ since each $M^{j}$ is a CPTP map corresponding to measuring some of the qubits in the computational basis. Thus the matrix $\sigma$ satisfies the error guarantee Eq.~\eqref{eq:rhoerror} from Theorem \ref{thm:main} with $\epsilon\leftarrow \epsilon/2$.

From the definition Eq.~\eqref{eq:Adef} we see that $A^{j}$ is a linear combination of $L\equiv 2^{\ell}-1$ CPTP maps. Since $\ell=O(\log{n})$ we have $L=O(\mathrm{poly}(n))$. Moreover,  $\sigma$ is a pseudomixture
\begin{equation}
\sigma=\sum_{a_1,a_2,\ldots, a_T=1}^{L} f_{a_1}f_{a_2}\ldots f_{a_T} \cdot \mathcal{R}_{\vec{a}}(\rho),
\label{eq:pseudomixture}
\end{equation}
where $\{f_{a}\}_a$ are efficiently computable $\pm 1$-valued coefficients and each $\mathcal{R}_{\vec{a}}$ is a tensor product of CPTP maps,  each of which either measures a heavy slice in the computational basis, or traces out a heavy slice and replaces it with all-zeros. Crucially, due to the fact that $\omega\neq \emptyset$ in Eq.~\eqref{eq:Adef}, the CPTP map $\mathcal{R}_{\vec{a}}$ always traces out at least one heavy slice in each of the regions $H^{j}$ for $1\leq j\leq T$. Therefore we may write 
\begin{equation}
\mathcal{R}_{\vec{a}}(\rho)= \sigma^{1}_{a_1}\otimes \sigma^{2}_{a_1a_2}\otimes \sigma^{3}_{a_2a_3}\otimes \ldots \otimes \sigma^{T}_{a_{T-1}a_{T}}.
\label{eq:rprod}
\end{equation}
where $\sigma^1_{a_1}$ is a state whose support overlaps regions $V^{1}$ and $H^{1}$, $\sigma^2_{a_1a_2}$ is a state with support overlapping the three regions $H^{1}$, $V^{2}$ and $H^2$, etc. Note that the subsystems separated by the tensor product in Eq.~\eqref{eq:rprod} depend on $a_1,a_2,\ldots, a_T$.  For notational convenience, in the following we shall write $ \sigma^{1}_{a_0a_1}=\sigma^{1}_{a_1}$ for all $a_0$.

Each state $\sigma^{j}_{a_{j-1}a_j}$ can be prepared by starting with a reduced state of $|\psi\rangle$ on a rectangular region of width $O(\log{n})$ (an $(n,k-1,D)$-state), adjoining some fresh qubits in the all-zeros state, and then measuring some qubits. The latter operation does not alter the measurement statistics in the computational basis. In particular, 
the measurement statistics of $\sigma^{j}_{a_{j-1}a_j}$ are those of an $(n,k-1,D)$ state in tensor product with some ancilla qubits in the all-zeros state.

Finally, let us describe a method for computing ``marginals" of the form $\mathrm{Tr}(\sigma |x\rangle\langle x|_S\otimes I)$. In particular, we describe a polynomial-time reduction from this task (computing marginals of $\sigma$) to the task of computing marginals of states $\sigma^{1}_{a_1}, \sigma^{2}_{a_1a_2}, \ldots$ (this is nontrivial because the sum Eq.~\eqref{eq:pseudomixture} includes a superpolynomial number of terms).  As discussed above, this latter task reduces to computing marginals of $(n,k-1,D)$-states. 

 For any subset $S\subseteq [n]$ in the support of $\rho$, bitstring $x\in \{0,1\}^{|S|}$, and indices $a_1,a_2,\ldots, a_T\in [L]$ we can efficiently compute projectors $\Pi^1_{a_1}, \Pi^{2}_{a_1,a_2},\ldots , \Pi^{T}_{a_{T-1},a_T}$ diagonal in the computational basis, and each of the form $|y\rangle\langle y|\otimes I$ for a suitable substring $y$ of $x$, such that
\[
\mathrm{Tr}(\mathcal{R}_{\vec{a}}(\rho) |x\rangle\langle x|_S\otimes I)=\mathrm{Tr}(\sigma^{1}_{a_1}\Pi^{1}_{a_1})\mathrm{Tr}(\sigma^{2}_{a_1a_2}\Pi^{2}_{a_1a_2})\cdots \mathrm{Tr}(\sigma^{T}_{a_{T-1}a_T}\Pi^{T}_{a_{T-1}a_{T}}),
\]
and therefore
\[
\mathrm{Tr}(\sigma |x\rangle\langle x|_S\otimes I)=\sum_{a_1,a_2,\ldots, a_T=1}^{L}\left(f_{a_1}\mathrm{Tr}(\sigma^{1}_{a_1}\Pi^{1}_{a_1})\right)\left(f_{a_2}\mathrm{Tr}(\sigma^{2}_{a_1a_2}\Pi^{2}_{a_1a_2})\right)\cdots \big(f_{a_T}\mathrm{Tr}(\sigma^{T}_{a_{T-1}a_T}\Pi^{T}_{a_{T-1}a_{T}}))\big).
\]
The RHS is a matrix-vector product $v G^{2}G^{3}\ldots G^{T}w$ where $v$ is $1\times L$, $G^2, G^3,\ldots, G^{T}$ are $L\times L$ and $w$ is the $L\times 1$ all-ones vector. Each entry of each of these matrices and vectors is a marginal of one of the states $\sigma^{1}_{a_1}, \sigma^{2}_{a_1a_2}, \ldots$, and we need to compute all $O(TL^2)=O(\mathrm{poly}(n))$ such entries. Then we can compute the RHS using matrix-vector multiplication with runtime $O(TL^2)=O(\mathrm{poly}(n))$.

We are now ready to prove Theorem \ref{thm:main}. The proof uses the approximation $\sigma$ to the output state but proceeds slightly differently for $D=2$ and $D>2$ respectively.

\begin{proof}[Proof of Theorem \ref{thm:main}]

During the course of the algorithm described below we will need to compute certain sets of heavy slices whose existence is guaranteed by Lemma \ref{lem:lotsofslices} as long as Eq.~\eqref{eq:suppose} holds. If we run the algorithm on an input unitary $U$ that does not satisfy Eq.~\eqref{eq:suppose} then we may be unable to find them, in which case we output an error flag. Our algorithm also uses, as a subroutine,  the classical simulation algorithm for peaked constant-depth circuits from Section \ref{sec:peaked}; this algorithm may also output an error flag if Eq.~\eqref{eq:suppose} does not hold. For simplicity in the remainder of the proof we assume Eq.~\eqref{eq:suppose} holds.

First consider the case $D=2$. For two-dimensional circuits ($D=k=2$) the heavy slices from Lemma \ref{lem:lotsofslices} can be computed efficiently (as stated in the Lemma). In this case our approximation $K$ to the output state of the circuit is $K=\sigma$ defined in Eq.~\eqref{eq:sigmadef} with $\rho=|\psi\rangle\langle \psi|$. We have already (in Eq.~\eqref{eq:eps2}) established the error bound Eq.~\eqref{eq:rhoerror}. All that remains is to show that marginals of the form $\mathrm{Tr}(K |x\rangle\langle x|_S\otimes I)$ can be computed efficiently. Above we showed that this task reduces in $\mathrm{poly}(n)$ time to computing marginals of the states $\{\sigma^{j}_{a_{j-1}a_j}\}$. Moreover, this latter task is equivalent to computing marginals of $(n,1,2)$-states, i.e., a state prepared by a constant-depth circuit in a quasi-1D rectangular region of width $O(\log(n))$ and height $\sqrt{n}$. Such a circuit prepares a matrix product state of polynomial bond dimension, and marginals of such states can be computed efficiently using standard techniques.

Next suppose $D>2$. In this case we aim to establish Theorem \ref{thm:main} for $|\psi\rangle=U|0^n\rangle$ which is assumed to satisfy Eq.~\eqref{eq:suppose}. Here $U$ is a depth-$d$ circuit composed of geometrically local gates between nearest neighbor qubits on a $D$-dimensional grid such that each side has length $n^{1/D}$. By definition, the state $\psi$ of interest is an $(n,D,D)$-state. We will establish the theorem for $(n,k,D)$-states $\rho$. The proof is inductive in $k$ for $0\leq k\leq D$.

\textbf{Base case}
The base case of the induction is a $(n,0,D)$-state $\rho$. Let $J\subseteq [n]$ be the support of $\rho$, and let $J'$ be the lightcone of $J$ (with respect to the circuit $U$). From the definition of an $(n,0,D)$-state, we have $|J|=O(\log^D{n})$, and since $U$ has constant depth, we also have $|J'|=O(1)\cdot|J|=O(\log^D{n})$. Moreover,  
\[
\rho=\mathrm{Tr}_{[n]\setminus J}\left( U|0^{n}\rangle\langle 0^{n}|U^{\dagger}\right)=\mathrm{Tr}_{J'\setminus J}\left( U(J')|0^{|J'|}\rangle\langle 0^{|J'|}|U(J')^{\dagger}\right)
\]
where $U(J')$ is the subcircuit of $U$ consisting of all gates that act nontrivially on qubits in $J'$. Moreover, Eq.~\eqref{eq:suppose} implies
\begin{equation}
\mathrm{Tr}(\rho|h_J\rangle \langle h_J|)\geq n^{-a}
\label{eq:peakrho}
\end{equation}
where $h_J$ is the substring of $h$ consisting of bits in $J$. By Eq.~\eqref{eq:weightofh} the Hamming weight of $h_J$ is upper bounded as
\begin{equation}
w(h_J)\leq w(h)\leq c\log(n)
\label{eq:weightofhj}
\end{equation}
where $c$ is a known constant. The circuit $U(J')$ that prepares $\rho$ acts on $N=O(\log^D {n})$ qubits in total and has depth $d$. A naive classical simulation of this circuit would expend runtime $e^{O(\log ^D{n})}$ but we now show how to do better than this by exploiting Eqs.~(\ref{eq:peakrho},\ref{eq:weightofhj}). 

Let 
\[
\mathcal{C}=\{z\in \{0,1\}^{|J|}: w(z)\leq c\log(n)\}.
\]
From Eq.~\eqref{eq:weightofhj}, we have $h_J\in \mathcal{C}$. From Eqs.~(\ref{eq:peakrho},\ref{eq:weightofhj}) and Lemma \ref{lem:peakedshallowpurification} we can compute circuits $\{Q(z)\}_{z\in \mathcal{C}}$, each of which has depth $2d$ and acts on $m\leq 2N$ qubits, and a subset of qubits $S\subseteq [m]$, such that 
\be
\label{circuit_Q_peakness}
\mathrm{Tr}_{S}(Q(z)|0^m\rangle\langle 0^m|Q(z)^{\dagger})=\rho \quad \forall z\in \mathcal{C} \qquad \text{and} \qquad \max_{x\in \{0,1\}^m, z\in \mathcal{C} }|\langle x |Q(z)|0^m\rangle|^2\geq n^{-2a}.
\ee
The next step involves an exhaustive search over all the circuits $\{Q(z)\}_{z\in \mathcal{C}}$. The total number of circuits in this set is 
\[
|\mathcal{C}|\leq N^{c\log(n)}=n^{O(\log\log{n})}.
\] 
For each circuit $Q(z)$ we use the algorithm for simulation of peaked constant depth circuits from 
Theorem~\ref{thm:mainstate_generalized} of
Section \ref{sec:peaked} in a manner described in detail below.  The algorithm may output an error flag if it fails on the input circuit $Q(z)$. We terminate the exhaustive search over $z\in \mathcal{C}$ as soon as the algorithm succeeds. Eq.~\eqref{eq:peakrho} shows that there is at least one circuit in this set that is $2a$-peaked, which as we will see ensures that the algorithm succeeds for at least one $z\in \mathcal{C}$. The runtime of the algorithm for a given input unitary $Q(z)$ is upper bounded as $n^{O(\log\log{n})}$. So the total runtime is also upper bounded as $|\mathcal{C}|n^{O(\log\log{n})}=n^{O(\log\log{n})}$.

Let us now describe how we use the algorithm for simulation of peaked constant depth circuits from 
Theorem~\ref{thm:mainstate_generalized} of
Section \ref{sec:peaked} with an $m$-qubit circuit $Q=Q(z)$. Given $\epsilon=\Omega(1/\mathrm{poly}(n))$ satisfying $\epsilon\leq n^{-2a}$, if the algorithm succeeds it outputs a classical description of an $m$-qubit state $|\phi\rangle$ (described by a list of its nonzero coefficients in the computational basis) which satisfies $|\langle \phi|Q|0^m\rangle|^2\geq 1-\epsilon^2/4$. By Theorem~\ref{thm:mainstate_generalized}, the runtime of this algorithm 
scales as 
\[
T(m,\epsilon)
=m^{O(\log{m})+O(\log{(1/\epsilon)})}.
\]
Since $m=O(\log^D {n})$ with a constant $D$, the term
$O(\log{m})=O(\log\log{n})$ is negligible compared with $\log{(1/\epsilon)}=\Omega(\log{n})$. Therefore the algorithm has runtime
\[
T(m,\epsilon)=m^{O(\log{(1/\epsilon)})} =
n^{O(\log{\log{n}})}.
\]
The choice $\epsilon\leq n^{-2a}$ ensures the extra condition $\epsilon\le 3m^{1/2} P_{max}^{9/50}$ 
of Theorem~\ref{thm:mainstate_generalized}
is satisfied whenever $Q$ is $2a$-peaked (and thus $P_{max}\ge n^{-2a}$). As noted above, this ensures that the algorithm succeeds for at least one $z\in \mathcal{C}$.

The state $K=\mathrm{Tr}_S(|\phi\rangle\langle\phi|)$ then satisfies the conditions of Theorem \ref{thm:main}. In particular, 
\begin{align*}
\|K-\rho\|_1&=\|\mathrm{Tr}_S(|\phi\rangle\langle\phi|)-\mathrm{Tr}_S(Q|0^m\rangle\langle0^m| Q^{\dagger})\|_1\nonumber\\
&\leq \||\phi\rangle\langle \phi|-Q|0^m\rangle\langle0^m| Q^{\dagger}\|_1\\
&=2\sqrt{1-|\langle \phi|Q|0^m\rangle|^2}\\
&\leq \epsilon,
\end{align*}
and marginals of $K$ can be computed in time polynomial in the number of nonzero entries of $|\phi\rangle$ in the standard basis, i.e. $n^{O(\log\log{n})}$.

\textbf{Induction step}
Now suppose that Theorem \ref{thm:main} holds for $(n,k-1,D)$-states.  In the following we show this implies that it holds for $(n,k,D)$-states. So let $\rho$ be an $(n,k,D)$-state and let $\epsilon=\Omega(1/\mathrm{poly}(n))$ be given. Recall that Lemma \ref{lem:lotsofslices} guarantees the existence of heavy slices arranged in a certain regular pattern.  Moreover, whether or not a given rectangular region of width $O(d)$ is a heavy slice is determined by the marginal probability of measuring all-zeros in an $(n,k-1,D)$-state. Thus, by our inductive hypothesis, there exists an algorithm with runtime $n^{O(\log\log{n})}$ to check whether or not a given rectangular region of width $O(d)$ is a heavy slice. In this way, with runtime $n^{O(\log\log{n})}$ we can compute the heavy slices from Lemma \ref{lem:lotsofslices}.

Once we have computed the heavy slices, we use them to specify our Hermitian matrix $\sigma$ from Eq.~\eqref{eq:sigmadef}. We have already shown in Eq.~\eqref{eq:eps2} that 
\begin{equation}
\| \mathcal{M}( \sigma)- \mathcal{M}(\rho)\|_1\leq\epsilon/2.
\label{eq:msigma1}
\end{equation}
Below we show, using our inductive hypothesis, that 
\begin{equation}
\|\mathcal{M}(K)-\mathcal{M}(\sigma)\|_1\leq \epsilon/2,
\label{eq:msigma2}
\end{equation}
for some Hermitian matrix $K$ such that marginals $\mathrm{Tr}(K |x\rangle\langle x|_S)$ can be computed with runtime $n^{O(\log\log{n})}$. Combining Eqs.~(\ref{eq:msigma1},\ref{eq:msigma2}) using the triangle inequality gives the desired error bound Eq.~\eqref{eq:rhoerror}.

To construct $K$ from $\sigma$, recall that each of the states $\sigma^{j}_{a_{j-1}a_j}$ in the tensor product Eq.~\eqref{eq:rprod} has the same measurement statistics as an $(n,k-1,D)$-state, in tensor product with some fresh qubits in the all-zeros state. That is,
\begin{equation}
\mathcal{M}(\sigma^{j}_{a_{j-1}a_j})=\mathcal{M}(\tilde{\sigma}^{j}_{a_{j-1}a_j})\otimes \mathcal{M}(|0\rangle\langle 0|_S).
\label{eq:msigma}
\end{equation}
for some subset of qubits $S$, and where $\tilde{\sigma}^{j}_{a_{j-1}a_j}$ is an $(n,k-1,D)$-state.

Starting from the $(n,k-1,D)$-state $\tilde{\sigma}^{j}_{a_{j-1}a_j}$ and a precision parameter $\delta=\Omega(1/\mathrm{poly}(n))$, we can use the inductive hypothesis to obtain a Hermitian matrix $\tilde{K}^j_{a_{j-1}a_j}$, that approximates $\tilde{\sigma}^{j}_{a_{j-1}a_j}$ in the sense of Theorem \ref{thm:main}. Letting $K^{j}_{a_{j-1}a_j}=\tilde{K}^j_{a_{j-1}a_j}\otimes |0\rangle\langle 0|_S$ and using Eq.~\eqref{eq:msigma},  we see that 
\begin{equation}
\|\mathcal{M}(\sigma^{j}_{a_{j-1}a_j})-\mathcal{M}(K^{j}_{a_{j-1}a_j})\|_1\leq \delta 
\label{eq:msigma3}
\end{equation}
and that there exists a classical algorithm with runtime  $n^{O(\log \log{n})}$ to exactly compute``marginals" $\mathrm{Tr}(K^{j}_{a_{j-1}a_j} |x\rangle \langle x|_S\otimes I)$ for any 
 $S\subseteq [n]$ and $x\in \{0,1\}^{|S|}$. Our matrix $K$ is obtained by replacing $\sigma^{j}_{a_{j-1}a_j}$ in Eqs.~(\ref{eq:pseudomixture}, \ref{eq:rprod}) with $K^{j}_{a_{j-1}a_j}$, i.e., 
\begin{equation}
K=\sum_{\vec{a}}\bigotimes_{j=1}^T f_{a_j}K_{a_{j-1}a_j}^j.
\label{eq:rhodef}
\end{equation}

We have already shown above that computing marginals of such a state $K$ can be reduced in $\mathrm{poly}(n)$-time (via matrix multiplication) to computing marginals of the individual states $K_{a_{j-1}a_j}^j$, which takes runtime $n^{O(\log \log{n})}$. To complete the proof it remains to establish Eq.~\eqref{eq:msigma2}, which is obtained from the following Lemma by setting $\delta=(8TL^2)^{-1}\epsilon$.
\begin{lemma}
If Eq.~\eqref{eq:msigma3} holds for every $j\in[T]$ and $a_j\in[L]$ for some $\delta\leq\frac{1}{2TL^2}$, then $\lVert\mathcal{M}(K)-\mathcal{M}(\sigma)\|_1\leq 4T L^2\delta$.
\label{lem:robust}
\end{lemma}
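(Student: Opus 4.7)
The plan is to prove Lemma~\ref{lem:robust} by a two-level hybrid argument that replaces the factors $\sigma^{j}_{a_{j-1}a_j}$ by $K^{j}_{a_{j-1}a_j}$ one position $j$ at a time, and then analyzes each outer step by a further decomposition over the bond pairs $(a_{j-1},a_j)\in[L]^2$. The high-level accounting is that there are $T$ outer steps and $L^2$ bond pairs per outer step, each contributing an error of order $\delta$.

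First I would introduce the outer hybrids
\[
H_k = \sum_{\vec{a}} \Bigl(\prod_{j=1}^T f_{a_j}\Bigr)\bigotimes_{j\leq k} K^{j}_{a_{j-1}a_j} \otimes \bigotimes_{j>k} \sigma^{j}_{a_{j-1}a_j},
\]
so that $H_0=\sigma$ and $H_T=K$. By the triangle inequality, $\|\mathcal{M}(K)-\mathcal{M}(\sigma)\|_1\leq \sum_{k=1}^T\|\mathcal{M}(H_k-H_{k-1})\|_1$, so it suffices to prove $\|\mathcal{M}(H_k-H_{k-1})\|_1\leq 4L^2\delta$ for each $k\in[T]$. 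For a fixed $k$, I would separate out the bond pair $(a_{k-1},a_k)$ that appears in the $k$-th factor:
\[
H_k-H_{k-1} = \sum_{a_{k-1},a_k} f_{a_k}\, \Lambda^K_{a_{k-1}}\otimes \bigl(K^k_{a_{k-1}a_k}-\sigma^k_{a_{k-1}a_k}\bigr)\otimes \Sigma^{\sigma}_{a_k},
\]
where $\Lambda^K_{a_{k-1}}$ denotes the signed sum of $\bigotimes_{j<k}K^{j}_{a_{j-1}a_j}$ over the internal bond indices $a_1,\dots,a_{k-2}$ with the right bond $a_{k-1}$ held fixed, and $\Sigma^{\sigma}_{a_k}$ is the analogous right environment with the left bond $a_k$ fixed. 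Since $\mathcal{M}$ factors as a tensor product of channels on the three disjoint subsystems and $\|X\otimes Y\|_1=\|X\|_1\|Y\|_1$, applying the triangle inequality together with Eq.~\eqref{eq:msigma3} on the central factor yields
\[
\|\mathcal{M}(H_k-H_{k-1})\|_1 \leq \delta\cdot\Bigl(\sum_{a_{k-1}}\|\mathcal{M}(\Lambda^K_{a_{k-1}})\|_1\Bigr)\Bigl(\sum_{a_k}\|\mathcal{M}(\Sigma^{\sigma}_{a_k})\|_1\Bigr).
\]

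The main technical hurdle is to bound each of the two environment sums by $2L$. A naive triangle inequality over the internal bond indices would blow up by $L^{k-1}$ on the left and $L^{T-k}$ on the right, which is unacceptable, so one must exploit the inclusion-exclusion cancellations built into the signs $f_{a_j}=(-1)^{|\omega_j|+1}$. For the $\sigma$-side environment, $\Sigma^{\sigma}_{a_k}$ is obtained by composing the superoperators $A^j=M^j-B^j$ for $j>k$ with the left boundary bond frozen to $a_k$; since each $A^j$ has induced $1$-norm at most $1+0.02^{\ell}$, the argument of Lemma~\ref{lem:1norm} gives $\sum_{a_k}\|\mathcal{M}(\Sigma^{\sigma}_{a_k})\|_1\leq L(1+0.02^{\ell})^{T-k}\leq 2L$. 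The $K$-side environment is then handled by a bootstrap: $\Lambda^K_{a_{k-1}}$ differs from its $\sigma$-side analog $\Lambda^{\sigma}_{a_{k-1}}$ by at most $(k-1)L^2\delta$ (by reapplying the same outer telescoping only to the first $k-1$ factors), which, under the assumption $\delta\leq 1/(2TL^2)$, is at most $\tfrac12$ and keeps $\sum_{a_{k-1}}\|\mathcal{M}(\Lambda^K_{a_{k-1}})\|_1\leq 2L$ as well. Substituting these bounds gives $\|\mathcal{M}(H_k-H_{k-1})\|_1\leq 4L^2\delta$, and summing over $k\in[T]$ completes the proof.
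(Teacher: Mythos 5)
Your telescoping/hybrid decomposition is a valid way to organize the sum, and you correctly identify the heart of the matter: the two ``environment'' sums must be bounded by $O(L)$ rather than the naive $L^{k-1}$ and $L^{T-k}$, and the only way to do that is to invoke the inclusion--exclusion cancellations. However, the route you take is genuinely different from the paper's, and it introduces a complication that the paper deliberately avoids. The paper expands $K^j=\sigma^j+\Delta^j$ at \emph{every} position and sums over the set $S$ of positions carrying a $\Delta$. As a result, every environment factor is a signed sum of $\sigma$'s only---i.e.\ $\prod_{j\notin\overline S} A^j$ applied to a bona fide density matrix $\tilde\rho$---so Lemma~\ref{lem:1norm} applies directly and gives $\|\mathcal{M}(\prod_{j\notin\overline S}A^j(\tilde\rho))\|_1\le 2$ uniformly. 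Your hybrid instead places $K^j$'s to the left of the active position, so the left environment $\Lambda^K_{a_{k-1}}$ is \emph{not} of that form and cannot be bounded by Lemma~\ref{lem:1norm}; you need a separate ``bootstrap,'' which in effect is a strengthened version of the lemma with a free bond index and a sum over it. That is doable, but it is an extra layer of induction that the direct expansion makes unnecessary.

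There are also two technical points in your sketch that don't hold as written. First, the assertion that ``each $A^j$ has induced $1$-norm at most $1+0.02^\ell$'' is false: $A^j$ is a signed sum of $L=2^\ell-1$ CPTP maps, so its induced $1$-norm can be as large as $L$, which is polynomially big. The bound $\|A^j(\kappa)-M^j(\kappa)\|_1\le 0.02^\ell$ holds only when $\kappa$ has the heavy-slice property (Lemma~\ref{lem:heavy2}); this is precisely why the environment must be expressed as $\prod A^j$ acting on a reduced density matrix of $\rho$, not bounded by a sub\-mul\-ti\-plica\-tive norm estimate. Second, the bootstrap accounting is optimistic. Writing $g_k:=\sum_{a_{k-1}}\|\mathcal{M}(\Lambda^K_{a_{k-1}})\|_1$, the recursion one actually obtains from re-telescoping the first $k-1$ factors has the shape $g_k\le 2L + 2L^2\delta\sum_{m<k}g_m$; under $\delta\le 1/(2TL^2)$ this gives $g_k=O(L)$, but not $g_k\le 2L$ (the correction is of size $\Theta(L)$, not $\tfrac12$), so the final constant exceeds $4$. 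The statement of the lemma is still reached up to a constant, but if you want the stated $4TL^2\delta$, the paper's direct expansion over $S\subseteq[T]$ is the cleaner route: it gives $\sum_{S\ne\emptyset} 2\,L^{|\overline S|}\delta^{|S|}\le 2\big((1+L^2\delta)^T-1\big)\le 4TL^2\delta$ in one shot, with no recursion.
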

\begin{proof}
For reference, we restate the definition of $\sigma$ from Eqs. \eqref{eq:sigmadef}, \eqref{eq:pseudomixture}, \eqref{eq:rprod}, 
\[
\sigma=\prod_{j\in[T]}A^j(\rho)=\sum_{a_i=1:i\in[T]}^L\bigotimes_{j\in[T]} f_{a_j}\sigma_{a_{j-1}a_j}^j
\]
where $A^1,\ldots,A^T$ are defined in Eq. \eqref{eq:Adef}. Let $\Delta_{a_{j-1}a_j}^j=K_{a_{j-1}a_j}^{j}-\sigma_{a_{j-1}a_j}^{j}$, so that Eq.~\eqref{eq:msigma3} becomes 
\begin{equation}
\|\mathcal{M}(\Delta_{a_{j-1}a_j}^j)\|_1\leq \delta.
\label{eq:mofr}
\end{equation}
For every $S\subseteq [T]\setminus\{\emptyset\}$, define
\begin{equation}
\overline{S}=\bigg(\bigcup_{j\in S}\{j-1,j\}\bigg)\setminus\{0\}.
\end{equation}
We can write
\begin{align}
&K-\sigma=\sum_{\vec{a}}\bigotimes_{j=1}^T f_{a_j}(\sigma_{a_{j-1}a_j}^j+\Delta_{a_{j-1}a_j}^j)-\sum_{\vec{a}}\bigotimes_{j=1}^T f_{a_j}\sigma_{a_{j-1}a_j}^j\\
&=\sum_{\vec{a}}\sum_{S\subseteq [T]\setminus\{\emptyset\}}\Big(\bigotimes_{j\in S}f_{a_j}\Delta_{a_{j-1}a_j}^j\Big)\otimes\Big(\bigotimes_{j\notin S}f_{a_j}\sigma_{a_{j-1}a_j}^j\Big)\\
&=\sum_{S\subseteq[T]\setminus\{\emptyset\}}\sum_{a_i=1:i\in \overline{S}}^L\bigg(\prod_{j\in\overline{S}\setminus S}f_{a_j}\bigotimes_{j\in S}f_{a_j}\Delta_{a_{j-1}a_j}^j\bigg)\otimes\bigg(\sum_{a_i=1:i\notin \overline{S}}^L\Big(\bigotimes_{j\in\overline{S}\setminus S}\sigma_{a_{j-1}a_j}^j\Big)\otimes \Big(\bigotimes_{j\notin\overline{S}}f_{a_j}\sigma_{a_{j-1}a_j}^j\Big)\bigg).\nonumber
\end{align}
Let $S\subseteq [T]\setminus\{\emptyset\}$. Suppose we fix the value of $a_j\in[L]$ for every $j\in\overline{S}$ and consider
\begin{equation}
\sum_{a_i=1:i\notin \overline{S}}^L\Big(\bigotimes_{j\in\overline{S}\setminus S}\sigma_{a_{j-1}a_j}^j\Big)\otimes \Big(\bigotimes_{j\notin\overline{S}}f_{a_j}\sigma_{a_{j-1}a_j}^j\Big).
\label{eq:tilde_rho}
\end{equation}
Recall that for every $j\in[T]$, the index $a_j\in[L]$ specifies a subset $\omega\subseteq[\ell]$ corresponding to one term $\prod_{i\in [\ell]\setminus\omega} M_{s^{j}_{i}} \prod_{i\in \omega} E_{s^{j}_{i}}$ in the definition of $A^j$ (see Eq. \eqref{eq:Adef}). For every $j\in\overline{S}$, let $\mathcal{R}^j_{a_j}$ denote the term in the definition of $A^j$ specified by $a_j$, so that 
\[
\mathcal{R}_{\vec{a}}=\mathcal{R}^1_{a_1}\otimes \mathcal{R}^2_{a_2}\ldots \otimes \mathcal{R}^T_{a_T},
\]
where $\mathcal{R}_{\vec{a}}$ is the operator from Eqs.~(\ref{eq:pseudomixture},\ref{eq:rprod}).

 By definition we have
\begin{equation}
\bigotimes_{j\in\overline{S}}\mathcal{R}_{a_j}^j(\rho)=\Big(\bigotimes_{j\in S}\sigma_{a_{j-1}a_j}^j\Big)\otimes\tilde{\rho}
\end{equation}
for some density matrix $\tilde{\rho}$. For every $j\in S$, let $\tr_{a_{j-1}a_j}^j$ denote the operation of tracing out all the qubits in $\sigma_{a_{j-1}a_j}^j$, and note that

\begin{equation}
\tilde{\rho}=\mathcal{E}(\rho) \qquad \mathcal{E}\equiv \prod_{j\in S}\tr_{a_{j-1}a_j}^j\bigg(\bigotimes_{j\in\overline{S}}\mathcal{R}^j_{a_j}\bigg).
\end{equation}
and
\begin{equation}
\prod_{j\in[T]\setminus\overline{S}}A^j(\tilde{\rho})=\sum_{a_i=1:i\in[T]\setminus \overline{S}}^L\bigg(\bigotimes_{j\in\overline{S}\setminus S}\sigma_{a_{j-1}a_j}^j\bigg)\otimes \bigg(\bigotimes_{j\in[T]\setminus\overline{S}}f_{a_j}\sigma_{a_{j-1}a_j}^j\bigg)
\end{equation}
which is precisely Eq. \eqref{eq:tilde_rho}. Now
\begin{align}
\bigg\|\prod_{j\in[T]\setminus\overline{S}}A^j(\tilde{\rho})-\prod_{j\in[T]\setminus\overline{S}}M^j(\tilde{\rho})\bigg\|_1&= \bigg\|\mathcal{E}\bigg(\prod_{j\in[T]\setminus\overline{S}}A^j(\rho)-\prod_{j\in[T]\setminus\overline{S}}M^j(\rho)\bigg)\bigg\|_1\\
&\leq \bigg\|\prod_{j\in[T]\setminus\overline{S}}A^j(\rho)-\prod_{j\in[T]\setminus\overline{S}}M^j(\rho)\bigg\|_1
\label{eq:ajmj}
\end{align}
where in the second line we used the fact that $\|\mathcal{E}(Q)\|_1\leq \|Q\|_1$ for all Hermitian matrices $Q$, since $\mathcal{E}$ is a CPTP map.
Now we can use Lemma \ref{lem:1norm} to upper bound the RHS of Eq.~\eqref{eq:ajmj} by $\epsilon/2$. Then, by the reverse triangle inequality,
\begin{equation}
\bigg\| \prod_{j\notin\overline{S}}A^j(\tilde{\rho})\bigg\|_1\leq\bigg\|\prod_{j\notin\overline{S}}M^j(\tilde{\rho})\bigg\|_1+\epsilon/2=1+\epsilon/2\leq 2.
\label{eq:reverset}
\end{equation}
Therefore,
\begin{align}
\lVert \mathcal{M}(K)-\mathcal{M}(\sigma)\rVert_1&\leq \sum_{S\subseteq[T]\setminus\{\emptyset\}}\sum_{a_i=1:i\in \overline{S}}^L\bigg\lVert \mathcal{M}\bigg(\prod_{j\in\overline{S}\setminus S}f_{a_j}\bigotimes_{j\in S}f_{a_j}\Delta_{a_{j-1}a_j}^j\bigg)\otimes\mathcal{M}\bigg(\prod_{j\notin\overline{S}}A^j(\tilde{\rho})\bigg)\bigg\rVert_1\nonumber\\
&\leq \sum_{S\subseteq[T]\setminus\{\emptyset\}}L^{|\overline{S}|} \delta^{|S|}2,\label{eq:almostdone}
\end{align}
where we used Eqs.~(\ref{eq:mofr}, \ref{eq:reverset}) and the fact that $\|\mathcal{M}(Q)\|_1\leq \|Q\|_1$ for any Hermitian matrix $Q$.  Continuing from Eq.~\eqref{eq:almostdone} we arrive at
\begin{align*}
\lVert \mathcal{M}(K)-\mathcal{M}(\sigma)\rVert_1 &\leq 2\sum_{k=1}^T \binom{T}{k}L^{2k}\delta^k\\
&= 2\left((1+L^2\delta)^T-1\right)\\
&\leq 2T L^2\delta e^{T L^2\delta}\\
&\leq 4TL^2\delta.
\end{align*}
where in the last line we used $\delta\leq (2TL^2)^{-1}$ to infer $e^{T L^2\delta}\leq e^{1/2}\leq 2$.
\end{proof}
\end{proof}

\section{Conclusions}
\label{sec:conclusions}
We have identified a large family of $n$-qubit quantum circuits which can be simulated on a classical computer in quasipolynomial time
scaling as $n^{O(\log{n})}$.
These are peaked shallow circuits which have depth $O(1)$ and whose output distribution assigns the probability $n^{-O(1)}$
to at least one bit string. 
To this end we showed that the output distribution of a peaked shallow circuit has almost all its mass in a Hamming ball of diameter $O(\log n)$.

We also drastically simplified and extended the scope of simulation algorithms based on heavy slices and 
inclusion-exclusion principle
pioneered in Refs.~\cite{coudron, coudron2}. This results in a more favourable simulation runtime which scales as
$n^{O(1)}$ or  $n^{O(\log{\log{n}})}$
if the circuit is geometrically local on a $D$-dimensional grid of qubits with $D=2$ or $D\ge 3$ respectively.

Our work leaves several open questions. (1) Is there a polynomial time classical algorithm for simulating peaked shallow circuits?
(2) Can we approximate the output state $|\psi\rangle$ of a peaked shallow circuit including the overall phase ?
Note that our algorithms only aim to approximate the projector $|\psi\rangle\langle \psi|$ onto the output state. As a consequence, our algorithms are not capable of recovering the sign $\pm$ of expected values of 
Pauli observables which are of great interest for variational quantum algorithms.
(3) What is complexity of simulating deeper peaked circuits, say with the depth $O(\log{n})$ ?
Our analysis breaks down for such circuits since the lightcone of a single output qubit may span all $n$ input qubits.
Finally, we conjecture that our classical simulation algorithm 
for general $n$-qubit peaked shallow circuits can be 
implemented by a quantum circuit of size $\mathrm{poly}(n)$ acting on $\mathrm{poly}(\log{n})$ qubits.
For example, one may be able to prepare the largest eigenvector $|\phi\ra$ of the projected Hamiltonian $G$ defined in Section~\ref{sec:peaked}
using Quantum Phase Estimation.
The required number of qubits is only poly-logarithmic in $n$ since the projected Hamiltonian $G$ acts on the Hilbert space of dimension
$n^{O(\log{n})}$, see Section~\ref{sec:peaked}.
If true, this conjecture would provide an interesting example of quantum circuit knitting~\cite{bravyi2016trading,peng2020simulating,piveteau2023circuit}
where a large quantum circuit is transformed into a form that can be simulated on a 
small quantum device.

\section*{Acknowledgments}
SB thanks Natalie Parham for helpful discussions. 
DG and YL acknowledge the support of IBM Research, as well as the
Natural Sciences and Engineering Research Council of
Canada through grant number RGPIN-2019-04198. DG is a fellow of the Canadian Institute
for Advanced Research, in the quantum information science program.
Research at
Perimeter Institute is supported in part by the Government of Canada through the Department of Innovation,
Science and Economic Development Canada and by the
Province of Ontario through the Ministry of Colleges and
Universities.
\bibliographystyle{unsrt}
\bibliography{bibliog}
\end{document}